\documentclass{article}

\usepackage{microtype}
\usepackage{graphicx}
\usepackage{subcaption}
\usepackage{booktabs} 

\usepackage{hyperref}

\usepackage[accepted]{icml2025}

\usepackage{amsmath}
\usepackage{amssymb}
\usepackage{mathtools}
\usepackage{amsthm}

\usepackage[capitalize,nameinlink,noabbrev]{cleveref}

\usepackage{thm-restate}
\theoremstyle{plain}
\newtheorem{theorem}{Theorem}[section]

\newtheorem{lemma}[theorem]{Lemma}
\newtheorem{corollary}[theorem]{Corollary}
\theoremstyle{definition}

\theoremstyle{remark}
\newtheorem{remark}[theorem]{Remark}

\icmltitlerunning{Fast Tensor Completion via Approximate Richardson Iteration}

\usepackage[mathscr]{eucal} 
\usepackage[noend,ruled,linesnumbered,algo2e]{algorithm2e}
\usepackage{inconsolata}

\usepackage{amsmath}
\usepackage{amssymb}
\usepackage{mathtools}

\usepackage{bm}
\usepackage{thmtools}
\usepackage{nameref}
\usepackage{amsthm}
\usepackage{mathtools}
\usepackage{nicefrac}
\usepackage{nag}
\usepackage{authblk}
\usepackage{stmaryrd}
\usepackage{bbm}
\usepackage[skins]{tcolorbox}
\usepackage{xfrac}

\usepackage{tikz}
\usetikzlibrary{positioning,shapes,snakes}

\newcommand{\declarecolor}[2]{\definecolor{#1}{RGB}{#2}\expandafter\newcommand\csname #1\endcsname[1]{\textcolor{#1}{##1}}}
\declarecolor{White}{255, 255, 255}
\declarecolor{Black}{0, 0, 0}
\declarecolor{LightGray}{216, 216, 216}
\declarecolor{Gray}{127, 127, 127}
\declarecolor{Orange}{237, 125, 49}
\declarecolor{LightOrange}{251,229, 214}
\declarecolor{Yellow}{255, 192, 0}
\declarecolor{LightYellow}{255, 242, 200}
\declarecolor{Blue}{91, 155, 213}
\declarecolor{LightBlue}{222, 235, 247}
\declarecolor{Green}{112, 173, 71}
\declarecolor{LightGreen}{226, 240, 217}
\declarecolor{Navy}{68, 114, 196}
\declarecolor{LightNavy}{218, 227, 243}

\hypersetup{
    colorlinks=true,
    pdfpagemode=UseNone,
    citecolor=mydarkblue,  
    linkcolor=mydarkblue,
    urlcolor=mydarkblue,
}

\crefformat{equation}{Eqn.~(#2#1#3)}

\newcommand{\LiftedApproximateSolver}{\textnormal{\texttt{approx-mini-als}}\xspace}
\newcommand{\ApproximateSolve}{\textnormal{\texttt{approx-least-squares}}\xspace}

\newcommand{\bth}{\boldsymbol{\theta}}

\newcommand{\CLoss}{\mathcal{L}}
\newcommand{\CReg}{\mathcal{R}}

\newcommand{\mat}[1]{\mathbf{#1}}

\newcommand{\tensor}[1]{\bm{\mathscr{#1}}}

\newcommand{\frobenius}{\textnormal{F}}

\newcommand{\kron}{\otimes}

\newcommand{\R}{\mathbb{R}}

\DeclareMathOperator*{\defeq}{\overset{def}{=}}

\renewcommand{\epsilon}{\varepsilon}
\renewcommand{\intercal}{\top}

\DeclareMathOperator*{\argmin}{arg\,min}
\DeclareMathOperator*{\vvec}{vec}




\DeclarePairedDelimiter{\inner}{\langle}{\rangle}

\DeclarePairedDelimiter{\parens}{(}{)}

\DeclarePairedDelimiter{\norm}{\lVert}{\rVert}
\DeclarePairedDelimiter{\ceil}{\lceil}{\rceil}

\begin{document}

\twocolumn[
\icmltitle{Fast Tensor Completion via Approximate Richardson Iteration}

\icmlsetsymbol{equal}{*}

\begin{icmlauthorlist}
\icmlauthor{Mehrdad Ghadiri}{mit}
\icmlauthor{Matthew Fahrbach}{goog}
\icmlauthor{Yunbum Kook}{gatech}
\icmlauthor{Ali Jadbabaie}{mit}
\end{icmlauthorlist}

\icmlaffiliation{mit}{MIT}
\icmlaffiliation{goog}{Google Research}
\icmlaffiliation{gatech}{Georgia Tech}

\icmlcorrespondingauthor{Mehrdad Ghadiri}{mehrdadg@mit.edu}
\icmlkeywords{Tensor completion}

\vskip 0.3in
]

\printAffiliationsAndNotice{\icmlEqualContribution} 

\begin{abstract}
We study tensor completion (TC) through the lens of low-rank tensor decomposition (TD).
Many TD algorithms use fast alternating minimization methods to solve \emph{highly structured} linear regression problems at each step (e.g., for CP, Tucker, and tensor-train decompositions).
However, such algebraic structure is often lost in TC regression problems, making direct extensions unclear. 
This work proposes a novel \emph{lifting} method for approximately solving TC regression problems using structured TD regression algorithms as blackbox subroutines,
enabling sublinear-time methods.
We analyze the convergence rate of our approximate Richardson iteration-based algorithm,
and our empirical study shows that it can be 100x faster than direct methods for CP completion on real-world tensors.
\end{abstract}

\section{Introduction}
\label{sec:introduction}

\emph{Tensor completion} (TC) is the higher-order generalization of matrix completion.
It has many applications across data mining, machine learning, signal processing, and statistics
(see \citet{song2019tensor} for a detailed survey of applications).
In the TC problem, we are given a partially observed tensor as input
(i.e., only a subset of its entries is known),
and the goal is to impute the missing values.
Under low-rank and other statistical assumptions,
we can recover the missing values by minimizing a loss function over only the observed entries.

Consider a tensor $\tensor{X} \in \mathbb{R}^{I_1 \times \cdots \times I_N}$
with a subset of observed indices $\Omega \subseteq [I_1] \times \cdots \times [I_N]$.
The general TC problem is
\begin{equation}
\label{eq:GTC}
    \min_{\bth}\, \CLoss_{\Omega} \parens{\widehat{\tensor{X}}(\bth), \tensor{X}} + \CReg(\bth)\,,
\end{equation}
where $\bth$ are the learnable parameters, $\widehat{\tensor{X}}(\bth) \in \mathbb{R}^{I_1 \times \cdots \times I_N}$ is the reconstructed tensor, $\CLoss_{\Omega}$ is the loss function defining the error between $\widehat{\tensor{X}}(\bth)$ and $\tensor{X}$ on entries in $\Omega$,
and $\CReg(\bth)$ is the regularization term. For brevity, we write $\widehat{\tensor{X}}$ without explicit dependence on $\bth$.
Rank constraints can be incorporated into \eqref{eq:GTC} by including appropriate penalty terms in $\CReg(\bth)$,
e.g., by using $\ell_1$ or $\ell_2$ regularization terms to reduce the effective dimension \citep{fahrbach2022subquadratic}.

This work focuses on the sum of squared errors
\begin{equation}
\label{eq:sum-of-square}
    \CLoss_{\Omega}(\widehat{\tensor{X}}, \tensor{X})
    =
    \sum_{(i_1,\ldots,i_N) \in \Omega} (\widehat{x}_{i_1\ldots i_N} - x_{i_1\ldots i_N})^2\,.
\end{equation}
To introduce our approach, we use a running example where
$
    \widehat{\tensor{X}} = \tensor{G} \times_1 \mat{A}^{(1)} \times_2 \mat{A}^{(2)} \cdots \times_N \mat{A}^{(N)}
$
is a \emph{low-rank Tucker decomposition}
with core tensor $\tensor{G} \in \R^{R_1 \times \dots \times R_N}$ and
factor matrices $\mat{A}^{(n)} \in \R^{I_n \times R_n}$.
We do not focus on regularization in this paper, but our approach extends to regularized problems, provided there exists an algorithm that accelerates the corresponding structured and regularized objectives.
For example, \citet{fahrbach2022subquadratic} present subquadratic‑time algorithms for Kronecker ridge regression (i.e., $\ell_2$ regularization).

In general, TC is a nonconvex and NP-hard problem~\citep{hillar2013most}.
The special case where $\Omega = [I_1] \times \cdots \times [I_N]$
is the widely studied \emph{tensor decomposition} (TD) problem,
for which \emph{alternating least squares} (ALS) methods are often used to compute $\tensor{G}, \mat{A}^{(1)}, \ldots, \mat{A}^{(N)}$.

Each ALS step fixes all but one of $\tensor{G}, \mat{A}^{(1)}, \dots, \mat{A}^{(N)}$, and optimizes the unfixed component.
In the case of TD, each ALS step is a \emph{highly structured} linear regression problem.
For example, in the core tensor update for $\tensor{G}$, ALS solves
\begin{equation}
\label{eq:kron-reg}
    \min_{\tensor{G}' \in\mathbb{R}^{R_1 \times \cdots \times R_N}
    }
    \norm*{\parens*{\bigotimes_{n=1}^N\mat{A}^{(n)}} \vvec(\tensor{G}') - \vvec(\tensor{X})}_2\,,
\end{equation}
where $\kron$ is the \emph{Kronecker product} and
$\vvec(\tensor{G}') \in \R^{R}$ is the flattened version of tensor $\tensor{G}'$,
with $R := \prod_{n=1}^N R_n$.

\citet{diao2019optimal} and \citet{fahrbach2022subquadratic}
recently exploited the Kronecker product structure in~\eqref{eq:kron-reg} to give algorithms with running times that are sublinear in the size of the full tensor $I := \prod_{n=1}^N I_n$.
These approaches use \emph{leverage score sampling} to approximately solve~\eqref{eq:kron-reg}.

In TC, however, only a subset of observations appear in the loss function.
Letting $\mat{A} = \bigotimes_{n=1}^N\mat{A}^{(n)}$ and $\mat{b}=\vvec(\tensor{X})$,
the core tensor update becomes
\begin{equation}
\label{eq:vec-completion-prob}
    \vvec(\tensor{G})
    \gets
    \argmin_{\mat{x} \in \R^{R}}\, \norm{\mat{A}_{\Omega} \mat{x} - \mat{b}_{\Omega}}_2\,,
\end{equation}
where $\mat{A}_{\Omega}$ is a submatrix of $\mat{A}$
whose rows correspond to the indices in $\Omega$.
Observe that the Kronecker product structure in \eqref{eq:kron-reg} \emph{no longer exists} for $\mat{A}_{\Omega}$ in~\eqref{eq:vec-completion-prob},
hence fast TD methods do not immediately extend to $\Omega$-masked TC versions.

A natural idea to overcome the lack of structure in the $\Omega$-masked updates is to lift \eqref{eq:vec-completion-prob} to a higher-dimensional problem by introducing variables $\mat{b}_{\overline{\Omega}}$, where $\overline{\Omega}$ is the complement of $\Omega$,
i.e., letting the unobserved entries in $\tensor{X}$ be \emph{free variables}.
This is a convex problem with much of the same structure as the design matrix in the full TD problem.
Further, it gives the same solution as the TC update in \eqref{eq:vec-completion-prob}:
\begin{equation}
\label{eq:lifted-prob}
    (\mat{x}^*, \mat{b}_{\overline{\Omega}}^*) = \argmin_{\mat{x},\mat{b}_{\overline{\Omega}}} \norm{\mat{A} \mat{x} - \mat{b}}_2\,.
\end{equation}
To our knowledge, such ideas date back to \citet{healy1956missing} in the experimental design and causal inference literature.
However, it is not clear a priori that lifting is helpful for \emph{computational efficiency}---it restores the structure of the design matrix,
but the new problem~\eqref{eq:lifted-prob} is larger and higher dimensional.

This work proposes solving \eqref{eq:lifted-prob} with a two-step procedure called \emph{mini-ALS}.
Given a vector $\mat x^{(k)}$ corresponding to an iterate of a block of variables in the low-rank decomposition,
it repeats the following:
\begin{enumerate}
    \item Set $\mat{b}^{(k)}_{\overline{\Omega}} \gets \mat{A}_{\overline{\Omega}} \mat{x}^{(k)}$ and $\mat{b}^{(k)}_{\Omega} \gets \mat{b}_{\Omega}$ \hfill \texttt{\color{Navy} // lift}
    \item Set $\mat{x}^{(k+1)} \gets \argmin_{\mat{x}} \norm{\mat{A} \mat{x} - \mat{b}^{(k)}}_{2}$
    \hfill \texttt{\color{Navy} // solve}
\end{enumerate}

Mini-ALS iterations can still be expensive since ${\mat{A} \in \R^{I \times R}}$ is a tall-and-skinny matrix (i.e., $I\gg R$).
However, step two is a structured ALS for TD,
so we propose solving it \emph{approximately} with row sampling,
allowing us to tap into a rich line of work on leverage score sampling
for CP decomposition~\citep{cheng2016spals,larsen2022practical,bharadwaj2023fast},
Tucker decomposition~\citep{diao2019optimal,fahrbach2022subquadratic},
and tensor-train decomposition~\citep{bharadwaj2024efficient}.

Returning to the running Tucker completion example,
updating the core tensor in step two of mini-ALS via~\eqref{eq:kron-reg}
only requires sampling $\tilde{O}(R)$ rows of $\mat{A}$,
which is a substantially \emph{smaller} problem.
Further, we can compute $\mat{b}^{(k)}$ \emph{lazily}, i.e.,
only the entries corresponding to sampled rows.
We call our lifted iterative method \emph{approximate-mini-ALS}.

\subsection{Our Contributions}
We summarize our main contributions as follows:

\begin{itemize}
\item In \Cref{sec:lifted_regression}, we propose using the mini-ALS algorithm for each step of ALS for TC.
We show that it simulates the \emph{preconditioned Richardson iteration}~(\Cref{lemma:richardson-simulation}).
Our main theoretical contribution is proving that the second step of a mini-ALS iteration can be performed \emph{approximately}.
We quantify how small the approximation error must be for approximate-mini-ALS to converge
at the same rate as the Richardson iteration~(\Cref{thm:approximate-richardson}).
This lets us to extend a recent line of work
on leverage score sampling-based TD ALS algorithms to the TC setting.

\item In \Cref{sec:sampling-for-completion},
we use state-of-the-art TD ALS algorithms
for CP,
Tucker,
and tensor-train decompositions
as \emph{blackbox subroutines} in Algorithm~\ref{alg:approximate-lifting}
to obtain novel sampling-based TC algorithms.
Hence, our lifting approach for TC also benefits from future TD algorithmic improvements.

\item
In \Cref{sec:experiments},
we show that leverage score sampling is an effective method for solving large structured regression problems via the \emph{coupled matrix problem}.
Then we compare the empirical performance of
our lifted algorithm to direct methods for \emph{low-rank CP completion} on synthetic and real-world tensors.
We observe that mini-ALS can be orders of magnitude faster than direct ALS methods,
while achieving comparable reconstruction errors.
Finally, we propose an \emph{accelerated} version with adaptive step sizes that extrapolates the trajectory of the iterates $\mat{x}^{(k)}$ and converges in fewer iterations.

\end{itemize}

\subsection{Related Work}
\label{sec:related-work}

\paragraph{CP completion.}
\citet{tomasi2005parafac} proposed an ALS algorithm for CP completion that repeats the following two-step process:
(1) fill in the missing values using the current CP decomposition, and 
(2) update one factor matrix.
Their algorithm is equivalent to running \emph{one iteration} of mini-ALS in each step of ALS.
As \citet{tomasi2005parafac} discuss, this can lead to slower convergence and an increased likelihood of converging to suboptimal local minima because of errors introduced by the imputed missing values.

In contrast, approximate-mini-ALS runs \emph{until convergence} in each step of ALS.
By doing this, we establish a connection to the Richardson iteration
and build on its convergence guarantees.
Further, \citet{tomasi2005parafac} explicitly fill in \emph{all missing values} using the current decomposition, whereas we only impute missing values \emph{required by row sampling},
allowing us to achieve sublinear-time updates in the size of the tensor.
In general, iteratively fitting to imputed missing values falls under the umbrella of \emph{expectation-maximization} (EM) algorithms~\citep[Chapter 8]{little2019statistical}.

\paragraph{Statistical assumptions.}
Similar to minimizing the nuclear norm for matrix completion \citep{fazel2002matrix, candes2012exact}, a line of research in noisy TC proposes minimizing a convex relaxation of rank and identifies statistical assumptions under which the problem is recoverable \citep{barak2016noisy}.
Two standard assumptions are \emph{incoherence} and the \emph{missing-at-random} assumption. 
In \Cref{sec:lifted_regression}, we discuss how these assumptions provide a bound on the number of steps required for mini-ALS.
Alternating minimization approaches have also been applied
in this line of noisy TC research~\citep{jain2014provable,liu2020tensor}.
It is also consistently observed that methods based on TD and tensor unfoldings are more practical and computationally efficient \citep{acar2011scalable,montanari2018spectral,filipovic2015tucker,shah2019iterative,shah2023robust}.

\section{Preliminaries}
\label{sec:preliminaries}

\paragraph{Notation.}
The \emph{order} of a tensor is the number of its dimensions $N$.
We denote scalars by normal lowercase letters $x \in \R$,
vectors by boldface lowercase letters $\mat{x} \in \R^n$,
matrices by boldface uppercase letters $\mat{X} \in \R^{m \times n}$,
and higher-order tensors by boldface script letters $\tensor{X} \in \R^{I_1 \times \dots \times I_N}$.
We use normal uppercase letters for the size of an index set,
e.g., $[N] = \{1, 2, \dots, N\}$.
We define $I_{\neq k}:= I / I_k$ for $k\in [N]$ and similarly $R_{\neq k := R / R_k}$.
We denote the $i$-th entry of $\mat{x}$ by $x_i$,
the $(i,j)$-th entry of $\mat{X}$ by $x_{ij}$, and the
$(i,j,k)$-th entry of a third-order tensor $\tensor{X}$ by $x_{ijk}$.

\paragraph{Linear algebra.}
A symmetric matrix $\mat{A}\in \mathbb{R}^{n\times n}$ is \emph{positive semi-definite} (PSD) if $\mat{v}^\top \mat{A}\mat{v}\geq 0$ for any $\mat{v}\in\mathbb{R}^n$.
For two symmetric matrices $\mat{A}$, $\mat{B}$, we use $\mat{A} \preccurlyeq \mat{B}$ to indicate that $\mat{B}-\mat{A}$ is PSD.
For a PSD matrix $\mat{M}\in \R^{n\times n}$ and vector $\mat{v}\in \R^n$, we define $\norm{\mat{v}}_{\mat{M}}:= (\mat{v}^\top \mat{M}\mat{v})^{1/2}$.
For $\mat{A}\in \R^{m\times n}$, $\mat{b}\in \R^m$, and $\Omega \subseteq [m]$, we use $\mat{A}_{\Omega}\in \R^{|\Omega|\times n}$ and $\mat{b}_{\Omega}\in\R^{|\Omega|}$ to denote the submatrix and subvector with rows indexed by $\Omega$.
We let $\otimes$ denote the Kronecker product and $\odot$ denote the Khatri--Rao product.

\paragraph{Tensor algebra.}
The fibers of a tensor are the vectors obtained by fixing all but one index,
e.g., if $\tensor{X} \in \R^{I \times J \times K}$,
the column, row and tube fibers are
$\mat{x}_{:jk}$, $\mat{x}_{i:k}$, and $\mat{x}_{ij:}$, respectively.
The \emph{mode-$n$ unfolding} of a
tensor $\tensor{X} \in \R^{I_1 \times \dots \times I_N}$ is the matrix
$\mat{X}_{(n)} \in \R^{I_n \times (I_1 \cdots I_{n-1} I_{n+1}\cdots I_N)}$
that arranges the mode-$n$ fibers of $\tensor{X}$ as rows of $\mat{X}_{(n)}$ sorted lexicographically by index. The \emph{vectorization} $\vvec{(\tensor{X})} \in \R^{I_1\cdots I_N}$ of $\tensor{X}$ stacks the elements of $\tensor{X}$ lexicographically by index.

For $n\in[N]$, we denote the \emph{mode-$n$ product} of a tensor
$\tensor{X} \in \R^{I_1 \times \dots \times I_N}$ and matrix
$\mat{A} \in \R^{J \times I_n}$ by
$\tensor{Y} = \tensor{X} \times_n \mat{A}$, where
$\tensor{Y} \in \R^{I_1 \times \dots \times I_{n-1} \times J \times I_{n+1} \times \dots \times I_{N}}$.
This operation multiplies each mode-$n$ fiber of $\tensor{X}$ by $\mat{A}$, and is expressed elementwise as
\[
    (\tensor{X} \times_{n} \mat{A})_{i_1 \dots i_{n-1} j i_{n+1} \dots i_{N}}
    =
    \sum_{i_n=1}^{I_n} x_{i_1 i_2 \dots i_N} a_{j i_n}.
\]
The inner product of two tensors $\tensor{X}, \tensor{Y} \in \R^{I_1 \times \dots \times I_N}$ is 
\[
    \inner{\tensor{X}, \tensor{Y}}
    =
    \sum_{i_1=1}^{I_1}
    \sum_{i_2=1}^{I_2}
    \dots
    \sum_{i_N=1}^{I_N}
    x_{i_1 i_2 \dots i_N}
    y_{i_1 i_2 \dots i_N}.
\]
The Frobenius norm of a tensor $\tensor{X}$ is 
$\norm{\tensor{X}}_{\frobenius} = \sqrt{\inner{\tensor{X}, \tensor{X}}}$.

\subsection{Tensor Decompositions}

The tensor decompositions of $\tensor{X} \in \R^{I_1 \times \dots \times I_N}$ below can be seen as higher-order analogs of low-rank matrix factorization.
We direct the reader to \citet{kolda2009tensor} for a comprehensive survey on this topic.

\paragraph{CP decomposition.}
A rank-$R$ \emph{CP decomposition} represents $\tensor{X}$
with $\boldsymbol{\lambda} \in \R_{\geq 0}^R$ and $N$ factors $\mat{A}^{(n)} \in \R^{I_n \times R}$, for $n\in [N]$,
where each column of $\mat{A}^{(n)}$ has unit norm.
The reconstructed tensor $\widehat{\tensor{X}}$ is defined elementwise as:
\[
    \widehat{x}_{i_1 \ldots i_N} = \sum_{r=1}^R \lambda_r \, a^{(1)}_{i_1 r} \cdots a^{(N)}_{i_N r}\,.
\]

\paragraph{Tucker decomposition.}
A rank-$\mat{r}$ \emph{Tucker decomposition} represents $\tensor{X}$ with a \emph{core tensor} $\tensor{G} \in \R^{R_1 \times \dots \times R_N}$ and $N$ \emph{factor matrices} $\mat{A}^{(n)} \in \R^{I_n \times R_n}$, for $n\in [N]$,
where $\mat{r} = (R_1,\dots,R_N)$ is the multilinear rank~\citep{ghadiri2023approximately}.
The reconstructed tensor
$\widehat{\tensor{X}}=\tensor{G} \times_{1} \mat{A}^{(1)} \times_{2} \dots \times_{N} \mat{A}^{(N)}$
is defined elementwise as:
\[
    \widehat{x}_{i_1 \ldots i_N}
    =
    \sum_{r_1=1}^{R_1} \cdots \sum_{r_N=1}^{R_N}
    g_{r_1 \ldots r_N} a_{i_1 r_1}^{(1)} \cdots a_{i_N r_N}^{(N)}\,.
\]

\paragraph{TT decomposition.}
A rank-$\mat{r}$ \emph{tensor train (TT) decomposition}~\citep{oseledets2011tensor} represents $\tensor{X}$
with $N$ third-order \emph{TT-cores} $\tensor{A}^{(n)}\in \R^{R_{n-1}\times I_n\times R_n}$, for $n\in[N]$, using the convention $R_0 = R_N = 1$.
The reconstructed tensor $\widehat{\tensor{X}}$ is defined elementwise as:
\begin{align*}
    \widehat{x}_{i_1 \ldots i_N}
    &= \underbrace{\mat{A}^{(1)}_{:i_1:}}_{1\times R_1}
       \underbrace{\mat{A}^{(2)}_{:i_2:}}_{R_1\times R_2}
       \cdots
       \underbrace{\mat{A}^{(N-1)}_{:i_{N-1}:}}_{R_{N-2}\times R_{N-1}}
       \underbrace{\mat{A}^{(N)}_{:i_N:}}_{R_{N-1}\times 1}\,.
\end{align*}

\begin{remark}
All three of these TDs are instances of the more general \emph{tensor network} framework (see~\Cref{app:tensor-networks}).
\end{remark}

\subsection{ALS Formulations}
\label{subsec:ALS-for-different-TDs}

\emph{Alternating least squares} (ALS) methods are the gold standard for low-rank tensor decomposition, e.g., they are the first techniques mentioned in the MATLAB Tensor Toolbox \citep{matlab}.
ALS cyclically minimizes the original least-squares problem~\eqref{eq:GTC} with respect to one block of factor variables while keeping all others fixed.
Repeating this process converges to a nontrivial local optimum
and reduces the original nonconvex problem to a series of (convex) linear regression problems in each step (see \cref{app:least-square-regression}).

Updating a block of variables in an ALS step of a TD problem
is often a \emph{highly structured} regression problem
that can be solved very fast with problem-specific algorithms.
We describe the induced structure of each ALS update for the tensor decomposition types above.

\paragraph{CP factor matrix update $\equiv$ Khatri--Rao regression.}
In each ALS step for CP decomposition, all factor matrices are fixed except for one, say $\mat{A}^{(n)}$.
ALS solves the following linear least-squares problem:
\begin{equation}\label{eq:CP-ALS}
    \mat{A}^{(n)}
    \leftarrow
    \argmin_{\mat{A} \in \R^{I_n \times R}} \, \norm*{ \parens*{ \bigodot_{i=1,i\neq n}^{N} \mat{A}^{(i)} }\,\mat{A}^\top - \mat{X}_{(n)}^\top }_{\frobenius}\,.    
\end{equation}
Then, we set $\lambda_r = \norm{\mat{a}^{(n)}_{:r}}_2$ for each $r\in[R]$ and normalize the columns of $\mat{A}^{(n)}$.
Each row of $\mat{A}^{(n)}$ can be optimized independently, so \eqref{eq:CP-ALS}
solves $I_n$ linear regression problems where the design matrix is a \emph{Khatri--Rao product}.

\paragraph{Tucker core update $\equiv$ Kronecker regression.}
The core-tensor ALS update solves the following: for $\mat{A}^{(n)}$ fixed, 
\[
    \tensor{G}
    \leftarrow
    \argmin_{\tensor{G}' \in\R^{R_1 \times \cdots \times R_N}}\,
    \norm*{\parens*{ \bigotimes_{n=1}^N\mat{A}^{(n)} } \vvec(\tensor{G}') - \vvec(\tensor{X}) }_2\,,
\]
where the design matrix is a \emph{Kronecker product} of the factors.

\paragraph{Tucker factor update $\equiv$ Kronecker-matrix regression.}
When ALS updates $\mat{A}^{(n)}$ with all the other factor matrices
and core tensor fixed, it solves:
\[
    \mat{A}^{(n)}
    \leftarrow
    \argmin_{\mat{A}\in \R^{I_n \times R_n}} \,
    \norm*{\parens*{\bigotimes_{i=1,i\neq n}^N \mat{A}^{(i)}} \mat{G}_{(n)}^\top \mat{A}^\top - \mat{X}_{(n)}^\top }_{\frobenius}\,,
\]
where $\mat{G}_{(n)}$, $\mat{X}_{(n)}$ are the mode-$n$ unfoldings of $\tensor{G}$ and $\tensor{X}$.
This is equivalent to solving $I_n$ independent linear regression problems, where the design matrix is the product of a Kronecker product and another matrix.
It can be viewed as solving $I_n$ instances of structured but \emph{constrained linear regression}~\citep{fahrbach2022subquadratic}.

\paragraph{TT-core update $\equiv$ Kronecker regression.}
Given a TT decomposition $\{\tensor{A}^{(n)}\}_{n=1}^{N}$ and $n\in[N]$,
the \emph{left chain} $\mat{A}_{<n} \in \R^{(I_1 \cdots I_{n-1}) \times R_{n-1}}$
and
\emph{right chain} $\mat{A}_{> n}\in \R^{R_n \times (I_{n+1} \cdots I_N)}$
are matrices that depend on the cores $\tensor{A}^{(n')}$ for $n' < n$ and $n' > n$, respectively (see \Cref{app:tt-decomposition-details} for details).
When ALS updates $\tensor{A}^{(n)}$ with all other TT-cores fixed, it solves
\[
    \tensor{A}^{(n)} \!\!
    \gets \!\!
    \argmin_{\tensor{B} \in \R^{R_{n-1}\times I_n \times R_n}}
    \bigl\lVert (\mat{A}_{<n}\kron \mat{A}^\top_{>n})\,\mat{B}_{(2)}^\top - \mat{X}_{(n)}^\top \bigr\rVert_{\frobenius}\,,
\]
which is equivalent to solving $I_n$ Kronecker regression problems in $\R^{I_{\neq n}}$.

\section{Approximate Richardson Iteration}
\label{sec:lifted_regression}

We now present our main techniques for reducing tensor completion to tensor decomposition.
When using ALS to solve a TC problem,
we must efficiently solve least-squares problems \[\min_{\mat x} \, \norm{\mat{Px}-\mat q}_2.\]
The rows of the design matrix $\mat{P}$ correspond to the \emph{subset of observations} in the TC problem,
which means $\mat{P}$ does not necessarily have the structure of the design matrix in the full TD problem.

A direct approach is to compute the closed-form solution $\mat x^* = (\mat P^\top \mat P)^{-1}\mat P^\top \mat q$, but computing $(\mat P^\top \mat P)^{-1}$ is often impractical.
Two techniques are commonly used to overcome this:
(1) iterative methods and (2) row sampling.
Iterative methods repeat the same relatively cheap per-step computation
\emph{many times} to approximate the original expensive computation.
Row sampling methods (e.g., leverage score sampling) randomly pick rows of $\mat P$
and solve a least-squares problem on the sampled rows to obtain an approximate solution to the original problem with high probability.
Directly computing leverage scores for a general $\mat{P}$, however,
is \emph{also prohibitively expensive} since it requires computing
the same matrix $(\mat P^\top \mat P)^{-1}$ (see \Cref{app:leverage-score} for details).

We show that our \emph{approximate-mini-ALS} method is a principled approach for tensor completion.
In \Cref{ssec:lifting}, we prove that lifting \emph{restores the structure} of the full TD ALS update step,
enabling fast least-squares methods for a larger but equivalent problem.
In \Cref{subsec:iterative_methods}, we show that iteratively solving the lifted problem (i.e., mini-ALS) is connected to an iterative method called the \emph{Richardson iteration}~\citep{richardson1911approximate},
which we can also view as a matrix-splitting method.
In other words, mini-ALS and the Richardson iteration with a certain preconditioner give the same sequence of iterates $\{\mat{x}^{(k)}\}_{k \ge 0}$.
Lastly in~\Cref{sec:approx-solve-lifted},
we prove novel convergence guarantees for \emph{approximately} solving the lifted problem
(i.e., for approximate-mini-ALS).
This allows us to directly use fast leverage-score sampling algorithms for
CP decomposition~\citep{cheng2016spals,larsen2022practical,bharadwaj2023fast},
Tucker decomposition~\citep{diao2019optimal,fahrbach2022subquadratic},
and TT decomposition~\citep{bharadwaj2024efficient}
as blackbox subroutines.
All missing proofs are deferred to~\Cref{app:lifted_regression}.

\subsection{Lifting to a Structured Problem}
\label{ssec:lifting}
Consider the linear regression problem with $\mat{P}\in\R^{|\Omega| \times R}$ and $\mat{q} \in \R^{|\Omega|}$ given by
\begin{equation}
\label{eqn:input_regression}
    \mat{x}^* = \argmin_{\mat{x} \in \R^R} \,\norm{\mat{P} \mat{x} - \mat{q}}_{2}\,.
\end{equation}
If there exists a tall structured matrix $\mat{A} \in \R^{I \times R}$
with a subset of rows $\Omega \subseteq [I]$ such that $\mat{A}_{\Omega} = \mat{P}$
(permutations of the rows allowed),
then we can lift \eqref{eqn:input_regression} to a higher-dimensional problem while preserving the optimal solution.

\begin{restatable}[]{lemma}{LiftedRegression}
\label{lem:lifted_regression}
Let $\mat{b} \in \R^I$ be the lifted response such that $\mat{b}_{\Omega} = \mat{q}$
and $\mat{b}_{\overline{\Omega}}$ is a free variable. If
\begin{equation}
\label{eqn:lifted_regression}
    (\mat{x}^*, \mat{b}^*_{\overline{\Omega}})
    =
    \argmin_{\mat{x} \in \R^R, \mat{b}_{\overline{\Omega}} \in \R^{I - |\Omega|}}\, \norm*{\mat{A} \mat{x} - \mat{b}}_{2}\,,
\end{equation}
then $\mat{x}^*$ also minimizes \eqref{eqn:input_regression},
i.e., the original linear regression problem $\min_{\mat{x} \in \R^R}\, \norm{\mat{P} \mat{x} - \mat{q}}_{2}$.
\end{restatable}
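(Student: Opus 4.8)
The plan is to exploit the fact that the squared Euclidean norm decomposes additively over the disjoint index sets $\Omega$ and $\overline{\Omega}$. Write $\mat{A}_\Omega = \mat{P}$ (the hypothesis allows a row permutation, which leaves the Euclidean norm of the residual unchanged, so I will silently identify the two). For any $\mat{x} \in \R^R$ and any $\mat{b}$ with $\mat{b}_\Omega = \mat{q}$,
\[
    \norm{\mat{A}\mat{x} - \mat{b}}_2^2
    = \norm{\mat{A}_\Omega \mat{x} - \mat{b}_\Omega}_2^2 + \norm{\mat{A}_{\overline{\Omega}} \mat{x} - \mat{b}_{\overline{\Omega}}}_2^2
    = \norm{\mat{P}\mat{x} - \mat{q}}_2^2 + \norm{\mat{A}_{\overline{\Omega}} \mat{x} - \mat{b}_{\overline{\Omega}}}_2^2\,.
\]
The first term does not depend on $\mat{b}_{\overline{\Omega}}$, and the second term is minimized (to the value $0$) by the choice $\mat{b}_{\overline{\Omega}} = \mat{A}_{\overline{\Omega}}\mat{x}$, which is feasible since $\mat{b}_{\overline{\Omega}}$ is a free variable. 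Hence for every fixed $\mat{x}$,
\[
    \min_{\mat{b}_{\overline{\Omega}} \in \R^{I - |\Omega|}} \norm{\mat{A}\mat{x} - \mat{b}}_2^2 = \norm{\mat{P}\mat{x} - \mat{q}}_2^2\,.
\]

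Next I would swap the order of minimization: minimizing the left-hand side of \eqref{eqn:lifted_regression} jointly over $(\mat{x}, \mat{b}_{\overline{\Omega}})$ equals $\min_{\mat{x}} \min_{\mat{b}_{\overline{\Omega}}} \norm{\mat{A}\mat{x} - \mat{b}}_2^2 = \min_{\mat{x}} \norm{\mat{P}\mat{x} - \mat{q}}_2^2$, so the two problems have the same optimal value. To get the stated conclusion about the minimizer itself, argue by contradiction: if $(\mat{x}^*, \mat{b}^*_{\overline{\Omega}})$ solves \eqref{eqn:lifted_regression} but $\mat{x}^*$ is not optimal for \eqref{eqn:input_regression}, there is some $\mat{x}'$ with $\norm{\mat{P}\mat{x}' - \mat{q}}_2 < \norm{\mat{P}\mat{x}^* - \mat{q}}_2 \le \norm{\mat{A}\mat{x}^* - \mat{b}^*}_2$; then the feasible point $(\mat{x}', \mat{A}_{\overline{\Omega}}\mat{x}')$ attains lifted objective $\norm{\mat{P}\mat{x}' - \mat{q}}_2$, strictly beating $(\mat{x}^*, \mat{b}^*_{\overline{\Omega}})$, a contradiction. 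As a byproduct one also sees that any optimal $\mat{b}^*_{\overline{\Omega}}$ must equal $\mat{A}_{\overline{\Omega}}\mat{x}^*$ (otherwise the $\overline{\Omega}$-residual could be driven to zero), which is exactly the ``lift'' step of mini-ALS.

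I do not anticipate a genuine obstacle here; the argument is a one-line energy-splitting observation plus the standard fact that iterated minimization equals joint minimization. The only points requiring a little care are (i) handling possible non-uniqueness of minimizers of $\mat{P}\mat{x} = \mat{q}$ (the contradiction argument above sidesteps this cleanly), and (ii) spelling out that the row permutation relating $\mat{A}_\Omega$ and $\mat{P}$ is immaterial because the objective is a sum over rows. I would keep the write-up to essentially the displayed identity plus the contradiction step.
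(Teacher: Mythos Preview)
Your proposal is correct and follows essentially the same approach as the paper: decompose $\norm{\mat{A}\mat{x}-\mat{b}}_2^2$ over $\Omega$ and $\overline{\Omega}$, then observe that the $\overline{\Omega}$-part can always be driven to zero by choosing $\mat{b}_{\overline{\Omega}} = \mat{A}_{\overline{\Omega}}\mat{x}$, so the two problems share the same optimal value and any lifted minimizer $\mat{x}^*$ must minimize the original. Your contradiction step and remark about $\mat{b}^*_{\overline{\Omega}} = \mat{A}_{\overline{\Omega}}\mat{x}^*$ are a bit more explicit than the paper's write-up, but the argument is the same.
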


\begin{restatable}{lemma}{ConvexQuadratic}
\label{lem:lifted_problem_is_convex_quadratic}
Problem~\ref{eqn:lifted_regression} is a convex quadratic program.
\end{restatable}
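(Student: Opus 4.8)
The plan is to show that the objective in \eqref{eqn:lifted_regression}, viewed as a function of the stacked variable $\mat{z} = (\mat{x}, \mat{b}_{\overline{\Omega}}) \in \R^{R + (I - |\Omega|)}$, is a quadratic function with a positive semidefinite Hessian, which is the definition of a convex quadratic program (with no constraints). First I would write $\mat{A}\mat{x} - \mat{b}$ as a single affine map of $\mat{z}$: after permuting rows so that the $\Omega$-rows come first, $\mat{A}\mat{x} - \mat{b} = \begin{bmatrix} \mat{A}_{\Omega} & \mat{0} \\ \mat{A}_{\overline{\Omega}} & -\mat{I} \end{bmatrix} \mat{z} - \begin{bmatrix} \mat{q} \\ \mat{0} \end{bmatrix} =: \mat{M}\mat{z} - \mat{c}$, where $\mat{I}$ is the $(I - |\Omega|) \times (I - |\Omega|)$ identity and $\mat{c}$ absorbs the fixed response $\mat{b}_{\Omega} = \mat{q}$. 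Then the objective is $\norm{\mat{M}\mat{z} - \mat{c}}_2$.

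Next I would note that minimizing $\norm{\mat{M}\mat{z} - \mat{c}}_2$ is equivalent to minimizing its square $\norm{\mat{M}\mat{z} - \mat{c}}_2^2 = \mat{z}^\top \mat{M}^\top \mat{M} \mat{z} - 2 \mat{c}^\top \mat{M} \mat{z} + \mat{c}^\top \mat{c}$, since $t \mapsto t^2$ is monotone on $\R_{\geq 0}$ and hence the two problems have the same minimizers. This is manifestly a quadratic function of $\mat{z}$ with Hessian $2\mat{M}^\top \mat{M}$, which is PSD because $\mat{v}^\top \mat{M}^\top \mat{M} \mat{v} = \norm{\mat{M}\mat{v}}_2^2 \geq 0$ for all $\mat{v}$. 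Since there are no constraints on $\mat{z}$ (both $\mat{x}$ and $\mat{b}_{\overline{\Omega}}$ range freely over Euclidean space), this exhibits \eqref{eqn:lifted_regression} as an unconstrained convex QP, completing the argument.

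There is essentially no hard step here — the lemma is a bookkeeping observation. The only thing requiring a little care is the reformulation in terms of the block matrix $\mat{M}$: one must be explicit that introducing $\mat{b}_{\overline{\Omega}}$ as free variables is exactly what produces the $-\mat{I}$ block, and that the partial fixing $\mat{b}_{\Omega} = \mat{q}$ leaves the problem affine (not merely quadratic) in the combined variable. I would also remark, for completeness, that replacing the norm by its square does not change the minimizer (so the ``program'' is genuinely quadratic rather than only the square root of one), though depending on the paper's conventions one could equally well just define a QP to allow a Euclidean-norm objective. Either way the convexity claim follows immediately from the PSD-ness of $\mat{M}^\top\mat{M}$.
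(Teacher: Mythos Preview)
Your proposal is correct and follows essentially the same approach as the paper: the paper rewrites the objective as $\norm{[\mat{A}\ \ -\mat{I}_{:,\overline{\Omega}}]\,(\mat{x};\mat{b}_{\overline{\Omega}}) - \widetilde{\mat{q}}}_2^2$ (without permuting rows), which is your $\norm{\mat{M}\mat{z}-\mat{c}}_2^2$ up to the harmless row reordering you introduce. Your added remarks on the PSD Hessian and the equivalence of minimizing the norm versus its square are fine elaborations of what the paper leaves implicit.
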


\begin{remark}
Problem~\ref{eqn:lifted_regression} is not a linear regression problem
with (structured) design matrix~$\mat{A}$ because there are $\mat{b}_{\overline{\Omega}}$ variables in the response.
However, there is enough structure to employ block minimization to alternate between minimizing $\mat{x}$ and $\mat{b}_{\overline{\Omega}}$.
\end{remark}

\subsection{Iterative Methods for the Lifted Problem}
\label{subsec:iterative_methods}

Iterative methods for solving linear systems and regression problems have a long history and have been used to speed up several algorithms in theory and practice.
The algorithms we consider use the exact arithmetic model, but all of these methods can be carried out with numbers with $\log \nicefrac{\kappa}\epsilon$ bits, where $\kappa$ is the condition number of the matrix (see, e.g., \citet{ghadiri2023bit,ghadiri2024improving}).
There is a literature on \emph{inexact} Richardson iteration for solving linear systems, 
but they require the error $\widehat{\epsilon}$ to be smaller than than $1/\kappa$,
which is not achievable with leverage-score sampling \cite{golub1988convergence,golub1997closer}.

\begin{lemma}[{Preconditioned Richardson iteration, \citep[Lemma 6.1]{lee2024techniques}}]
\label{lem:richardson_iteration}
Consider the least-squares problem $\mat{x}^* = \argmin_{\mat{x} \in \R^R}\, \norm{\mat{P}\mat{x} - \mat{q}}$.
Let $\mat{M}$ be a matrix such that $\mat{P}^\top \mat{P} \preccurlyeq \mat{M} \preccurlyeq \beta \cdot \mat{P}^\top \mat{P}$
for some $\beta \ge 1$,
and consider the Richardson iteration:
\[
    \mat{x}^{(k+1)} = \mat{x}^{(k)} - \mat{M}^{-1}(\mat{P}^\top \mat{P} \mat{x}^{(k)} - \mat{P}^\top \mat{q})\,.
\]
Then, we have that
\[
    \norm{\mat{x}^{(k+1)} - \mat{x}^*}_{\mat{M}}
    \le
    \parens*{1 - \frac{1}{\beta}}
    \norm{\mat{x}^{(k)} - \mat{x}^*}_{\mat{M}}\,.
\]
\end{lemma}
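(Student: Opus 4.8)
The plan is to analyze the error vector $\mat{e}^{(k)} := \mat{x}^{(k)} - \mat{x}^*$ directly and show that each Richardson step contracts its $\mat{M}$-norm by a factor of $1 - 1/\beta$. First I would use the normal equations: since $\mat{x}^*$ minimizes $\norm{\mat{P}\mat{x}-\mat{q}}_2$, it satisfies $\mat{P}^\top \mat{P}\mat{x}^* = \mat{P}^\top \mat{q}$. Substituting this into the iteration, the term $\mat{P}^\top \mat{P}\mat{x}^{(k)} - \mat{P}^\top \mat{q} = \mat{P}^\top\mat{P}(\mat{x}^{(k)} - \mat{x}^*) = \mat{P}^\top\mat{P}\,\mat{e}^{(k)}$. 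Subtracting $\mat{x}^*$ from both sides of the update then gives the clean recursion
\[
    \mat{e}^{(k+1)} = \mat{e}^{(k)} - \mat{M}^{-1}\mat{P}^\top\mat{P}\,\mat{e}^{(k)} = \parens*{\mat{I} - \mat{M}^{-1}\mat{P}^\top\mat{P}}\mat{e}^{(k)}\,.
\]

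Next I would bound $\norm{\mat{e}^{(k+1)}}_{\mat{M}} = \norm{(\mat{I} - \mat{M}^{-1}\mat{P}^\top\mat{P})\mat{e}^{(k)}}_{\mat{M}}$ by controlling the operator norm of $\mat{I} - \mat{M}^{-1}\mat{P}^\top\mat{P}$ in the $\mat{M}$-geometry. The natural move is to conjugate by $\mat{M}^{1/2}$: the map $\mat{y} \mapsto \mat{M}^{1/2}(\mat{I} - \mat{M}^{-1}\mat{P}^\top\mat{P})\mat{M}^{-1/2}\mat{y} = (\mat{I} - \mat{M}^{-1/2}\mat{P}^\top\mat{P}\mat{M}^{-1/2})\mat{y}$ is symmetric, so its $\ell_2$ operator norm equals its spectral radius. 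Writing $\mat{N} := \mat{M}^{-1/2}\mat{P}^\top\mat{P}\mat{M}^{-1/2}$, the sandwich hypothesis $\mat{P}^\top\mat{P} \preccurlyeq \mat{M} \preccurlyeq \beta\,\mat{P}^\top\mat{P}$ is equivalent (conjugating the PSD inequalities by $\mat{M}^{-1/2}$) to $\tfrac{1}{\beta}\mat{I} \preccurlyeq \mat{N} \preccurlyeq \mat{I}$. Hence every eigenvalue $\mu$ of $\mat{N}$ lies in $[1/\beta, 1]$, so every eigenvalue $1-\mu$ of $\mat{I} - \mat{N}$ lies in $[0, 1 - 1/\beta]$, giving $\norm{\mat{I} - \mat{N}}_{2} \le 1 - 1/\beta$. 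Translating back through the $\mat{M}^{1/2}$ conjugation yields $\norm{\mat{e}^{(k+1)}}_{\mat{M}} \le (1 - 1/\beta)\norm{\mat{e}^{(k)}}_{\mat{M}}$, which is the claim.

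The only mild subtlety—not really an obstacle—is the handling of degenerate cases: if $\mat{P}^\top\mat{P}$ is singular, then $\mat{M}$ (being squeezed between it and $\beta$ times it) is singular too, and $\mat{M}^{-1}$, $\mat{M}^{1/2}$, $\mat{M}^{-1/2}$ should be read as pseudoinverses/square roots on $\operatorname{range}(\mat{P}^\top\mat{P}) = \operatorname{range}(\mat{M})$, with all iterates understood to live in that subspace (which is preserved by the iteration and contains $\mat{e}^{(0)}$ if $\mat{x}^{(0)}, \mat{x}^*$ are chosen there). Since the lemma is quoted verbatim from \citet[Lemma 6.1]{lee2024techniques}, I would most likely just cite it; but if a self-contained proof is wanted, the computation above is the whole argument, and the main thing to be careful about is keeping the $\mat{M}$-inner-product bookkeeping straight when passing between the $\mat{M}$-norm and the Euclidean norm via the $\mat{M}^{1/2}$ similarity transform.
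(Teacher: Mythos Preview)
Your proof is correct and is the standard argument for this classical result. Note that the paper itself does not supply a proof of \Cref{lem:richardson_iteration}; it simply quotes the statement from \citet[Lemma~6.1]{lee2024techniques}, so there is nothing to compare against beyond observing that your suggestion to ``just cite it'' is exactly what the authors do.
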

For the rest of this section,
let $\widetilde{\mat{P}} \in \R^{I\times R}$, $\widetilde{\mat{q}}\in \R^{I}$
be the zero-masked lifted matrix and vector such that
\[
    (\widetilde{\mat{P}}_{\Omega}, \widetilde{\mat{P}}_{\overline{\Omega}}) = (\mat{A}_{\Omega}, \boldsymbol{0})
    \quad
    \text{and}
    \quad
    (\widetilde{\mat{q}}_{\Omega}, \widetilde{\mat{q}}_{\overline{\Omega}})
    =
    (\mat{b}_{\Omega}, \boldsymbol{0}).
\]

We now present a key lemma showing that alternating minimization between $\mat{x}$ and $\mat{b}_{\overline{\Omega}}$ corresponds to preconditioned Richardson iterations
on the original least-squares problem.
Below, one can easily check that $\mat{A}$, $\widetilde{\mat{P}}$, and $\widetilde{\mat{q}}$
in our lifted approach satisfy this condition.

\begin{restatable}{lemma}{RichardsonSimulation}
\label{lemma:richardson-simulation}
Let $\mat{A}, \widetilde{\mat{P}} \in \R^{I \times R}$, $\widetilde{\mat{q}}\in\R^{I}$ such that $\widetilde{\mat{P}}-\mat{A}$ and
$\bigl[\begin{matrix}
\widetilde{\mat{P}} & \widetilde{\mat{q}}
\end{matrix}\bigr]$
are orthogonal, i.e., $(\widetilde{\mat{P}}-\mat{A})^\top \bigl[\begin{matrix} \widetilde{\mat{P}} & \widetilde{\mat{q}} \end{matrix}\bigr] = \mat{0}$.
Then, the iterative method
\begin{align*}
    \widetilde{\mat{q}}^{(k)} & = \widetilde{\mat{q}} + (\mat{A} - \widetilde{\mat{P}})\, \mat{x}^{(k)}\,, \\
    \mat{x}^{(k+1)} & = \argmin_{\mat{x} \in \R^{R}}\, \norm{\mat{A} \mat{x} - \widetilde{\mat{q}}^{(k)}}_2^2\,,
\end{align*}
simulates Richardson iterations with preconditioner $\mat{A}^\top \mat{A}$
for the regression problem $\min_{\mat{x}}\, \norm{\widetilde{\mat{P}} \mat{x} - \widetilde{\mat{q}}}_2^2$,
i.e.,
\begin{equation}
    \label{eq:update-rule}
    \mat{x}^{(k+1)}
    =
    \mat{x}^{(k)} - (\mat{A}^\top \mat{A})^{-1} (\widetilde{\mat{P}}^\top \widetilde{\mat{P}} \mat{x}^{(k)} - \widetilde{\mat{P}}^\top \widetilde{\mat{q}})\,.
\end{equation}
\end{restatable}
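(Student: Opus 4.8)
The plan is to collapse the two-line recursion into a single update on $\mat{x}^{(k)}$ and then read off that it is exactly the preconditioned Richardson iteration of \Cref{lem:richardson_iteration}. First I would write the normal equations for the least-squares step: since $\mat{x}^{(k+1)}$ minimizes $\norm{\mat{A}\mat{x} - \widetilde{\mat{q}}^{(k)}}_2^2$, it satisfies $\mat{A}^\top\mat{A}\,\mat{x}^{(k+1)} = \mat{A}^\top\widetilde{\mat{q}}^{(k)}$, so $\mat{x}^{(k+1)} = (\mat{A}^\top\mat{A})^{-1}\mat{A}^\top\widetilde{\mat{q}}^{(k)}$ (here I would note the standing assumption that $\mat{A}$ has full column rank, which makes $\mat{A}^\top\mat{A}$ invertible and is implicit in the statement of \eqref{eq:update-rule}). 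Substituting the lift $\widetilde{\mat{q}}^{(k)} = \widetilde{\mat{q}} + (\mat{A} - \widetilde{\mat{P}})\mat{x}^{(k)}$ and using $(\mat{A}^\top\mat{A})^{-1}\mat{A}^\top\mat{A} = \mat{I}$ gives
\[
  \mat{x}^{(k+1)} = \mat{x}^{(k)} + (\mat{A}^\top\mat{A})^{-1}\mat{A}^\top\bigl(\widetilde{\mat{q}} - \widetilde{\mat{P}}\,\mat{x}^{(k)}\bigr)\,.
\]

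The second step is to bring in the orthogonality hypothesis. The condition $(\widetilde{\mat{P}}-\mat{A})^\top[\,\widetilde{\mat{P}}\ \ \widetilde{\mat{q}}\,] = \mat{0}$ decouples into the two identities $\mat{A}^\top\widetilde{\mat{P}} = \widetilde{\mat{P}}^\top\widetilde{\mat{P}}$ and $\mat{A}^\top\widetilde{\mat{q}} = \widetilde{\mat{P}}^\top\widetilde{\mat{q}}$. Substituting both into the displayed update turns the "$\mat{A}$-flavored" residual $\mat{A}^\top(\widetilde{\mat{q}} - \widetilde{\mat{P}}\mat{x}^{(k)})$ into the true gradient of the target objective, yielding
\[
  \mat{x}^{(k+1)} = \mat{x}^{(k)} - (\mat{A}^\top\mat{A})^{-1}\bigl(\widetilde{\mat{P}}^\top\widetilde{\mat{P}}\,\mat{x}^{(k)} - \widetilde{\mat{P}}^\top\widetilde{\mat{q}}\bigr)\,,
\]
which is precisely \eqref{eq:update-rule}. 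Comparing with \Cref{lem:richardson_iteration} under the substitution $\mat{P}\leftarrow\widetilde{\mat{P}}$, $\mat{q}\leftarrow\widetilde{\mat{q}}$, $\mat{M}\leftarrow\mat{A}^\top\mat{A}$ identifies this as the preconditioned Richardson iteration for $\min_{\mat{x}}\,\norm{\widetilde{\mat{P}}\mat{x}-\widetilde{\mat{q}}}_2^2$ with preconditioner $\mat{A}^\top\mat{A}$, completing the proof.

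I do not anticipate a real obstacle here; the argument is essentially bookkeeping with the normal equations. The only point requiring a moment of care is recognizing that the single matrix orthogonality hypothesis encodes exactly the two scalar-to-matrix substitutions needed, and that it is this step — not invertibility of $\mat{A}^\top\mat{A}$ per se — that makes the $\mat{A}$-based least-squares solve reproduce a gradient step for the $\widetilde{\mat{P}}$-based objective. I would also remark in passing that the same computation shows the recursion is well defined (each $\widetilde{\mat{q}}^{(k)}$ and $\mat{x}^{(k+1)}$ is a deterministic function of $\mat{x}^{(k)}$), and that the hypothesis is indeed met by the zero-masked lift $(\widetilde{\mat{P}},\widetilde{\mat{q}})$ against the structured $\mat{A}$, since $\widetilde{\mat{P}} - \mat{A}$ is supported on rows $\overline{\Omega}$ while $\widetilde{\mat{P}}$ and $\widetilde{\mat{q}}$ vanish there — though this last check belongs to the application rather than to the lemma itself.
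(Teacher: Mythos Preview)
Your proposal is correct and follows essentially the same approach as the paper: write the normal-equation solution $\mat{x}^{(k+1)} = (\mat{A}^\top\mat{A})^{-1}\mat{A}^\top\widetilde{\mat{q}}^{(k)}$, substitute the lift to peel off $\mat{x}^{(k)}$, and use the orthogonality hypothesis to replace $\mat{A}^\top(\widetilde{\mat{P}}\mat{x}^{(k)}-\widetilde{\mat{q}})$ by $\widetilde{\mat{P}}^\top(\widetilde{\mat{P}}\mat{x}^{(k)}-\widetilde{\mat{q}})$. The only cosmetic difference is that the paper applies the orthogonality in one shot via $\mat{A}^\top = \widetilde{\mat{P}}^\top - (\widetilde{\mat{P}}-\mat{A})^\top$ acting on $[\widetilde{\mat{P}}\ \widetilde{\mat{q}}]$, whereas you split it into the two identities $\mat{A}^\top\widetilde{\mat{P}} = \widetilde{\mat{P}}^\top\widetilde{\mat{P}}$ and $\mat{A}^\top\widetilde{\mat{q}} = \widetilde{\mat{P}}^\top\widetilde{\mat{q}}$.
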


\begin{remark}
In the tensor completion setting, $\mat{A}-\widetilde{\mat{P}}$ vanishes over $\Omega$,
so $\widetilde{\mat{q}}^{(k)}$ only updates entries in $\overline{\Omega}$ while maintaining $\mat{q}$ on $\Omega$.
Thus, computing $\mat{x}^{(k+1)}$ corresponds to
\begin{align*}
    \mat{x}^{(k+1)}
    &=
    \argmin_{\mat{x} \in \R^R}\, \norm{\mat{Ax}-\widetilde{\mat{q}}^{(k)}}^2_2 \\
    &= 
    \argmin_{\mat{x} \in \R^R}\, \bigl\{\norm{\mat{Px}-\mat{q}}^2_2 + \norm{\mat{A}_{\overline \Omega}\,(\mat x - \mat x^{(k)})}_2^2\bigr\}\,.
\end{align*}
\end{remark}

\subsection{Approximately Solving the Lifted Problem}
\label{sec:approx-solve-lifted}

We have shown that alternating minimization for the lifted problem~\eqref{eqn:lifted_regression}
is closely connected to preconditioned Richardson iteration and inherits its convergence guarantees.
For this to be useful,
we need to use fast regression algorithms for the $\mat{x}^{(k+1)}$ updates that \emph{exploit the structure} of $\mat{A}$,
i.e.,
when solving $\min_{\mat x}\, \norm{\mat{Ax} - \widetilde{\mat{q}}^{(k)}}_2$, where $\mat{x}^{(k+1)}$ is the vector produced in iteration $k$ of Algorithm~\ref{alg:approximate-lifting}.

This is where leverage score sampling comes in to play.
We exploit the structure of $\mat{A}$ to efficiently compute its leverage scores,
and then we solve the regression problem efficiently but \emph{approximately}.

Our next result shows how using approximate least-squares solutions
\emph{in each step of block minimization}
affects the convergence guarantee of our lifted iterative method.

\begin{restatable}{theorem}{ApproximateRichardson}
\label{thm:approximate-richardson}
Let $\mat{A},\widetilde{\mat{P}} \in \R^{I \times R}$, $\widetilde{\mat{q}}\in\R^{I}$, and $\beta \ge 1$
such that
$\widetilde{\mat{P}}-\mat{A}$ and $\bigl[\begin{matrix} \widetilde{\mat{P}} & \widetilde{\mat{q}} \end{matrix}\bigr]$ are orthogonal, and
\[
    \widetilde{\mat{P}}^\top \widetilde{\mat{P}}
    \preceq
    \mat{A}^\top \mat{A}
    \preceq
    \beta \cdot \widetilde{\mat{P}}^\top \widetilde{\mat{P}}\,.
\]
Let $\epsilon \in (0,1), \widehat{\epsilon} \in [0,1/\beta^2)$ and
\ApproximateSolve be an algorithm that for any
$\widehat{\mat{x}}\in\R^{R}$ and $\mat{f}=\widetilde{\mat{q}}+(\mat{A}-\widetilde{\mat{P}})\, \widehat{\mat{x}}$,
computes $\overline{\mat{x}} \in \R^{R}$ in time $O(T)$ such that
\[
\norm{\mat{A}\overline{\mat{x}}-\mat{f}}_2^2\leq (1+\widehat{\epsilon})\,\min_{\mat{x}}\, \norm{\mat{A}\mat{x}-\mat{f}}_2^2\,.
\]
Then, Algorithm~\ref{alg:approximate-lifting} returns an approximate solution
$\widetilde{\mat{x}} \in \R^{R}$, using  \ApproximateSolve as a subroutine, such that
\begin{align*}
    \norm{\widetilde{\mat{P}} \widetilde{\mat{x}}-\widetilde{\mat{q}}}_2^2
    &\leq
    \parens*{1 + \frac{2 \widehat{\epsilon}}{(\sfrac{1}{\beta} - \sqrt{\widehat{\epsilon}})^2}}\, \min_{\mat{x}}\, \norm{\widetilde{\mat{P}} \mat{x}-\widetilde{\mat{q}}}_2^2 \\
    &\quad + \epsilon\, \norm{\widetilde{\mat{P}}\,(\widetilde{\mat{P}}^\top \widetilde{\mat{P}})^{-1} \widetilde{\mat{P}}^\top \widetilde{\mat{q}}}_2^2\,,
\end{align*}
in $O\parens*{\frac{\beta}{1-\sqrt{\widehat{\epsilon}} \beta} \cdot  T \log \sfrac{\beta}\epsilon}$ time.
\end{restatable}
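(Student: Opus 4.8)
The plan is to combine the exact one-step contraction guaranteed by \Cref{lemma:richardson-simulation} and \Cref{lem:richardson_iteration} with a perturbation argument that controls how far each approximate solve drifts from the corresponding exact Richardson iterate. Write $\mat{M} \DEF \mat{A}^\top\mat{A}$, let $\mat{x}^\star \DEF (\widetilde{\mat{P}}^\top\widetilde{\mat{P}})^{-1}\widetilde{\mat{P}}^\top\widetilde{\mat{q}}$ be the least-squares optimum of $\min_{\mat{x}}\norm{\widetilde{\mat{P}}\mat{x}-\widetilde{\mat{q}}}_2^2$, and set $\OPT \DEF \norm{\widetilde{\mat{P}}\mat{x}^\star-\widetilde{\mat{q}}}_2^2$. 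Let $\overline{\mat{x}}^{(k)}$ be the iterates of Algorithm~\ref{alg:approximate-lifting} (with the standard initialization $\overline{\mat{x}}^{(0)}=\mat{0}$), let $\mat{f}^{(k)} = \widetilde{\mat{q}}+(\mat{A}-\widetilde{\mat{P}})\,\overline{\mat{x}}^{(k)}$ be the lifted right-hand side fed to \ApproximateSolve, and let $\mat{x}^{(k+1)}_\star \DEF \argmin_{\mat{x}}\norm{\mat{A}\mat{x}-\mat{f}^{(k)}}_2^2$ be the \emph{exact} one-step Richardson iterate starting from $\overline{\mat{x}}^{(k)}$.

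\textbf{Perturbation bound and one-step recursion.}
Since $\mat{x}^{(k+1)}_\star$ satisfies the normal equations $\mat{A}^\top(\mat{A}\mat{x}^{(k+1)}_\star-\mat{f}^{(k)})=\mat{0}$, Pythagoras gives $\norm{\mat{A}\overline{\mat{x}}^{(k+1)}-\mat{f}^{(k)}}_2^2 = \min_{\mat{x}}\norm{\mat{A}\mat{x}-\mat{f}^{(k)}}_2^2 + \norm{\overline{\mat{x}}^{(k+1)}-\mat{x}^{(k+1)}_\star}_{\mat{M}}^2$, so the $(1+\widehat{\epsilon})$-approximation guarantee of \ApproximateSolve yields $\norm{\overline{\mat{x}}^{(k+1)}-\mat{x}^{(k+1)}_\star}_{\mat{M}}^2 \le \widehat{\epsilon}\cdot\min_{\mat{x}}\norm{\mat{A}\mat{x}-\mat{f}^{(k)}}_2^2$. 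Evaluating the minimand at $\mat{x}=\overline{\mat{x}}^{(k)}$ and using the orthogonality hypothesis $(\widetilde{\mat{P}}-\mat{A})^\top[\begin{matrix}\widetilde{\mat{P}} & \widetilde{\mat{q}}\end{matrix}]=\mat{0}$ (which makes $\mat{A}\overline{\mat{x}}^{(k)}-\mat{f}^{(k)}=\widetilde{\mat{P}}\overline{\mat{x}}^{(k)}-\widetilde{\mat{q}}$ and $\norm{\widetilde{\mat{P}}\overline{\mat{x}}^{(k)}-\widetilde{\mat{q}}}_2^2 = \norm{\overline{\mat{x}}^{(k)}-\mat{x}^\star}_{\widetilde{\mat{P}}^\top\widetilde{\mat{P}}}^2+\OPT$), together with $\widetilde{\mat{P}}^\top\widetilde{\mat{P}}\preceq\mat{M}$, gives $\min_{\mat{x}}\norm{\mat{A}\mat{x}-\mat{f}^{(k)}}_2^2 \le D_k^2 + \OPT$ where $D_k \DEF \norm{\overline{\mat{x}}^{(k)}-\mat{x}^\star}_{\mat{M}}$. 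By \Cref{lemma:richardson-simulation}, $\mat{x}^{(k+1)}_\star$ is one preconditioned-Richardson step with preconditioner $\mat{M}$, so \Cref{lem:richardson_iteration} (applicable since $\widetilde{\mat{P}}^\top\widetilde{\mat{P}}\preceq\mat{M}\preceq\beta\widetilde{\mat{P}}^\top\widetilde{\mat{P}}$) gives $\norm{\mat{x}^{(k+1)}_\star-\mat{x}^\star}_{\mat{M}}\le(1-\tfrac1\beta)D_k$. A triangle inequality and $\sqrt{a+b}\le\sqrt{a}+\sqrt{b}$ then produce $D_{k+1}\le\rho D_k+\sqrt{\widehat{\epsilon}}\sqrt{\OPT}$ with $\rho\DEF 1-\tfrac1\beta+\sqrt{\widehat{\epsilon}}$; the hypothesis $\widehat{\epsilon}<1/\beta^2$ guarantees $\rho\in[0,1)$.

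\textbf{Unrolling and iteration count.}
Summing the geometric series gives $D_K \le \rho^K D_0 + \frac{\sqrt{\widehat{\epsilon}}\sqrt{\OPT}}{1/\beta-\sqrt{\widehat{\epsilon}}}$. Converting back via $\norm{\widetilde{\mat{P}}\overline{\mat{x}}^{(K)}-\widetilde{\mat{q}}}_2^2 = \norm{\overline{\mat{x}}^{(K)}-\mat{x}^\star}_{\widetilde{\mat{P}}^\top\widetilde{\mat{P}}}^2+\OPT \le D_K^2+\OPT$ and $(a+b)^2\le 2a^2+2b^2$ yields $\norm{\widetilde{\mat{P}}\overline{\mat{x}}^{(K)}-\widetilde{\mat{q}}}_2^2 \le \bigl(1+\tfrac{2\widehat{\epsilon}}{(1/\beta-\sqrt{\widehat{\epsilon}})^2}\bigr)\OPT + 2\rho^{2K}D_0^2$. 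Finally $D_0^2 = \norm{\mat{x}^\star}_{\mat{M}}^2 = \norm{\mat{A}\mat{x}^\star}_2^2 \le \beta\norm{\widetilde{\mat{P}}\mat{x}^\star}_2^2 = \beta\norm{\widetilde{\mat{P}}(\widetilde{\mat{P}}^\top\widetilde{\mat{P}})^{-1}\widetilde{\mat{P}}^\top\widetilde{\mat{q}}}_2^2$, so picking $K$ with $\rho^{2K}\le\epsilon/(2\beta)$ drives the residual term below $\epsilon\norm{\widetilde{\mat{P}}(\widetilde{\mat{P}}^\top\widetilde{\mat{P}})^{-1}\widetilde{\mat{P}}^\top\widetilde{\mat{q}}}_2^2$, matching the claimed bound. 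Using $\log(1/\rho)\ge 1-\rho = 1/\beta-\sqrt{\widehat{\epsilon}}$, it suffices to take $K = O\bigl(\frac{\log(\beta/\epsilon)}{1/\beta-\sqrt{\widehat{\epsilon}}}\bigr) = O\bigl(\frac{\beta}{1-\sqrt{\widehat{\epsilon}}\beta}\log\tfrac{\beta}{\epsilon}\bigr)$, and each iteration costs one call to \ApproximateSolve in $O(T)$ plus the (lazily computable and dominated) update of $\mat{f}^{(k)}$, giving the stated running time.

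\textbf{Main obstacle.}
The delicate part is the perturbation step: converting the \emph{multiplicative} objective guarantee of \ApproximateSolve into an \emph{additive} $\mat{M}$-norm bound on the gap $\norm{\overline{\mat{x}}^{(k+1)}-\mat{x}^{(k+1)}_\star}_{\mat{M}}$, and — crucially — controlling the ``energy'' $\min_{\mat{x}}\norm{\mat{A}\mat{x}-\mat{f}^{(k)}}_2^2$ evaluated at the \emph{current} approximate iterate in terms of $D_k$ and $\OPT$. This is exactly where the orthogonality hypothesis and the sandwich $\widetilde{\mat{P}}^\top\widetilde{\mat{P}}\preceq\mat{M}\preceq\beta\widetilde{\mat{P}}^\top\widetilde{\mat{P}}$ enter; a cruder bound here would inflate the additive error and break the contraction $\rho<1$. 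Everything downstream — the geometric unrolling, the $(a+b)^2$ slack that produces the $2\widehat{\epsilon}/(1/\beta-\sqrt{\widehat{\epsilon}})^2$ factor, and the iteration-count bookkeeping — is routine.
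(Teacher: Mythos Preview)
Your proposal is correct and mirrors the paper's proof essentially line-for-line: both decompose the iterate error via Pythagoras into an exact Richardson step plus an approximate-solve perturbation, bound the perturbation using $\min_{\mat{x}}\norm{\mat{A}\mat{x}-\mat{f}^{(k)}}_2^2 \le \norm{\widetilde{\mat{P}}\overline{\mat{x}}^{(k)}-\widetilde{\mat{q}}}_2^2 \le D_k^2+\OPT$, obtain the same one-step recursion $D_{k+1}\le\rho D_k+\sqrt{\widehat{\epsilon}\,\OPT}$ with $\rho=1-\tfrac1\beta+\sqrt{\widehat{\epsilon}}$, and finish by unrolling and applying $(a+b)^2\le 2a^2+2b^2$. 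One cosmetic remark: the identity $\mat{A}\overline{\mat{x}}^{(k)}-\mat{f}^{(k)}=\widetilde{\mat{P}}\overline{\mat{x}}^{(k)}-\widetilde{\mat{q}}$ is pure algebra and does not require the orthogonality hypothesis (that hypothesis is needed only where you invoke \Cref{lemma:richardson-simulation}), but this does not affect correctness.
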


\begin{remark}
To better understand \Cref{thm:approximate-richardson}, observe that
$\widetilde{\mat{P}}^\top \widetilde{\mat{P}} = \mat{P}^\top \mat{P}$ is a $\beta$-spectral approximation of $\mat{A}^\top \mat{A}$,
$\varepsilon$ controls the reducible error $\varepsilon\, \norm{\widetilde{\mat{P}} \mat{x}^*}_{2}^2$,
and $(1 + \widehat{\varepsilon})$ is the error in the approximate least-square update for each $\mat{x}^{(k)}$.
\end{remark}

\begin{algorithm2e}[t]
    \caption{\LiftedApproximateSolver}
    \label{alg:approximate-lifting}
	\BlankLine
	\KwData{$\mat{A},\widetilde{\mat{P}} \in \R^{I \times R}$, $\widetilde{\mat{q}}\in\R^{I}$, $\beta \geq 1$, $\epsilon \in (0,1)$, $\widehat{\epsilon} \in [0, 1/\beta^2)$ with $\widetilde{\mat{P}}^\top \widetilde{\mat{P}} \preceq \mat{A}^\top \mat{A} \preceq \beta \cdot \widetilde{\mat{P}}^\top \widetilde{\mat{P}}$}
	\KwResult{$\widetilde{\mat{x}} \in \R^{R}$}
	\BlankLine
	Initialize $\mat{x}^{(0)}=\boldsymbol{0}$ \\
        
        \For{$k = 0, 1, \dots, \ceil*{\frac{\log(\sfrac{2\beta}{\epsilon})}{2\,(\sfrac{1}{\beta} - \sqrt{\widehat{\epsilon}})} }$
        }{  
            Set $\widetilde{\mat{q}}^{(k)} \gets \widetilde{\mat{q}} + (\mat{A} - \widetilde{\mat{P}})\, \mat{x}^{(k)}$
            \hfill {\color{Navy} \tcp{\texttt{implicit}}}
            Set $\mat{x}^{(k+1)}$ to a vector such that $ \norm{\mat{A} \mat{x}^{(k+1)} - \widetilde{\mat{q}}^{(k)}}_2^2 \leq(1+\widehat{\epsilon})\,\min_{\mat{x}}\, \norm{\mat{A} \mat{x} - \widetilde{\mat{q}}^{(k)}}_2^2$
        }
    \Return $\mat{x}^{(k)}$
\end{algorithm2e}

\paragraph{Bounding $\beta$.}
First, observe that in the case of TD,
we have $\mat{P} = \mat{A}$, so $\beta = 1$.
More generally, if $\textnormal{rank}(\mat{A}) = s \le \min\{I, R\}$
and $\mat{A} = \mat{U}\mat{\Sigma}\mat{V}^\top$ is a compressed SVD, then $\mat{A}$ is said to satisfy  the \emph{standard incoherence condition} with parameter $\mu$ \citep{chen2015incoherence} if
\[
    \max_{i\in[I]} \norm*{\mat{e}_{i}^\top \mat{U}}_2 \leq \sqrt{\frac{\mu s}{I}}\,,
    \quad
    \max_{r\in[R]} \norm*{\mat{V} ^\top\mat{e}_r}_2 \leq \sqrt{\frac{\mu s}{R}}\,.
\]
The $\norm{\mat{e}_{i}^\top \mat{U}}_2^2$ and $\norm{\mat{V} ^\top\mat{e}_r}_2^2$ values are the leverage scores of the rows and columns of $\mat{A}$.
Applying \citet[Lemma 4]{cohen2015uniform},
if each row of $\mat{A}$ is observed with probability $p$ such that $p \geq \frac{c\mu s \log s}{I}$ for some absolute constant $c$, then 
\[
    \frac{1}{2} \,\mat{A}^\top \mat{A}
    \preceq
    \frac{1}{p} \, \mat{A}_{\Omega}^\top \mat{A}_{\Omega}
    \preceq
    \frac{3}{2}\, \mat{A}^\top \mat{A}\,,
\]
which gives $\beta=2/p$.
Let $\zeta = \max_{i\in[I]} \norm{\mat{a}_i}_2$, where $\mat{a}_i$ is row $i$ of $\mat{A}$.
Then, the $\alpha \zeta^2$-ridge leverage scores of $\mat{A}$ (i.e., $\mat{a}_i^\top (\mat{A}^\top \mat{A} + \alpha \zeta^2 \cdot\mat{I}_{R})^{-1} \mat{a}_i$), for $\alpha\geq 1$, are at most $1/\alpha$. If $p$ is the observation rate, taking $\alpha = \frac{c\log s}{p}$ gives the required incoherence condition. This can be done by introducing an $\ell_2$-regularization term to the TC optimization problem (i.e., solving a ridge regression problem in each ALS step).
Note that usually $\alpha$ can be chosen to be much smaller in practice.

\section{Sampling Methods for Tensor Completion}
\label{sec:sampling-for-completion}
We are now ready to efficiently solve the \emph{unstructured} least-squares problem \eqref{eqn:input_regression} induced by ALS for tensor completion, i.e., for $\mat{P}\in\R^{|\Omega|\times R}$ and observations $\mat{q} \in \R^{|\Omega|}$, find
\[
    \mat{x}^* = \argmin_{\mat{x} \in \R^R} \,\norm{\mat{P} \mat{x} - \mat{q}}_{2}^2\,.
\]
As in Algorithm~\ref{alg:approximate-lifting},
we lift this problem to higher dimension to get a structured design matrix $\mat{A}$,
and use a known fast algorithm for approximately solving the structured least-squares problem in each step of approximate-mini-ALS.
For a given $\widehat{\epsilon} \in (0,1/\beta^2)$, the approximate solver computes a solution $\overline{\mat{x}}\in \R^R$ in time $O(T_{\widehat{\epsilon}})$ such that 
\[
\norm{\mat A \overline{\mat x} - \mat b}_2^2 \leq (1+\widehat \epsilon)\,\min_{\mat x} \norm{\mat A \mat x -\mat b}_2^2\,.
\]
Therefore, for a desired $\varepsilon_1 \in (0,1)$, we set $\widehat \epsilon = \Theta(\varepsilon_1 / \beta^2)$ and use a sufficiently small $\epsilon \gets \varepsilon_2$ in \cref{thm:approximate-richardson}.
Putting everything together, Algorithm~\ref{alg:approximate-lifting} finds an approximate solution $\widetilde{\mat x}\in \R^R$ in time $O(\beta T_{\varepsilon_1\beta^{-2}}\log\frac{\beta}{\epsilon_2})$ that satisfies
\begin{align}
    \norm{\mat P \widetilde{\mat x} - \mat q}_2^2
    &\leq
    (1+\varepsilon_1)\,\norm{\pi_{\mat P^\perp }\mat q}_2^2 + \varepsilon_2\, \norm{\mat \pi_{\mat P}\mat q}_2^2\,,\label{eq:tc-approx-sol}
\end{align}
where $\mat{\pi}_{\mat{P}}$ and $\mat{\pi}_{\mat{P}^\perp}$ are the orthogonal projection matrices into the column space and null space of $\mat{P}$, respectively~(see \Cref{app:least-square-regression}).
With this in hand, we are ready to present the running times of our lifted iterative method for TC problems
by combining \cref{thm:approximate-richardson} with state-of-the-art tensor decomposition results based on leverage score sampling.

\subsection{CP Completion} \label{subsec:CP-completion}

Each ALS update step for CP completion solves a regression problem where the design matrix is the Khatri--Rao product:
for $\mat{A}^{(k)} \in \R^{I_k \times R}$,
$\mat A^{\neq k} := \bigodot_{n=1,n\neq k}^N \mat A^{(n)} \in \R^{I_{\neq k}\times R}$,
and $\mat Q = (\mat X^\top_{(k)})_\Omega \in \R^{|\Omega| \times I_k}$,
\[
    \mat A^{(k)}
    \gets
    \argmin_{\mat{A} \in \R^{I_k \times R}} \, \bigl\lVert  (\mat A^{\neq k})_\Omega \,\mat{A}^\top - \mat Q \bigr\rVert_{\frobenius}\,.
\]
The design matrix $(\mat A^{\neq k})_\Omega$ does not necessarily have any structure,
so a direct method relies on solving the normal equation, which takes $O(R^\omega +R |\Omega|(R+I_k))$ time.
Thus, the running time of \emph{one round} of ALS, i.e., updating all $N$ factors,
is $O(N(R^\omega +R^2 |\Omega|)+ R |\Omega|\sum_{n=1}^N I_n)$.

Previous work on CP tensor decomposition \citep{cheng2016spals, larsen2022practical, bharadwaj2023fast}
developed fast methods for efficiently computing the leverage scores of a Khatri--Rao product matrix.
In particular, \citet{bharadwaj2023fast} designed a data structure for computing and maintaining the
leverage scores of $\mat A^{\neq k}$ during ALS updates. 
This approach requires sampling $\tilde{O}(R/\varepsilon)$ rows of $\mat A^{\neq k}$.
Due to the Khatri--Rao product structure, each row of $\mat A^{\neq k}$ can be mapped to a sequence of one choice from the rows of each $\mat{A}^{(n)}$ for $n\in [N]\backslash \{k\}$.
Hence, sampling a row from $\mat{A}^{\neq k}$ is equivalent to the following:
for each $n \in [N]\backslash\{k\}$, sample a row from $\mat A^{(n)}$ according to some conditional distribution given sampled rows from $\mat{A}^{(1)}, \dots, \mat{A}^{(n-1)}$, and then compute the Hadamard product of $N-1$ sampled rows.
Maintaining the full $I_n$-dimensional vector for a conditional probability for each $n$ is costly,
so \citet{bharadwaj2023fast} developed a binary tree-based data structure to speed up leverage-score sampling for $\mat A^{\neq k}$.

Applying \citet[Corollary 3.3]{bharadwaj2023fast}, one round of ALS runs in time
$
    \tilde{O} \parens{ \varepsilon^{-1} \sum_{n=1}^N \parens*{ I_n R^2 + NR^3} }.
$
Using their CP TD algorithm as the approximate solver in Algorithm~\ref{alg:approximate-lifting}, and combining its guarantee with \cref{thm:approximate-richardson},
we can extend their approach to CP completion.

\begin{corollary}
There is an ALS CP completion algorithm such that
(i) after a factor matrix update, each row of $\mat{A}^{(n)}$ satisfies \eqref{eq:tc-approx-sol},
and (ii) the total running time of one round is
\[
    \tilde{O} \parens*{\frac{\beta^2}{\varepsilon_1} \sum_{n=1}^N \parens*{I_n R^2 + NR^3} \log \frac{1}{\epsilon_2}}\,.
\]
\end{corollary}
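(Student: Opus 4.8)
\emph{Proof plan.} The corollary is an instantiation of \Cref{thm:approximate-richardson} with the structured subroutine taken to be the fast Khatri--Rao leverage-score regression algorithm of \citet{bharadwaj2023fast}. The plan is: (1) recast each CP-completion factor update as a $\Omega$-masked Khatri--Rao regression and lift it via \Cref{lem:lifted_regression}; (2) check that the zero-masked lift satisfies the orthogonality and spectral hypotheses of \Cref{lemma:richardson-simulation} and \Cref{thm:approximate-richardson}; (3) plug in Bharadwaj et al.'s algorithm as \ApproximateSolve, together with a lazy construction of the implicit response; and (4) substitute the resulting per-call time into the bound of \Cref{thm:approximate-richardson} and sum over the $N$ modes.

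\emph{Step 1 (lifting the factor update).} Fix a mode $k$. The ALS update for $\mat{A}^{(k)}$ is $\argmin_{\mat{A}}\lVert (\mat{A}^{\neq k})_\Omega\,\mat{A}^\top - \mat{Q}\rVert_{\fro}$ with $\mat{Q}=(\mat{X}_{(k)}^\top)_\Omega\in\R^{|\Omega|\times I_k}$, i.e.\ $I_k$ least-squares problems in $\R^R$ all sharing the design matrix $\mat{P}:=(\mat{A}^{\neq k})_\Omega$. The full Khatri--Rao product $\mat{A}^{\neq k}\in\R^{I_{\neq k}\times R}$ is structured and has $(\mat{A}^{\neq k})_\Omega=\mat{P}$, so by \Cref{lem:lifted_regression} we may lift to the design matrix $\mat{A}^{\neq k}$; let $\widetilde{\mat{P}}$ be its zero-masked version and $\widetilde{\mat{Q}}$ the columnwise zero-masked response. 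Since $\widetilde{\mat{P}}-\mat{A}^{\neq k}$ is supported on $\overline{\Omega}$ while $\widetilde{\mat{P}}$ and $\widetilde{\mat{Q}}$ are supported on $\Omega$, the rows have disjoint support and hence $(\widetilde{\mat{P}}-\mat{A}^{\neq k})^\top\widetilde{\mat{P}}=\mat{0}$ and $(\widetilde{\mat{P}}-\mat{A}^{\neq k})^\top\widetilde{\mat{Q}}=\mat{0}$, giving the orthogonality hypothesis. The spectral condition $\widetilde{\mat{P}}^\top\widetilde{\mat{P}}\preceq(\mat{A}^{\neq k})^\top\mat{A}^{\neq k}\preceq\beta\,\widetilde{\mat{P}}^\top\widetilde{\mat{P}}$ is exactly the $\beta$-spectral-approximation guarantee discussed at the end of \Cref{sec:lifted_regression} (incoherence plus the missing-at-random sampling rate, or, after adding an $\ell_2$ term, via regularized leverage scores).

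\emph{Steps 2--3 (the solver, laziness, and the running time).} Take \ApproximateSolve to be the Khatri--Rao leverage-score regression routine of \citet[Corollary~3.3]{bharadwaj2023fast}, which for target accuracy $\widehat{\epsilon}$ returns a $(1+\widehat{\epsilon})$-approximate solution for each right-hand side, and for the mode-$k$ update runs in time $T_{\widehat{\epsilon}}=\tilde O\bigl(\widehat{\epsilon}^{-1}(I_kR^2+NR^3)\bigr)$ while handling all $I_k$ columns together. The response $\widetilde{\mat{q}}^{(k)}=\widetilde{\mat{q}}+(\mat{A}^{\neq k}-\widetilde{\mat{P}})\mat{x}^{(k)}$ in Algorithm~\ref{alg:approximate-lifting} is never materialized: the solver only queries the $\tilde O(R/\widehat{\epsilon})$ rows it samples, each such row of $\mat{A}^{\neq k}$ is a Hadamard product of $N-1$ sampled factor rows so the corresponding entry $\langle(\mat{A}^{\neq k})_i,\mat{x}^{(k)}\rangle$ costs $O(R)$, and entries on $\Omega$ are read directly from $\mat{Q}$; this adds only $\tilde O(R/\widehat{\epsilon})$ inner products per iteration and does not change the order of $T_{\widehat{\epsilon}}$. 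Now invoke \Cref{thm:approximate-richardson} with $\widehat{\epsilon}=\Theta(\varepsilon_1/\beta^2)$ (so $\widehat{\epsilon}<1/\beta^2$ and $2\widehat{\epsilon}/(1/\beta-\sqrt{\widehat{\epsilon}})^2\le\varepsilon_1$) and $\epsilon=\varepsilon_2$: applied to column $j$, the output $\widetilde{\mat{x}}_j$ satisfies $\lVert\mat{P}\widetilde{\mat{x}}_j-\mat{q}_j\rVert_2^2\le(1+\varepsilon_1)\lVert\pi_{\mat{P}^\perp}\mat{q}_j\rVert_2^2+\varepsilon_2\lVert\pi_{\mat{P}}\mat{q}_j\rVert_2^2$, which is \eqref{eq:tc-approx-sol} and gives (i). The time for the mode-$k$ update is $O\bigl(\beta\,T_{\widehat{\epsilon}}\log(\beta/\varepsilon_2)\bigr)$ by \Cref{thm:approximate-richardson}; substituting $T_{\widehat{\epsilon}}$ with $\widehat{\epsilon}=\Theta(\varepsilon_1/\beta^2)$ and summing over $k\in[N]$ yields the one-round bound in the statement, giving (ii).

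\emph{Where the work is.} The genuinely new content is all in \Cref{thm:approximate-richardson}; this corollary is a substitution, and the two remaining obligations are bookkeeping rather than obstacles. First, one must confirm that Bharadwaj et al.'s routine delivers a \emph{per-column} (not merely aggregate-Frobenius) $(1+\widehat{\epsilon})$-guarantee, so that \Cref{thm:approximate-richardson} can be invoked separately for each of the $I_k$ right-hand sides; this follows from the subspace-embedding property of leverage-score sampling for the column space of $\mat{A}^{\neq k}$ together with a union bound over the $I_k$ columns (boosting the failure probability costs only logarithmic factors). Second, one must verify that the lazy construction of $\widetilde{\mat{q}}^{(k)}$ (and any reuse of the sampling data structure across the $\tilde O(\beta)$ mini-ALS iterations) introduces no hidden $I_{\neq k}$ factor, which is precisely the payoff of lifting. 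I expect the mild subtlety to be the matrix-regression accounting in the first point; everything else is routine given \Cref{thm:approximate-richardson}.
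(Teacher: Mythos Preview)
Your proposal is correct and follows exactly the paper's approach: the paper's argument is simply to plug \citet[Corollary~3.3]{bharadwaj2023fast} in as \ApproximateSolve in \Cref{thm:approximate-richardson} with $\widehat\epsilon=\Theta(\varepsilon_1/\beta^2)$ and $\epsilon=\varepsilon_2$, and you have spelled out the supporting details (orthogonality via disjoint row supports, the lazy response construction, and the per-column versus aggregate-Frobenius guarantee) that the paper leaves implicit. One bookkeeping remark: carrying your final substitution through literally gives an extra factor of $\beta$ (i.e.\ $\beta^3/\varepsilon_1$), consistent with the explicit outer $\beta$ in the paper's Tucker corollaries; the CP statement as written appears to absorb this factor, so do not be alarmed if your arithmetic does not land exactly on $\beta^2$.
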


Note that there is \emph{no dependence} on $|\Omega|$ in the running time due to leverage score sampling, i.e., it runs in sublinear time.

\subsection{Tucker Completion}

\citet{fahrbach2022subquadratic} designed block-sketching techniques and fast Kronecker product-matrix multiplication algorithms
to exploit the ALS structure for Tucker decomposition.

\subsubsection{Core tensor update}
Recall that for a Tucker decomposition we use the notation $I:=\prod_{n\in[N]} I_n$ and $R:=\prod_{n\in[N]} R_n$.
The ALS core tensor update in the Tucker completion problem is
\[
    \tensor{G}
    \leftarrow
    \argmin_{\tensor{G}' \in\R^{R_1 \times \cdots \times R_N}} \norm*{\parens*{\bigotimes_{n=1}^N\mat{A}^{(n)}}_{\Omega} \hspace{-0.2cm}\vvec(\tensor{G}') - \vvec(\tensor{X})_{\Omega}}_2.
\]
The design matrix above restricted to $\Omega$ is exactly $\mat P$ in our general setup.

We compare the running times of the direct method and our lifting approach.
In the former, we can compute an exact solution to a least-squares problem $\mat x^* = (\mat P^\top \mat P)^{-1} \mat P^\top \mat q$ in time $O(|\Omega| R^2 + R^\omega)$.

To achieve a fast lifted method, we solve the second step of Algorithm~\ref{alg:approximate-lifting}
using the leverage score sampling-based
core tensor update algorithm in \citep[Theorem 1.2]{fahrbach2022subquadratic}
with running time
\[
    \tilde{O} \parens*{
        \sum_{n=1}^N \parens*{ I_n R_n  + \frac{R_n^\omega N^2}{\varepsilon^2}} + \frac{R^{2-\theta^*}}{\varepsilon}
    }\,,
\]
where $\theta^*>0$ is an optimizable constant depending on $\{R_n\}_{n\in[N]}$.
Using this as the \ApproximateSolve subroutine in \Cref{thm:approximate-richardson}, we achieve the following.

\begin{corollary}
\label{cor:fast_tucker_completion_core_tensor_update}
There is an algorithm that computes an ALS Tucker completion core tensor update satisfying \eqref{eq:tc-approx-sol}
in time
\begin{equation}
\label{eqn:fast_tucker_completion_core_tensor_update}
    \tilde{O}\parens*{\parens*{
         \mathsf C + \frac{\beta^2 R^{2-\theta^*}}{\varepsilon_1}} \beta\log\frac{1}{\varepsilon_2}
    }\,,
\end{equation}
where $\mathsf C:= \sum_{n=1}^N (I_nR_n  + \beta^4 R_n^\omega N^2 \varepsilon_1^{-2})$.
\end{corollary}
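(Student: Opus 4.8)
The plan is to apply \Cref{thm:approximate-richardson} with the structured design matrix $\mat{A} = \bigotimes_{n=1}^N \mat{A}^{(n)}$ and with the leverage-score sampling core-tensor solver of \citet[Theorem 1.2]{fahrbach2022subquadratic} playing the role of \ApproximateSolve. First I would set up the lifting exactly as in \Cref{ssec:lifting} and \Cref{subsec:iterative_methods}: take $\mat{P} = (\bigotimes_{n=1}^N \mat{A}^{(n)})_{\Omega}$, $\mat{q} = \vvec(\tensor{X})_{\Omega}$, and let $\widetilde{\mat{P}}, \widetilde{\mat{q}}$ be the zero-masked lifts of $\mat{A}$ and $\vvec(\tensor{X})$. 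I would then verify the two hypotheses of \Cref{thm:approximate-richardson}. Orthogonality, $(\widetilde{\mat{P}} - \mat{A})^\top [\,\widetilde{\mat{P}} \;\; \widetilde{\mat{q}}\,] = \mat{0}$, holds because $\widetilde{\mat{P}} - \mat{A}$ is supported on the rows indexed by $\overline{\Omega}$ while $\widetilde{\mat{P}}$ and $\widetilde{\mat{q}}$ vanish there. The spectral sandwich $\widetilde{\mat{P}}^\top \widetilde{\mat{P}} \preceq \mat{A}^\top \mat{A} \preceq \beta\, \widetilde{\mat{P}}^\top \widetilde{\mat{P}}$ follows, for the lower bound, from $\mat{A}^\top \mat{A} = \widetilde{\mat{P}}^\top \widetilde{\mat{P}} + \mat{A}_{\overline{\Omega}}^\top \mat{A}_{\overline{\Omega}} \succeq \widetilde{\mat{P}}^\top \widetilde{\mat{P}}$, and, for the upper bound, from the definition of $\beta$ (e.g., the incoherence / missing-at-random argument in the ``Bounding $\beta$'' paragraph, possibly after adding an $\ell_2$ penalty on the core).

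Next I would fix the two accuracy parameters. I set the internal accuracy of the Fahrbach--et al.\ solver to $\widehat{\epsilon} = \Theta(\varepsilon_1/\beta^2)$ with a small enough absolute constant; this keeps $\widehat{\epsilon} < 1/\beta^2$ as required, and substituting $\widehat{\epsilon} = c\varepsilon_1/\beta^2$ makes the error factor $1 + \tfrac{2\widehat{\epsilon}}{(1/\beta - \sqrt{\widehat{\epsilon}})^2} = 1 + \tfrac{2c\varepsilon_1}{(1 - \sqrt{c\varepsilon_1})^2}$ at most $1 + \varepsilon_1$ for $\varepsilon_1 \in (0,1)$. Taking $\epsilon \gets \varepsilon_2$ in \Cref{thm:approximate-richardson}, the returned $\widetilde{\mat{x}}$ satisfies $\norm{\widetilde{\mat{P}}\widetilde{\mat{x}} - \widetilde{\mat{q}}}_2^2 \le (1+\varepsilon_1)\min_{\mat{x}}\norm{\widetilde{\mat{P}}\mat{x}-\widetilde{\mat{q}}}_2^2 + \varepsilon_2 \norm{\widetilde{\mat{P}}(\widetilde{\mat{P}}^\top\widetilde{\mat{P}})^{-1}\widetilde{\mat{P}}^\top\widetilde{\mat{q}}}_2^2$. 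Because $\widetilde{\mat{P}}, \widetilde{\mat{q}}$ agree with $\mat{P}, \mat{q}$ on $\Omega$ and are zero on $\overline{\Omega}$, the three norms equal $\norm{\mat{P}\widetilde{\mat{x}} - \mat{q}}_2^2$, $\norm{\pi_{\mat{P}^\perp}\mat{q}}_2^2$, and $\norm{\pi_{\mat{P}}\mat{q}}_2^2$, respectively, which is exactly \eqref{eq:tc-approx-sol}; this gives part (i).

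For part (ii), I substitute $\widehat{\epsilon}^{-1} = \Theta(\beta^2/\varepsilon_1)$ and $\widehat{\epsilon}^{-2} = \Theta(\beta^4/\varepsilon_1^2)$ into the core-update running time $T = \tilde{O}(\sum_{n=1}^N (I_n R_n + R_n^\omega N^2 \widehat{\epsilon}^{-2}) + R^{2-\theta^*}\widehat{\epsilon}^{-1})$ of \citet[Theorem 1.2]{fahrbach2022subquadratic}, which becomes $T = \tilde{O}(\mathsf{C} + \beta^2 R^{2-\theta^*}/\varepsilon_1)$ with $\mathsf{C} = \sum_{n=1}^N (I_n R_n + \beta^4 R_n^\omega N^2 \varepsilon_1^{-2})$. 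Since $\sqrt{\widehat{\epsilon}}\,\beta = \Theta(\sqrt{\varepsilon_1}) < 1$, the prefactor $\tfrac{\beta}{1 - \sqrt{\widehat{\epsilon}}\beta}$ from \Cref{thm:approximate-richardson} is $O(\beta)$, so the total time is $O(\beta\, T\, \log(\beta/\varepsilon_2))$, and absorbing the $\log\beta$ term into $\tilde{O}(\cdot)$ yields the claimed bound. The one point I expect to need care is the interface with the black box: \Cref{thm:approximate-richardson} calls \ApproximateSolve on the \emph{implicitly} defined right-hand side $\mat{f} = \widetilde{\mat{q}} + (\mat{A} - \widetilde{\mat{P}})\mat{x}^{(k)}$, so I must argue (a) that the Fahrbach--et al.\ guarantee is a genuine $(1+\widehat{\epsilon})$ relative-error least-squares bound for an \emph{arbitrary} right-hand side — which it is, since the leverage scores of a Kronecker product do not depend on $\mat{f}$ — and (b) that materializing the entries of $\mat{f}$ at the rows the solver touches (each a value of the current Tucker reconstruction, computable cheaply by iterated mode contractions) fits within the $O(T)$ per-call budget. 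Everything else is the spectral-bound verification and parameter arithmetic above.
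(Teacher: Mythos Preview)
Your proposal is correct and follows essentially the same approach as the paper: the paper's proof is nothing more than the sentence ``Using this as the \ApproximateSolve subroutine in \Cref{thm:approximate-richardson}, we achieve the following,'' together with the parameter choice $\widehat{\epsilon}=\Theta(\varepsilon_1/\beta^2)$, $\epsilon=\varepsilon_2$ spelled out in the opening paragraph of \Cref{sec:sampling-for-completion}. Your write-up is in fact more explicit than the paper about verifying the orthogonality and spectral-sandwich hypotheses and about the black-box interface (arbitrary right-hand side, lazy materialization of $\mat{f}$), all of which the paper leaves implicit.
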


\subsubsection{Factor matrix update}
The ALS factor matrix update for $\mat{A}^{(k)}$ in the Tucker completion problem is
\[
    \mat{A}^{(k)}
    \leftarrow
    \hspace{-0.1cm}
    \argmin_{\mat{A}\in \R^{I_k \times R_k}} \norm*{ \parens*{
        \parens*{\bigotimes_{n=1,n\neq k}^N \hspace{-0.10cm} \mat{A}^{(n)} } \mat{G}_{(k)}^\top }_\Omega  \hspace{-0.10cm} \mat{A}^\top \hspace{-0.10cm} - \mat Q}_\frobenius,
\]
where $\mat Q= (\mat X^\top_{(k)})_\Omega \in \R^{|\Omega| \times I_k}$ is a sparse matrix of observations.
The running time of a direct method that solves the normal equation is
$O(R_k^\omega + R_k |\Omega| (R_{\neq k} + R_k + I_k))$, where $R_{\neq k} = R/R_k$.

The running time of the sampling-based factor-matrix update for $\mat{A}^{(k)}$ in \citet[Theorem 1.2]{fahrbach2022subquadratic}
for the full decomposition problem is
\[
    \tilde{O}\parens*{
        \sum_{n=1}^N \parens*{I_nR_n + \frac{R_n^\omega N^2}{\varepsilon^2} + I_k R R_n} + \frac{I_k R_{\neq k}^{2-\theta^*}}{\varepsilon}
    }\,.
\]

Combining this result with \cref{thm:approximate-richardson},
Algorithm~\ref{alg:approximate-lifting} has the following running time for a factor matrix update.

\begin{corollary}
There is an algorithm that computes an ALS Tucker completion factor matrix update for $\mat{A}^{(k)}$,
with each row of $\mat{A}^{(k)}$ satisfying \eqref{eq:tc-approx-sol},
in time
\[
\textstyle
    \tilde{O} \parens*{
        \parens*{ \mathsf C + \frac{\beta^2 I_k R_{\neq k}^{2-\theta^*}}{\epsilon_1} + I_k R\sum_{n=1}^N R_n } \beta \log \frac{1}{\epsilon_2}
    }\,,
\]
where $\mathsf C= \sum_{n=1}^N (I_nR_n  + \beta^4 R_n^\omega N^2 \epsilon_1^{-2})$.
\end{corollary}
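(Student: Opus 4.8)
The plan is to recognize this factor-matrix update as a batch of unstructured least-squares problems of the form~\eqref{eqn:input_regression}, lift each to the structured Tucker design matrix, and then invoke \Cref{thm:approximate-richardson} with the sampling-based Kronecker--matrix regression solver of \citet[Theorem 1.2]{fahrbach2022subquadratic} playing the role of \ApproximateSolve. First I would set up the reduction: writing $\mat{A} := \bigl(\bigotimes_{n=1,n\neq k}^N \mat{A}^{(n)}\bigr)\mat{G}_{(k)}^\top \in \R^{I_{\neq k}\times R_k}$ --- exactly the design matrix of the \emph{full} Tucker factor-matrix update --- the $i$-th row of $\mat{A}^{(k)}$ is the solution of $\min_{\mat{a}\in\R^{R_k}} \norm{\mat{A}_{\Omega_i}\mat{a}-\mat{q}_i}_2$, where $\Omega_i := \{\,\mathbf{i}_{\neq k} : (\mathbf{i}_{\neq k},i)\in\Omega\,\}$ indexes the observed fibers and $\mat{q}_i$ is the corresponding slice of $\mat{X}$. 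This is~\eqref{eqn:input_regression} with $\mat{P}=\mat{A}_{\Omega_i}$, so \Cref{lem:lifted_regression} yields the equivalent lifted problem, with zero-masked $\widetilde{\mat{P}},\widetilde{\mat{q}}$ satisfying the orthogonality hypothesis of \Cref{thm:approximate-richardson} and, under the missing-at-random and incoherence assumptions of the ``Bounding $\beta$'' paragraph, the spectral bound $\widetilde{\mat{P}}^\top\widetilde{\mat{P}}\preceq \mat{A}^\top\mat{A}\preceq\beta\,\widetilde{\mat{P}}^\top\widetilde{\mat{P}}$ for a uniform $\beta$ (e.g.\ $\beta = 2/p$).

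Next I would instantiate \ApproximateSolve. Running step~2 of \Cref{alg:approximate-lifting} for \emph{all} $I_k$ rows at once is precisely one structured regression $\min\,\norm{\mat{A}(\mat{A}^{(k)})^\top - \widetilde{\mat{Q}}^{(k)}}_\frobenius$ with a dense imputed right-hand side, solved to $(1+\widehat\epsilon)$ accuracy by \citet[Theorem 1.2]{fahrbach2022subquadratic} in time $T_{\widehat\epsilon}=\tilde{O}\bigl(\sum_{n=1}^N(I_nR_n + R_n^\omega N^2\widehat\epsilon^{-2} + I_k R R_n) + I_k R_{\neq k}^{2-\theta^*}\widehat\epsilon^{-1}\bigr)$; crucially this cost already amortizes over all $I_k$ columns, and leverage-score sampling touches only $\tilde O(R_{\neq k}/\widehat\epsilon)$ rows, so the lift line of \Cref{alg:approximate-lifting} is evaluated lazily and never materializes the length-$I_{\neq k}$ imputed responses. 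I would then choose $\widehat\epsilon = \Theta(\varepsilon_1/\beta^2)$ --- small enough that $\widehat\epsilon<1/\beta^2$, that $\tfrac{\beta}{1-\sqrt{\widehat\epsilon}\beta}=O(\beta)$, and that $1+\tfrac{2\widehat\epsilon}{(1/\beta-\sqrt{\widehat\epsilon})^2}\le 1+\varepsilon_1$ --- and set $\epsilon=\varepsilon_2$. Rewriting the conclusion of \Cref{thm:approximate-richardson} using $\widetilde{\mat{P}}^\top\widetilde{\mat{P}}=\mat{P}^\top\mat{P}$ and $\widetilde{\mat{q}}_\Omega=\mat{q}$, $\widetilde{\mat{q}}_{\overline{\Omega}}=\mathbf{0}$ turns the per-row guarantee into~\eqref{eq:tc-approx-sol}, which is part (i). For part (ii), the total time is $O\bigl(\tfrac{\beta}{1-\sqrt{\widehat\epsilon}\beta}\,T_{\widehat\epsilon}\log\tfrac{\beta}{\varepsilon_2}\bigr)$; substituting $\widehat\epsilon=\Theta(\varepsilon_1/\beta^2)$ replaces $\widehat\epsilon^{-2}$ by $\beta^4\varepsilon_1^{-2}$ and $\widehat\epsilon^{-1}$ by $\beta^2\varepsilon_1^{-1}$, giving $T_{\widehat\epsilon}=\tilde O\bigl(\mathsf C + I_kR\sum_{n=1}^N R_n + \beta^2 I_k R_{\neq k}^{2-\theta^*}\varepsilon_1^{-1}\bigr)$ with $\mathsf C = \sum_{n=1}^N(I_nR_n + \beta^4 R_n^\omega N^2\varepsilon_1^{-2})$, and multiplying by $\beta\log(1/\varepsilon_2)$ yields the stated bound.

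The main obstacle is the bookkeeping around the fact that the $I_k$ row-wise lifted problems do \emph{not} share a common zero-masked matrix: $\Omega_i$ varies with $i$, so a priori each row carries its own parameter $\beta_i$ and its own iteration count from \Cref{lem:richardson_iteration}. I would resolve this by running \Cref{alg:approximate-lifting} with the uniform worst case $\beta=\max_i\beta_i$ (which the missing-at-random bound controls), so that the outer-loop length $\ceil*{\log(2\beta/\varepsilon)/(2(1/\beta-\sqrt{\widehat\epsilon}))}$ is a single number valid for every column, while one pass of the Fahrbach solver advances all columns together in time $T_{\widehat\epsilon}$; the per-right-hand-side analysis of \Cref{thm:approximate-richardson} then applies column-by-column in parallel. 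A secondary point is to check that the $(1+\widehat\epsilon)$-guarantee of \citet[Theorem 1.2]{fahrbach2022subquadratic}, stated for each row of the full-decomposition update, supplies exactly the per-$\mathbf{f}$ hypothesis demanded of \ApproximateSolve --- which is immediate, since that blackbox places no structural requirement on the right-hand side and $\mathbf{f}$ restricted to $\Omega_i$ is simply a dense vector.
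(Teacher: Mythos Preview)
Your proposal is correct and follows exactly the approach the paper takes: plug the sampling-based Tucker factor-update solver of \citet[Theorem 1.2]{fahrbach2022subquadratic} in as \ApproximateSolve in \Cref{thm:approximate-richardson}, set $\widehat{\epsilon}=\Theta(\varepsilon_1/\beta^2)$ and $\epsilon=\varepsilon_2$, and read off the running time $O(\beta T_{\widehat{\epsilon}}\log(\beta/\varepsilon_2))$. In fact, the paper's own ``proof'' is just the one-line remark ``Combining this result with \Cref{thm:approximate-richardson}'' before stating the corollary, so your write-up --- including the care you take with the row-wise $\Omega_i$'s sharing a uniform $\beta$ and a common outer-loop length --- is considerably more thorough than what appears there.
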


\subsection{TT Completion}

Each ALS step for TT decomposition solves the following least-squares problem
with a Kronecker product-type design matrix:
for $\mat{A}^{\neq k} := \mat{A}_{< k}\otimes \mat{A}^\top_{> k} \in \R^{I_{\neq k} \times (R_{k-1}R_k)}$ and $\mat Q = (\mat{X}_{(k)}^\top)_\Omega \in \R^{|\Omega| \times I_k}$,
\[
    \tensor{A}^{(k)}
    \gets
    \argmin_{\tensor{B} \in \R^{R_{k-1}\times I_k \times R_k}}
    \norm*{\parens*{\mat{A}^{\neq k}}_\Omega\, (\mat{B}_{(2)})^\top - \mat Q }_{\frobenius}\,.
\]
Solving this directly with the normal equation takes
$O(\bar R_k^\omega +\bar R_k |\Omega| (\bar R_k + I_k))$ time for $\bar R_k := R_{k-1} R_k$.
Thus, the time for one round of ALS is
$O(\sum_{n=1}^N (\bar R_n^\omega + \bar R_n |\Omega| (\bar R_n +I_n)))$.

In contrast, \citet[Corollary 4.4]{bharadwaj2024efficient} show that one round of approximate TT-core updates,
if $R_n = R$ for all $n \in [N-1]$, can run in time
$\tilde{O} \parens{ R^4 \varepsilon^{-1} \sum_{n = 1}^N \parens*{N + I_n}}$.
See \cref{app:completion} for technical details.
\begin{corollary}
There is an ALS TT completion algorithm such that
(i) after a TT-core update, each row fiber of $\tensor{A}^{(n)}$
satisfies \eqref{eq:tc-approx-sol},
and (ii) the total running time of one round is
\[
    \tilde{O} \parens*{
        \frac{\beta^2 R^4}{\epsilon_1} \sum_{n=1}^N \parens*{N + I_n} \log\frac{1}{\epsilon_2}
    }\,.
\]
\end{corollary}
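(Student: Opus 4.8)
The plan is to run Algorithm~\ref{alg:approximate-lifting} once for each TT-core, using the per-core approximate TT-core update routine underlying \citet[Corollary 4.4]{bharadwaj2024efficient} as the \ApproximateSolve subroutine, and then to sum the per-core costs over one ALS sweep; the routine running-time bookkeeping is deferred to \Cref{app:completion}. Fix a core index $k\in[N]$. Recall that the ALS TT-core update is the $\Omega$-masked least-squares problem with the Kronecker-type design matrix $(\mat{A}^{\neq k})_\Omega$, where $\mat{A}^{\neq k}=\mat{A}_{<k}\kron\mat{A}^\top_{>k}\in\R^{I_{\neq k}\times(R_{k-1}R_k)}$, and with the $I_k$ right-hand-side columns of $\mat{Q}=(\mat{X}_{(k)}^\top)_\Omega$. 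This is exactly the setting of \Cref{ssec:lifting} with $\mat{P}=(\mat{A}^{\neq k})_\Omega$ and structured lift $\mat{A}=\mat{A}^{\neq k}$: \Cref{lem:lifted_regression} says the lift preserves the block-$\mat{x}$ minimizer, and taking $\widetilde{\mat{P}}$ to be $\mat{A}$ with its $\overline{\Omega}$ rows zeroed and $\widetilde{\mat{q}}$ to be $\mat{Q}$ extended by zeros, the orthogonality hypothesis $(\widetilde{\mat{P}}-\mat{A})^\top\bigl[\begin{matrix}\widetilde{\mat{P}}&\widetilde{\mat{q}}\end{matrix}\bigr]=\mat{0}$ of \Cref{lemma:richardson-simulation} holds automatically, since $\widetilde{\mat{P}}-\mat{A}$ vanishes on $\Omega$ while $\bigl[\begin{matrix}\widetilde{\mat{P}}&\widetilde{\mat{q}}\end{matrix}\bigr]$ vanishes on $\overline{\Omega}$.

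Next I would supply the spectral parameter $\beta$ and the approximate solver. The condition $\widetilde{\mat{P}}^\top\widetilde{\mat{P}}\preceq\mat{A}^\top\mat{A}\preceq\beta\,\widetilde{\mat{P}}^\top\widetilde{\mat{P}}$ holds on the lower side for free, because $\mat{A}^\top\mat{A}=\mat{A}_\Omega^\top\mat{A}_\Omega+\mat{A}_{\overline{\Omega}}^\top\mat{A}_{\overline{\Omega}}$, and on the upper side it is exactly the incoherence / missing-at-random guarantee discussed after \Cref{thm:approximate-richardson}, which is where the parameter $\beta$ in the statement comes from. For \ApproximateSolve I would use the approximate TT-core update of \citet[Corollary 4.4]{bharadwaj2024efficient}: for the Kronecker-structured $\mat{A}^{\neq k}$ and an arbitrary right-hand side it returns, for all $I_k$ columns at once, a $(1+\widehat{\epsilon})$-approximate least-squares solution in time $\tilde{O}(R^4\widehat{\epsilon}^{-1}(N+I_k))$ when $R_n=R$, reading only the sampled entries of the right-hand side; hence the implicit vector $\widetilde{\mat{q}}^{(k)}=\widetilde{\mat{q}}+(\mat{A}-\widetilde{\mat{P}})\,\mat{x}^{(k)}$ can be formed lazily, without materializing all of $\Omega$ or the full tensor. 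Plugging this into \Cref{thm:approximate-richardson} with $\widehat{\epsilon}=\Theta(\varepsilon_1/\beta^2)$ and $\epsilon=\varepsilon_2$ drives the multiplicative error coefficient $1+\tfrac{2\widehat{\epsilon}}{(1/\beta-\sqrt{\widehat{\epsilon}})^2}$ to $1+O(\varepsilon_1)$, so the theorem delivers \eqref{eq:tc-approx-sol} for each row fiber of $\tensor{A}^{(k)}$ (using $\widetilde{\mat{P}}^\top\widetilde{\mat{P}}=\mat{P}^\top\mat{P}$ and the null-/column-space identities recorded after \Cref{thm:approximate-richardson}), which establishes (i). For (ii), I would substitute $T_{\widehat{\epsilon}}=\tilde{O}(R^4\widehat{\epsilon}^{-1}(N+I_k))$ into the running time $O\bigl(\tfrac{\beta}{1-\sqrt{\widehat{\epsilon}}\beta}\,T_{\widehat{\epsilon}}\log\tfrac{\beta}{\varepsilon_2}\bigr)$ from \Cref{thm:approximate-richardson} (note $1/\beta-\sqrt{\widehat{\epsilon}}=\Theta(1/\beta)$ for this choice of $\widehat{\epsilon}$) and sum over $k\in[N]$, absorbing logarithmic factors into $\tilde{O}(\cdot)$, to obtain the claimed one-round bound.

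The step I expect to be the main obstacle is checking that the blackbox TT-core solver genuinely meets the interface \Cref{thm:approximate-richardson} requires of \ApproximateSolve. Two points need care: (a) its $(1+\widehat{\epsilon})$ relative-error guarantee must hold for the \emph{shifted} right-hand sides $\widetilde{\mat{q}}^{(k)}=\widetilde{\mat{q}}+(\mat{A}-\widetilde{\mat{P}})\,\mat{x}^{(k)}$ that arise inside the Richardson loop, not merely for the original decomposition instance; and (b) maintaining the Kronecker / Khatri--Rao leverage-score sampling data structures across the $O(\beta\log(\beta/\varepsilon_2))$ inner solves, together with the lazy evaluation of $\widetilde{\mat{q}}^{(k)}$ on sampled coordinates, must not reintroduce any dependence on $I_{\neq k}$ or $|\Omega|$ --- which is precisely what would break the sublinear-in-$I$ running time. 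Verifying these, and the routine arithmetic in the last step, is the content placed in \Cref{app:completion}.
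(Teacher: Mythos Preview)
Your proposal is correct and follows essentially the same approach as the paper: the corollary is not given a standalone proof but is a direct instantiation of the general recipe at the start of \Cref{sec:sampling-for-completion} (set $\widehat{\epsilon}=\Theta(\varepsilon_1/\beta^2)$, $\epsilon=\varepsilon_2$, invoke \Cref{thm:approximate-richardson}) with the TT-core solver of \citet[Corollary 4.4]{bharadwaj2024efficient} plugged in as \ApproximateSolve and summed over $k\in[N]$. The technical points you flag in your last paragraph---that the leverage-score sampler applies to the shifted right-hand sides and that lazy evaluation avoids $|\Omega|$- or $I_{\neq k}$-dependence---are exactly what the paper defers to \Cref{app:completion}.
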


\section{Experiments}
\label{sec:experiments}

In this section, we study the empirical performance of our algorithm and compare it to direct and expectation maximization (EM) methods
for a coupled matrix problem and CP completion tasks.\footnote{The code is available at \url{https://github.com/fahrbach/fast-tensor-completion}.}

\subsection{Warm-Up: Coupled Matrix Problem}
First we consider the \emph{coupled matrix problem}
\[
    \mat{A} \mat{X} \mat{B}^\top + \mat{C} \mat{Y} \mat{D}^\top = \mat{E}\,,
\] 
where $\mat{A},\mat{B},\mat{C},\mat{D} \in \R^{n \times d}$ are given,
$\mat{X},\mat{Y} \in \R^{d \times d}$ are unknown,
and $\mat{E} \in \R^{n \times n}$ is a matrix with half of its entries randomly revealed~\citep{baksalary1980matrix}.
For fixed $\mat{Y}$, we compute $\mat{X}$ by solving the Kronecker regression problem
\begin{equation}
\label{eq:coupled-matrix-als}
    \argmin_{\mat{X} \in \R^{d\times d}} \norm{(\mat{B} \kron \mat{A}) \vvec(\mat{X}) - \vvec(\mat{E} - \mat{C} \mat{Y} \mat{D}^\top)}_2\,.
\end{equation}
The matrix $\mat{Y}$ can be updated by solving a similar regression problem.
Therefore, we can apply an alternating minimization algorithm to compute $\mat{X}$ and $\mat{Y}$. 
For these experiments, we initialize $\mat{X}=\mat{Y}=\mat{I}$.
We present the results in \Cref{fig:kronecker}, which are averaged over five trials.

\paragraph{Data generation.}
The entries of $\mat{A},\mat{B},\mat{C},\mat{D},\mat{X},\mat{Y}$ are sampled independently from a uniform distribution on $[0,1)$,
and then we set $\mat{E} \gets \mat{A} \mat{X} \mat{B}^\top + \mat{C} \mat{Y} \mat{D}^\top$.
We consider the setting where half of the entries of $\mat{E}$ are observed (chosen uniformly at random).
We set $n=2000$ and $d=10$.
Note that since we observe a subset of entries of $\mat{E}$, the Kronecker regression structure is lost.

\paragraph{Algorithms.}
We compare the \texttt{direct}, \texttt{mini-als}, and \texttt{approximate-mini-als} methods.
The latter two use \emph{adaptive} step sizes based on the trajectory of the iterates $\mat{x}^{(k)}$ (see \Cref{app:acceleration} for details).
\texttt{direct} solves the normal equation in each ALS step and runs in $O(|\Omega| R^2 + R^3)$ time since it computes $(\mat{P}^\intercal \mat{P})^{-1}$.
\texttt{mini-als} is Algorithm~\ref{alg:approximate-lifting}
with $\widehat{\varepsilon} = 0$ and $\varepsilon > 0$.
\texttt{mini-als} uses the Kronecker product properties $((\mat{B} \kron \mat{A})^\top (\mat{B} \kron \mat{A}))^{-1} = (\mat{B}^\top \mat{B})^{-1} \kron (\mat{A}^\top \mat{A})^{-1}$ and $(\mat{B} \kron \mat{A} )\vvec(\mat{X}) = \vvec(\mat{A}\mat{X} \mat{B}^\top)$ for improved efficiency.
\texttt{approximate-mini-als} uses leverage score sampling for Kronecker products similar to \citet{fahrbach2022subquadratic,diao2019optimal},
which is a direct application of \Cref{cor:fast_tucker_completion_core_tensor_update}.
For leverage score sampling, we sample $1\%$ of rows in each iteration of \texttt{approximate-mini-als}.

\begin{figure}[t]
    \centering
    \begin{subfigure}[b]{0.23\textwidth}
        \centering
        \includegraphics[width=\textwidth]{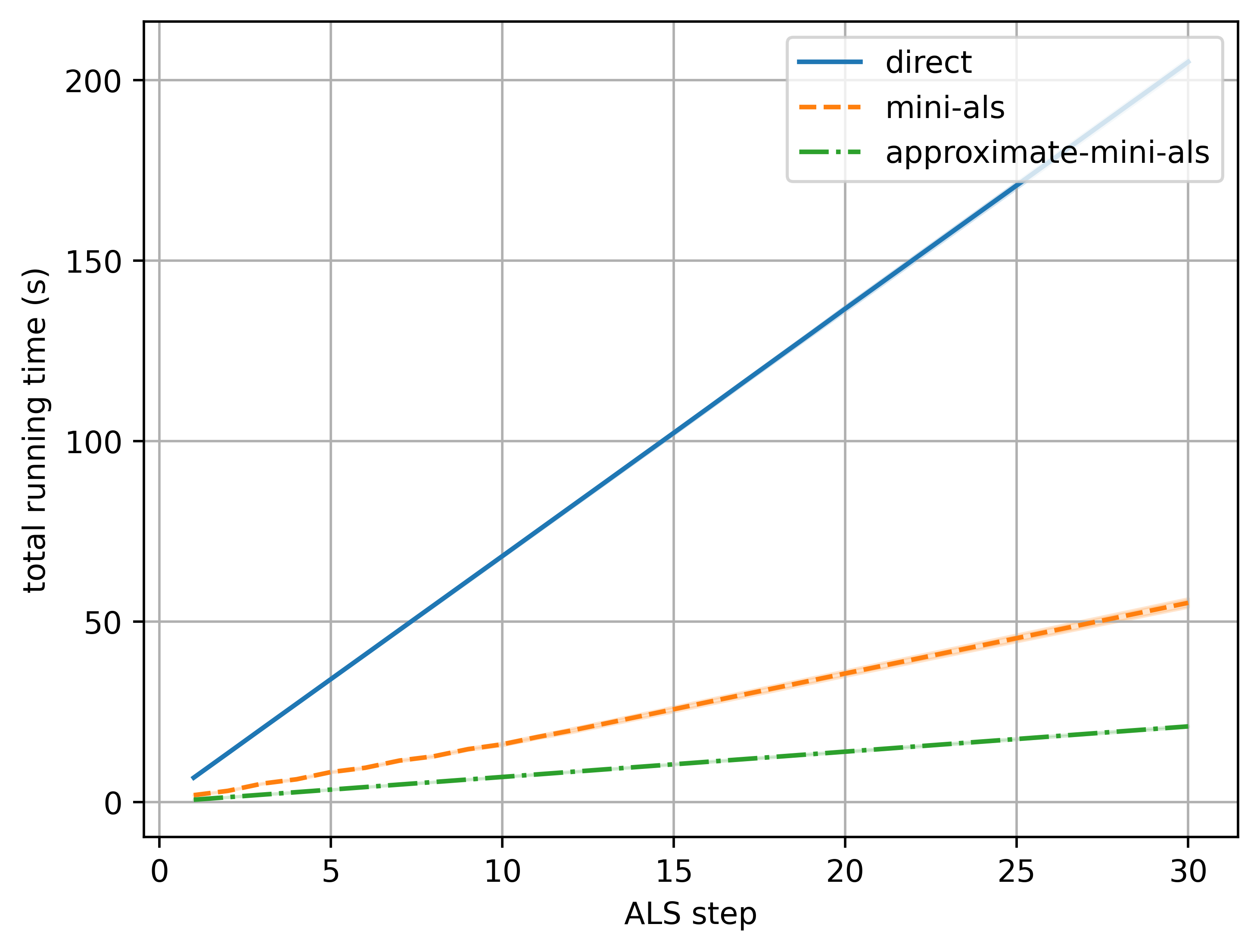}
    \end{subfigure}
    \hfill
    \begin{subfigure}[b]{0.23\textwidth}
        \centering
        \includegraphics[width=\textwidth]{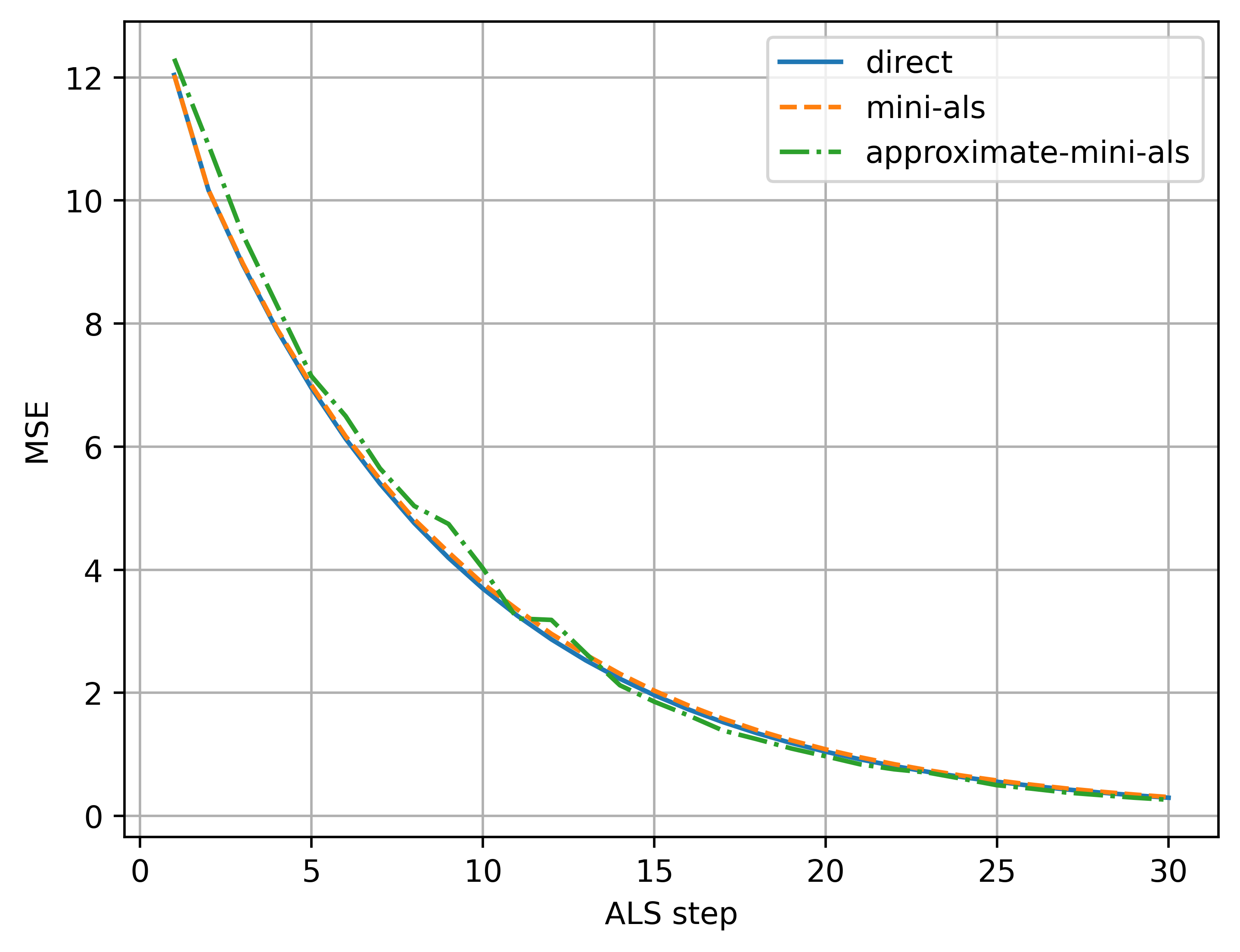}
    \end{subfigure}
    \caption{
    Coupled matrix results for $\mat{E} \in \R^{n \times n}$, $\mat{X}, \mat{Y} \in \R^{d \times d}$ with $n=2000$ and $d=10$
    that compare the direct method, mini-ALS, and approximate-mini-ALS via leverage score sampling.
    }
    \label{fig:kronecker}
\end{figure}

\paragraph{Results.}
The left plot in \Cref{fig:kronecker} shows the total running time of these three algorithms across all ALS iterations.
The right plots shows the \emph{mean squared error} (MSE) at each step of ALS.
Since \texttt{approximate-mini-als} is stochastic, it does not necessarily attain the minimal error in the matrix’s kernel space.
i.e., the first term on the right‐hand side of \eqref{eq:tc-approx-sol} can exceed the minimum error by a factor of $(1+\epsilon)$,
which allows it to follow a different convergence path and achieve a lower MSE than the other methods around step 15 of ALS.

\subsection{CP Completion}
Now we compare methods for the CP completion task.
For a given tensor $\tensor{X}$ and sample ratio $p \in [0, 1]$,
let $\tensor{X}_{\Omega}$ be a partially observed tensor with a random $p$ fraction of entries revealed.
We fit $\tensor{X}_{\Omega}$ with a rank-$R$ CP decomposition
by minimizing the training \emph{relative reconstruction error} (RRE)
$\norm{(\widehat{\tensor{X}} - \tensor{X})_{\Omega}}_{\frobenius} / \norm{\tensor{X}_{\Omega}}_{\frobenius}$
using different ALS algorithms.

\begin{figure*}
    \centering
    \begin{subfigure}[b]{0.24\textwidth}
        \centering
        \includegraphics[width=\textwidth]{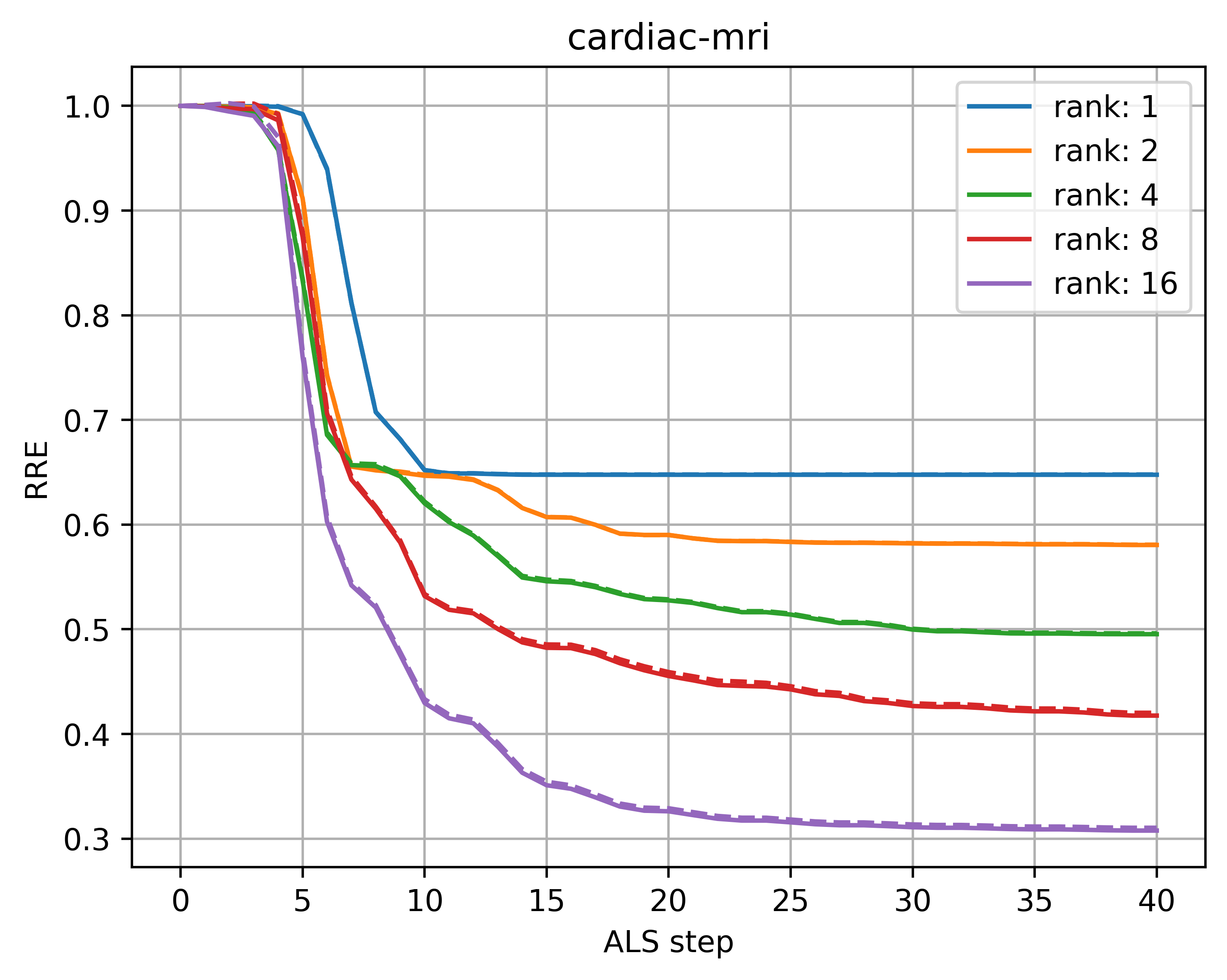}
    \end{subfigure}
    \hfill
    \begin{subfigure}[b]{0.24\textwidth}
        \centering
        \includegraphics[width=\textwidth]{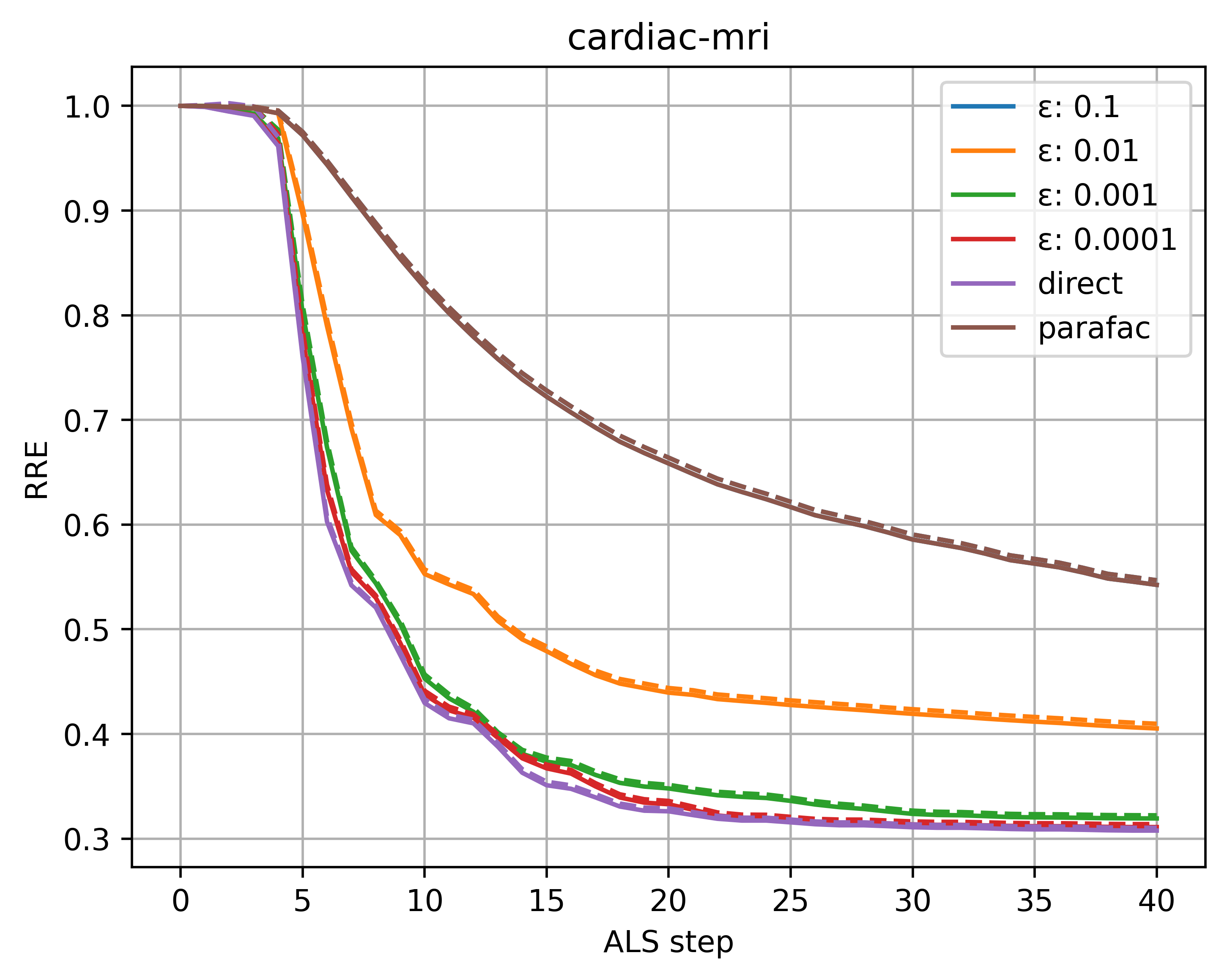}
    \end{subfigure}
    \hfill
    \begin{subfigure}[b]{0.24\textwidth}
        \centering
        \includegraphics[width=\textwidth]{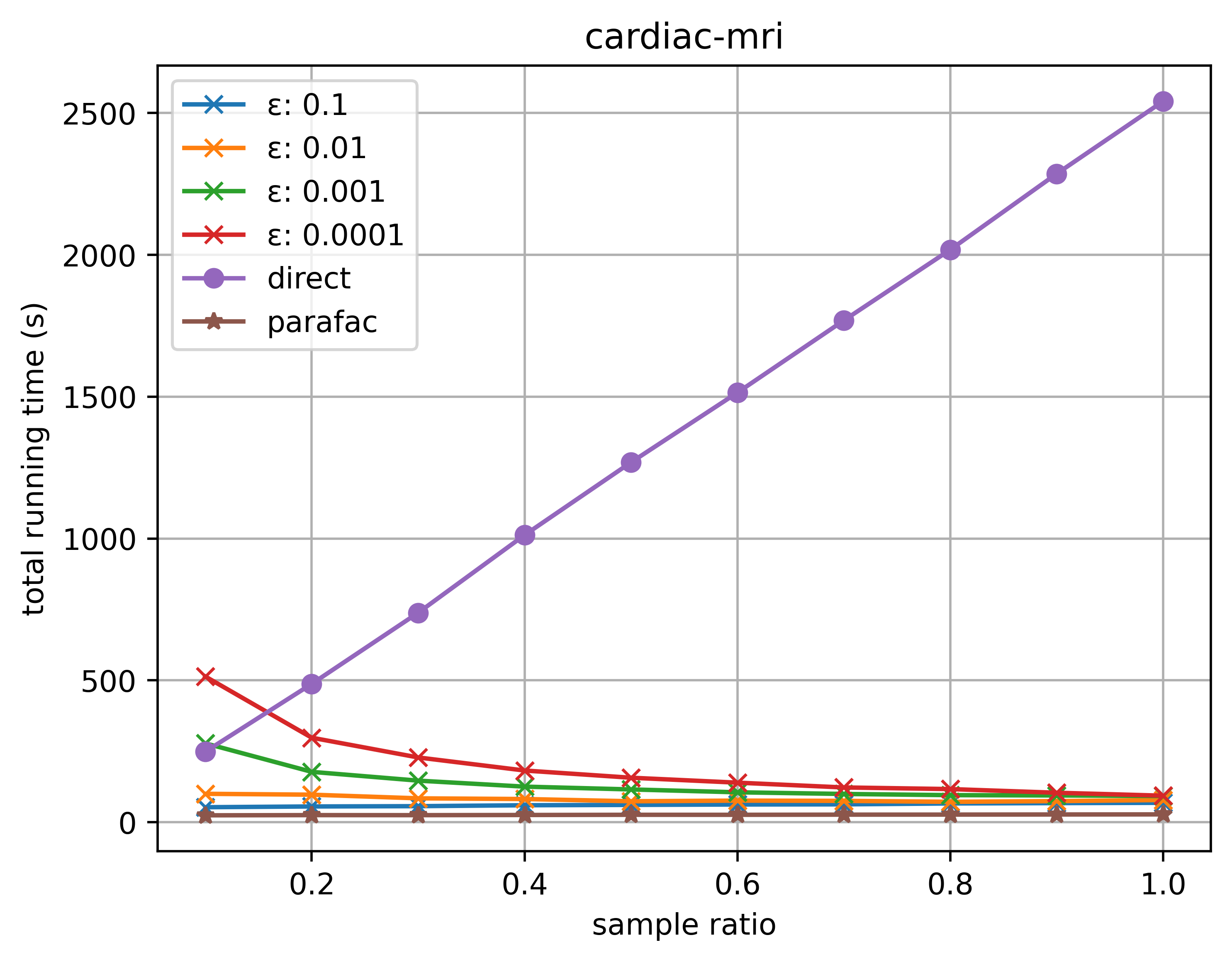}
    \end{subfigure}
    \hfill
    \begin{subfigure}[b]{0.24\textwidth}
        \centering
        \includegraphics[width=\textwidth]{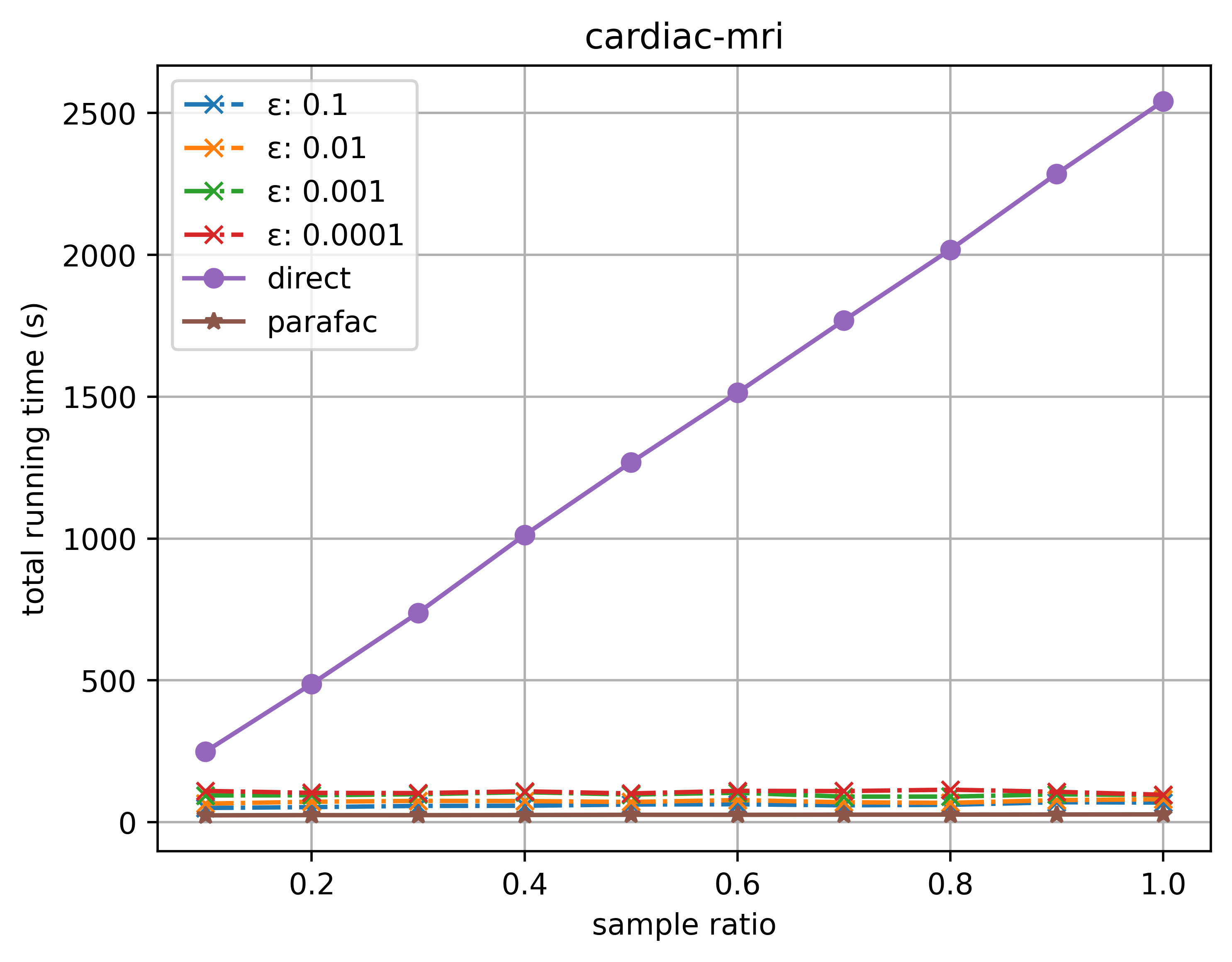}
    \end{subfigure}

    \centering
    \begin{subfigure}[b]{0.24\textwidth}
        \centering
        \includegraphics[width=\textwidth]{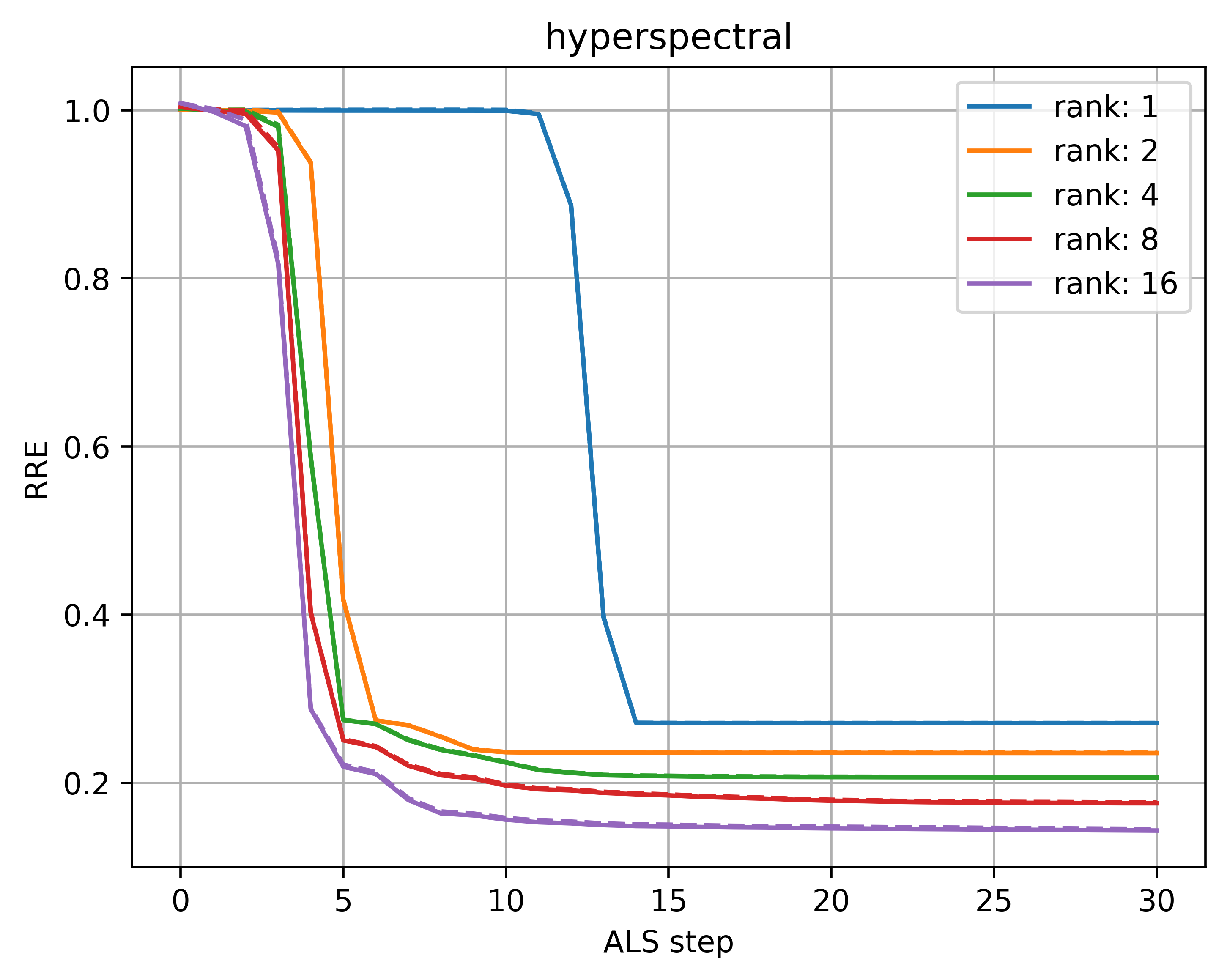}
    \end{subfigure}
    \begin{subfigure}[b]{0.24\textwidth}
        \centering
        \includegraphics[width=\textwidth]{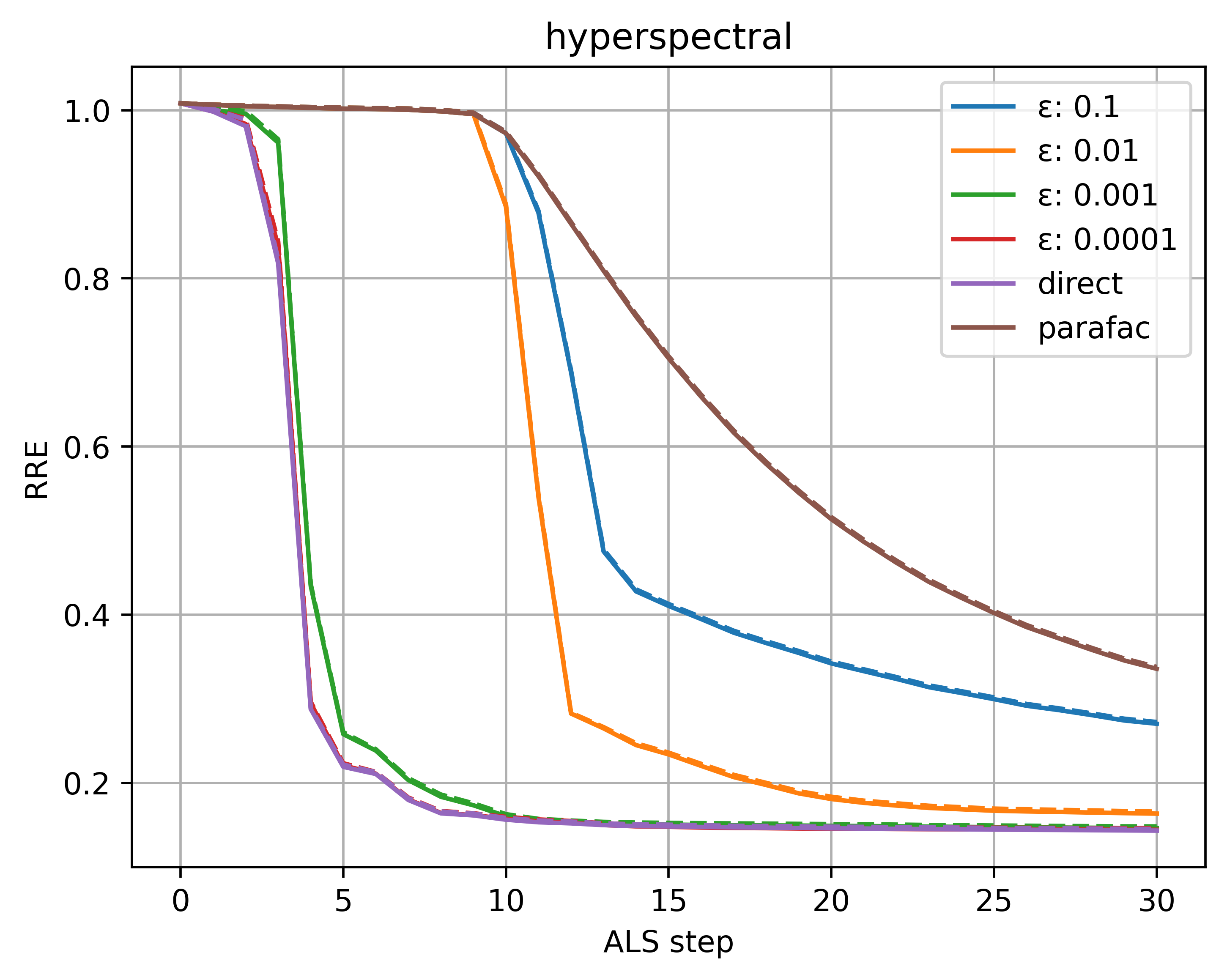}
    \end{subfigure}
    \hfill
    \begin{subfigure}[b]{0.24\textwidth}
        \centering
        \includegraphics[width=\textwidth]{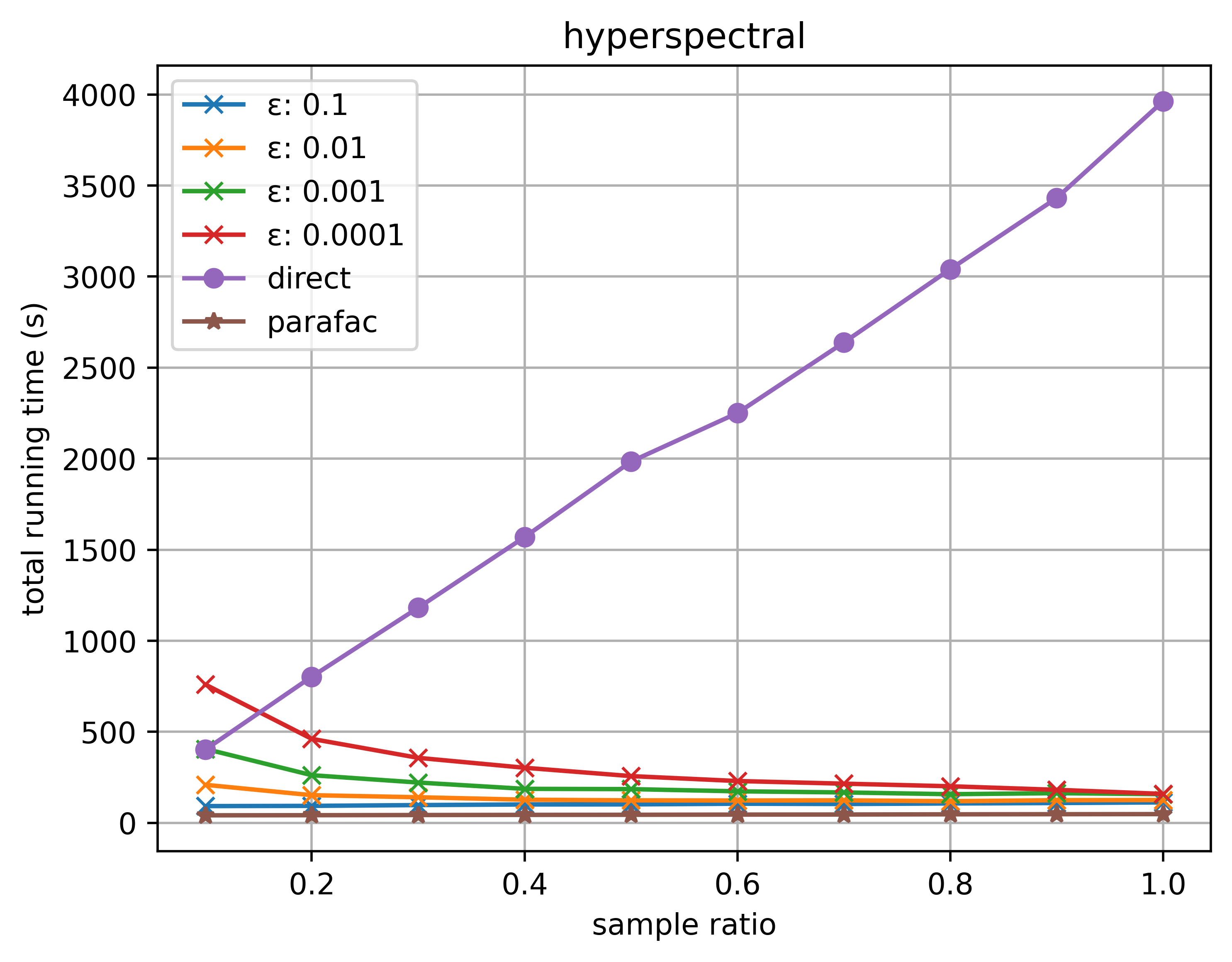}
    \end{subfigure}
    \hfill
    \begin{subfigure}[b]{0.24\textwidth}
        \centering
        \includegraphics[width=\textwidth]{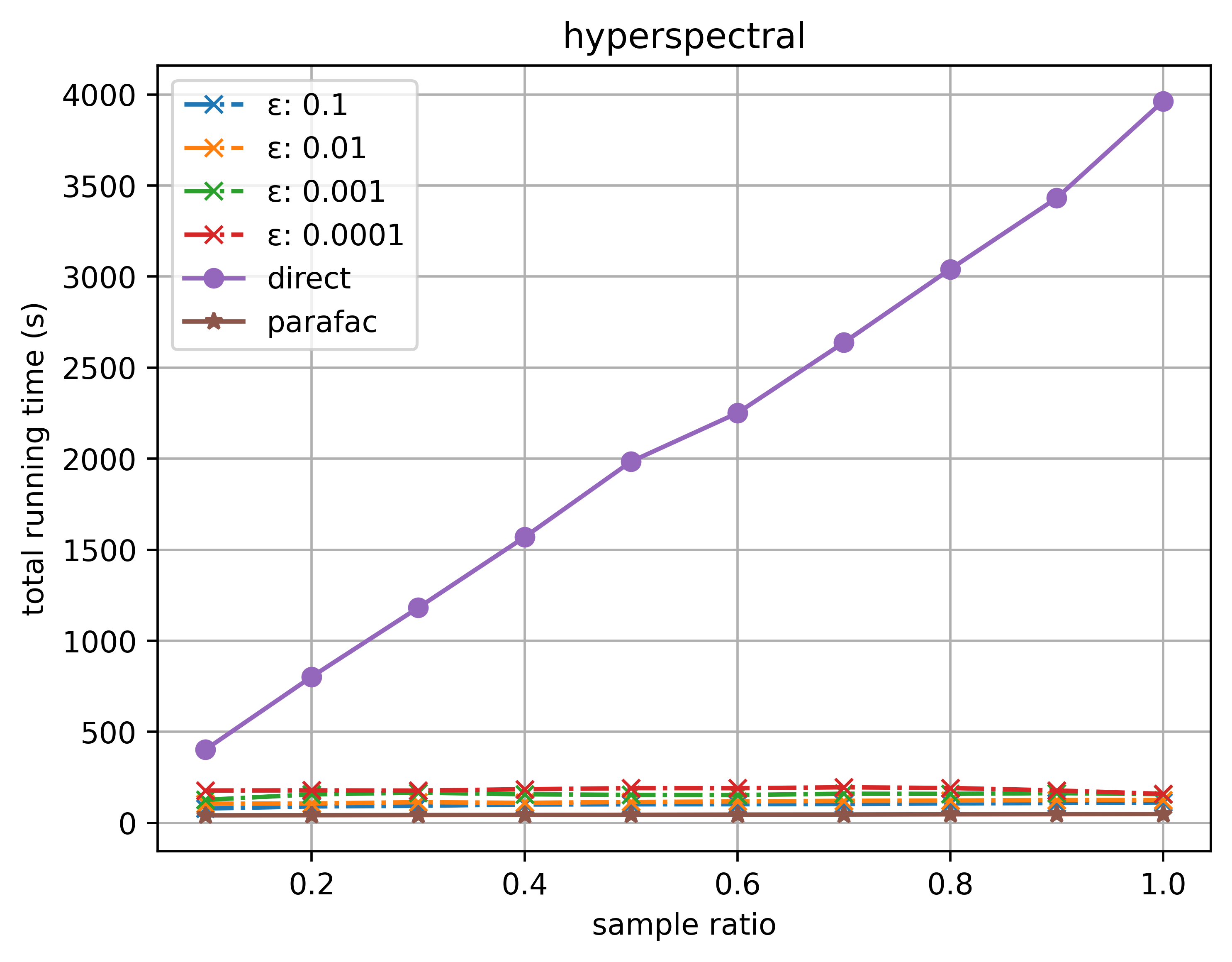}
    \end{subfigure}
    \caption{Algorithm comparison for a low-rank CP completion task on the \textsc{cardiac-mri} and \textsc{hyperspectral} tensor datasets. 
    The first column illustrates convergence rates and relative reconstruction error (RRE) of the ALS algorithm for various CP‐decomposition ranks. The second column plots RRE over ALS steps for \texttt{direct}, \texttt{parafac}, and our method (\texttt{mini-als}) under different $\epsilon$ values. The third column shows total running time of these algorithms at varying sample ratios, i.e., number of observed entries. Finally, the fourth column displays total running times for \texttt{direct}, \texttt{parafac}, and \texttt{accelerated-min-als} (instead of \texttt{mini-als}) across different $\epsilon$ settings.}
    \label{fig:experiments}
\end{figure*}


\paragraph{Datasets.}
We consider two real-world tensors.
\textsc{cardiac-mri} is a $256 \times 256 \times 14 \times 20$ tensor
of MRI measurements indexed by $(x,y,z,t)$ where $(x,y,z)$ is a point in space and~$t$ corresponds to time.
\textsc{hyperspectral} is $1024 \times 1344 \times 33$ tensor of time-lapse hyperspectral radiance images~\citep{nascimento2016spatial}.
We also consider synthetic low-rank CP and Tucker tensors in \Cref{app:synthetic_experiments}.

\paragraph{Algorithms.}
We compare
\texttt{direct}, \texttt{parafac}, \texttt{mini-als}, and \texttt{accelerated-mini-als} methods.
\texttt{direct} solves the normal equation in each ALS step for the original problem of the form \eqref{eq:vec-completion-prob}
and has running time $O(|\Omega| R^2 + R^3)$ since it computes $(\mat{A}_{\Omega}^\intercal \mat{A}_{\Omega})^{-1}$.
\texttt{mini-als} is our lifting approach in Algorithm~\ref{alg:approximate-lifting}
with $\widehat{\varepsilon} = 0$ and $\varepsilon > 0$.
\texttt{parafac} is the EM algorithm of \citet{tomasi2005parafac},
which is equivalent to running Algorithm~\ref{alg:approximate-lifting} for exactly one iteration in each step of the ALS algorithm.
\texttt{accelerated-mini-als} uses adaptive step sizes based on the trajectory of the iterates $\mat{x}^{(k)}$.

\paragraph{Results.}
In \Cref{fig:experiments}, we present plots for \textsc{cardiac-mri} in the top row and \textsc{hyperspectral} in the bottom row.

\begin{itemize}
\item In the first column,
we set $p = 0.1$ and sweep over the rank $R$ using \texttt{direct} to demonstrate how the train RRE (solid line) and test RRE $\norm{\widehat{\tensor{X}} - \tensor{X}}_{\frobenius} / \norm{\tensor{X}}_{\frobenius}$ (dashed line) decrease as a function of
$R$ and the ALS step count.
Note that the test RRE is the loss on the entire (mostly unobserved) tensor, and is slightly above the train RRE curves at all times.

\item In the second column, we fix the parameters $(p,R)=(0.1,16)$ and
study the solution quality of \texttt{mini-als} for different values of $\varepsilon$
compared to \texttt{direct} and \texttt{parafac}.
As $\varepsilon$ decreases, we recover solutions in each step of ALS that are as good as \texttt{direct},
which agrees with the claim that lifting simulates the preconditioned Richardson iteration (\Cref{lemma:richardson-simulation}).

\item In the third column, we sweep over $p$ for $R = 16$ and plot the
\emph{total} running time of \texttt{direct}, \texttt{parafac}, and \texttt{mini-als} for $10$ rounds of ALS. 
The running time of \texttt{direct} increases linearly in $p \propto |\Omega|$ as expected.
Interestingly, the running time of \texttt{mini-als} decreases as $|\Omega|$ increases, for $\varepsilon < 0.1$,
since the lifted (structured) matrix $\mat{A}^\top \mat{A}$ becomes a better preconditioner for the TC design matrix $\mat{A}_{\Omega}$ (i.e., $\beta \rightarrow 1$ as $p \rightarrow 1$).
This means \texttt{mini-als} needs fewer iterations to converge.
Finally, \texttt{parafac} performs one mini-ALS iteration in each ALS step and is therefore faster but converges at a slower rate.

\item In the fourth column, we compare the total running time of the \texttt{accelerated-mini-als} algorithm (dash-dot lines)
to \texttt{mini-als} (third column).
Our accelerated method extrapolates the trajectory of the iterates $\mat{x}^{(k)}$ during mini-ALS
using a geometric series, which allows us to solve collinear iterates in one step.
We illustrate this idea with \Cref{fig:extrapol} in \Cref{app:acceleration}.
\texttt{accelerated-mini-als} achieves better solution quality than \texttt{mini-als} for a given $\varepsilon$ and always runs faster, especially for small values of $p$.
\end{itemize}

\section{Conclusion}
\label{sec:conclusion}

This work introduces a novel lifting approach for tensor completion.
We build on fast sketching-based TD algorithms as ALS subroutines,
extending their guarantees to the TC setting and establishing novel connections to iterative methods.
We prove guarantees for the convergence rate of an approximate version of the Richardson iteration,
and we study how these algorithms perform in practice on real-world tensors.
One interesting future direction is to analyze the speedup that adaptive step sizes give in \texttt{accelerated-mini-als}.

\section*{Acknowledgements}
We would like to thank Arkadi Nemirovski and Devavrat Shah for insightful discussions.
MG and AJ acknowledge the support of ARO MURI W911-NF-19-1-0217 and ONR N00014-23-1-2299.
YK is supported in part by NSF Award CCF-2106444,
and did this work as a student researcher at Google Research.

\section*{Impact Statement}

This paper presents work whose goal is to advance the field of 
machine learning. There are many potential consequences 
of our work, but none that we feel must be highlighted here.

\bibliography{references}
\bibliographystyle{icml2025}

\newpage
\appendix
\onecolumn

\section{Missing Details for \Cref{sec:preliminaries}}

\subsection{Least-Squares Linear Regression}
\label{app:least-square-regression}

For a design matrix $\mat{A}\in \R^{n\times d}$ and response $\mat{b}\in \R^n$, consider the least-squares problem
\[
    \mat{x}^* = \argmin_{\mat x\in \R^d}\,\norm{\mat{Ax}-\mat{b}}_2\,.
\]
The solution $\mat{x}^*$ can be obtained by solving the \emph{normal equation} $\mat{A}^\top\mat{A}\mat{x}^*=\mat{A}^\top \mat{b}$.
Therefore, $\mat{x}^* = (\mat{A}^\top\mat{A})^{-1}\mat{A}^\top \mat{b}$.
If $\mat{A}^\top\mat{A}$ is singular, then we can use the \emph{pseudoinverse} $\mat{A}^+$.

The \emph{orthogonal projection matrix} $\pi_{\mat{A}}:\R^n \to \R^n$ onto the image space of $\mat{A}$ is defined by
$\pi_{\mat{A}} = \mat{A}\,(\mat{A}^\top\mat{A})^{-1}\mat{A}^\top$, and satisfies $\pi_{\mat{A}}^2 = \pi_{\mat{A}}$ and $\norm{\pi_{\mat A}\mat v}_2 \leq \norm{\mat v}_2$ for any $\mat{v} \in \R^n$. 
Recall that any $\mat{v}\in \R^n$ can be uniquely decomposed as $\mat v = \pi_{\mat{A}}\mat v + \pi_{\mat{A}^\perp}\mat v$, where $\pi_{\mat{A}^\perp}=\mat{I}_{n} - \pi_{\mat{A}}$ is the orthogonal projection to the orthogonal subspace of $\textnormal{colsp}(\mat{A})$.

Given $\mat{b} = \pi_{\mat{A}}\mat b + \pi_{\mat A ^\perp}\mat b$, the first term is the \emph{reducible error} by regressing $\mat b$ on $\mat x$,
i.e., taking the optimum $\mat{x}^*$ so that $\mat{Ax}^* = \pi_{\mat{A}}\mat{b}$.
The second term  $\pi_{\mat{A}^\perp}\mat{b}$ is the \emph{irreducible error}, i.e., $\min\,\norm{\mat{Ax}-\mat b}_2 =\norm{\pi_{\mat A^\perp}\mat b}_2$.

\subsection{Leverage Score Sampling for Tensor Decomposition}
\label{app:leverage-score}

ALS formulations show how each tensor decomposition step reduces to solving a least-squares problem of the form
$\min_{\mat{x}}\,\norm{\mat{A}\mat x - \mat{b}}_2$ with a highly structured $\mat{A}$.
While we can find the optimum in closed form via $(\mat{A}^\top \mat{A})^{+} \mat{A}^\top \mat{b}$, matrix $\mat{A}$ has $I = I_1 \cdots I_N$ rows corresponding to each entry of the tensor (i.e., it is a tall skinny matrix), which can make using the normal equation challenging in practice.

Randomized sketching methods are a popular approach to approximately solving this problem with faster running times with high probability.
In general, these approach sample rows of $\mat{A}$ according to the probability distribution defined by the \emph{leverage scores} of rows, resulting in a random sketching matrix $\mat{S}$ whose height is much smaller than that of $\mat{A}$. For a matrix $\mat{A} \in \R^{I\times R}$ with ($I\gg R$), the leverage scores of $\mat{A}$ is the vector $\ell\in [0,1]^I$ defined by
\[
\ell_i \defeq \bigl(\mat{A}\,(\mat{A^\top A})^+\mat{A}^\top\bigr)_{ii}\,.
\]
Then, for a given $\varepsilon, \delta \in (0,1)$, the sketching algorithm samples $\tilde{O}(\nicefrac{R}{\varepsilon\delta})$ many rows, where the $i$-th row is drawn with probability $\ell_i  / \sum_i \ell_i = \ell_i / \text{rank}\,(\mat{A})$.
With probability at least $1-\delta$, we can guarantee that
\[
\min_{\mat{x}} \,\norm{\mat{S}\mat{A}\mat{x}-\mat{S}\mat{b}}_2 \leq (1+\varepsilon)\, \min_{\mat{x}}\,\norm{\mat{A}\mat x - \mat{b}}_2\,.
\]
The reduced number of rows in $\mat{SA}$ leads to better running times for the least-squares solves.
However, na\"ively computing leverage scores takes as long as computing the closed-form optimum since we need to compute $(\mat{A}^\top \mat{A})^{+}$.
This is where the \emph{structure} of the design matrix $\mat{A}$ comes in to play, i.e., to speed up the leverage score computations.

\subsection{Tensor-Train Decomposition}
\label{app:tt-decomposition-details}

Given a tensor-train (TT) decomposition $\{\tensor{A}^{(n)}\}_{n=1}^{N}$ and index $n\in[N]$, define the
\emph{left chain} $\mat{A}_{<n} \in \R^{(I_1 \cdots I_{n-1}) \times R_{n-1}}$
and
the \emph{right chain} $\mat{A}_{> n}\in \R^{R_n \times (I_{n+1} \cdots I_N)}$ as:
\begin{align*}
    a_{< n}(\underline{i_1\dots i_{n-1}}, r_{n-1})
    &=
    \sum_{r_0,\dots,r_{n-1}} \prod_{k=1}^{n-1} a^{(k)}_{r_{k-1}i_kr_k}
    \\
    a_{> n}(r_n, \underline{i_{n+1}\dots i_N})
    &=
    \sum_{r_{n+1},\dots,r_N} \prod_{k=n+1}^N a^{(k)}_{r_{k-1}i_kr_k}\,,
\end{align*}
where for any $i_s \in [I_s]$ with $s (\neq n)\in [N]$, $\underline{i_1\dots i_{n-1}}:= 1+ \sum_{k=1}^{n-1} (i_k -1)\prod_{j=1}^{k-1} I_j$ and $\underline{i_{n+1}\dots i_N}:= 1+ \sum_{k=n+1}^{N} (i_k -1)\prod_{j=n+1}^{k-1} I_j$.
When ALS optimizes $\tensor{A}^{(n)}$ with all other TT-cores fixed, it solves the regression problem:
\[
    \tensor{A}^{(n)} \!\!
    \gets \!\!
    \argmin_{\tensor{B} \in \R^{R_{n-1}\times I_n \times R_n}}
    \bigl\lVert (\mat{A}_{<n}\kron \mat{A}^\top_{>n})\,\mat{B}_{(2)}^\top - \mat{X}_{(n)}^\top \bigr\rVert_{\frobenius}\,,
\]
which is equivalent to solving $I_n$ Kronecker regression problems in $\R^{\prod_{k\neq n} I_k}$.

\subsection{Tensor Networks}
\label{app:tensor-networks}
A \emph{tensor network} (TN) is a powerful framework that can represent any factorization of a tensor,
so it can recover the three decompositions above as special cases.
A TN decomposition $\text{TN}(\tensor A^{(1)},\dots,\tensor A^{(N)})$ represents a given tensor $\tensor X$ with $N$ tensors $\tensor{A}^{(1)},\dots, \tensor A^{(N)}$ and a \emph{tensor diagram}. As in the above decompositions, the goal is to compute
\[
\argmin_{\tensor A^{(1)},\dots,\tensor A^{(N)}}\,\norm{\tensor X - \text{TN}(\tensor A^{(1)},\dots,\tensor A^{(N)})}_\frobenius^2\,.
\]
A tensor diagram\footnote{We refer readers to \url{https://tensornetwork.org/diagrams/} for more details.} consists of nodes with dangling edges, where a node indicates a tensor, and its dangling edge represents a mode, so that the number of dangling edges is the order of the tensor. For example, a node without an edge indicates a scalar, one with one dangling edge is a vector, and one with two dangling edges is a matrix.

When two dangling edges of two nodes are connected, we say that the two tensors are \emph{contracted} along that mode (i.e., a mode product of those two tensors). For example, when a node with two dangling edges shares one edge with another node with one dangling edge, it indicates a matrix-vector multiplication. Hence, the number of unmatched dangling edges in a tensor diagram corresponds to the order of its representing tensor.

A notable special case is a \emph{fully-connected tensor network} (FCTN) \cite{zheng2021fully} decomposition. It consists of $N$ tensors of the same order, where in its tensor diagram, all pairs of nodes are \emph{connected} as its name suggests.

\paragraph{ALS for TN decomposition.}
Given a TN decomposition $\{\tensor{A}^{(n)}\}_{n\in[N]}$, when ALS optimizes a tensor $\tensor A^{(n)}$ with all others fixed, it solves a linear regression problem of the form
\[
\tensor{A}^{(n)} \gets \argmin_{\tensor{B}} \, \lVert \mat A_{\neq n}\mat{B} - \mat{X}\rVert_{\frobenius}\,,
\]
where $\mat A_{\neq n}$ is an appropriate matricization depending on $\tensor A^{(1)},\dots,\tensor A^{(n-1)}, \tensor A^{(n+1)},\dots,\tensor A^{(N)}$, and $\mat B$ and $\mat X$ are suitable matricizations of $\tensor B$ and $\tensor X$, respectively. Structure of $\mat A_{\neq n}$ can be specified through a new tensor diagram obtained by removing the node of $\tensor A^{(n)}$ from the original tensor network diagram.

Just as the ALS approaches for other decomposition algorithms, \citet{malik2022sampling} proposed a sampling-based approach via leverage scores. First of all, they pointed out that $\mat{A}_{\neq n}^\top \mat A_{\neq n}$ can be efficiently computed by exploiting inherent structure of $\mat A_{\neq n}$ (i.e., contract a series of matched edges in a tensor diagram in an appropriate order). They then presented a leverage-score sampling method that draws rows of $\mat A_{\neq n}$ without materializing a full probability vector, and in spirit this approach is similar to one used for the CP decomposition in \cref{subsec:CP-completion}.

\paragraph{Other tensor decompositions.}
We also discuss other important special cases of TN decompositions including the \emph{tensor ring} (TR) \cite{zhao2016tensor} and \emph{tensor wheel} (TW) \cite{wu2022tensor} decompositions.
Both decompositions can be succinctly described through tensor diagram notation.

A TR decomposition consists of $N$ tensors $\tensor{A}^{(1)},\dots, \tensor A^{(N)}$, where for each $k \in [N]$, the node of $\tensor{A}^{(k)}$ is connected to the two neighboring nodes of $\tensor{A}^{(k-1)}$ and $\tensor{A}^{(k+1)}$ (precisely, the superscripts here refer to modulo with respect to $N$) so that the resulting tensor diagram looks exactly like a ring. Since a fast TR decomposition through leverage-score sampling is facilitated by \citet{malik2021sampling}, our approach of lifting and completion can be straightforwardly applied to the TR decomposition.

A TW decomposition consists of one core tensor $\tensor{C}$ and $N$ factor tensors $\tensor{A}^{(1)},\dots, \tensor A^{(N)}$. As its name suggests, its tensor diagram is essentially that of the tensor ring (consisting of $N$ factor tensors) with the node representing $\tensor{C}$ connected to all factor tensors. Thanks to a recent work of \citet{wang2024randomized} using leverage-score sampling for the TW decomposition, our approach can be readily applied as well.

The t-SVD \citep{kilmer2011factorization} and t-CUR \citep{tarzanagh2018fast} decompositions for third-order tensors that have been applied to spatiotemporal data.
They factor a tensor via the t-product into three tensors; in the case of t-SVD, one of these factors is f-diagonal (see \citet{kilmer2011factorization} for definitions). \citet{tarzanagh2018fast} propose fast, sampling-based algorithms for t-product–based tensor decompositions. These algorithms can be combined with our lifting method and thus extended to the tensor completion setting.

\section{Missing Analysis for \Cref{sec:lifted_regression}}
\label{app:lifted_regression}

\subsection{Proof of \Cref{lem:lifted_regression}}

\LiftedRegression*

\begin{proof}
For any $\mat{x}$, we have
\[
    \norm{\mat{A}\mat{x}-\mat{b}}_2^2
    =
    \norm{\mat{A}_{\Omega}\mat{x} - \mat{b}_{\Omega}}_{2}^2
    +
    \norm{\mat{A}_{\overline\Omega}\mat{x} - \mat{b}_{\overline\Omega}}_{2}^2\,,
\]
so
\[
\min_{\mat{x}}\norm{\mat{P} \mat{x} - \mat{q}}_2^2 \leq \min_{\mat{x},\mat{b}_{\overline{\Omega}}} \norm{\mat{A}\mat{x}-\mat{b}}_2^2\,.
\]
Moreover, for any $\mat{x}$, taking $\mat{b}_{\overline \Omega} = \mat{A}_{\overline \Omega}\mat{x}$ gives us $\norm{\mat{A}_{\overline\Omega}\mat{x} - \mat{b}_{\overline\Omega}}_{2}^2=0$, which implies that
\[
\min_{\mat{x}} \norm{\mat{P} \mat{x} - \mat{q}}_2^2 \geq \min_{\mat{x},\mat{b}_{\overline{\Omega}}} \norm{\mat{A}\mat{x}-\mat{b}}_2^2\,.
\]
Therefore,
\[
\min_{\mat{x}}\norm{\mat{P} \mat{x} - \mat{q}}_2^2 = \min_{\mat{x},\mat{b}_{\overline{\Omega}}} \norm{\mat{A}\mat{x}-\mat{b}}_2^2\,,
\]
and $\mat{x}^*$ also minimizes \eqref{eqn:input_regression}.
\end{proof}

\subsection{Proof of \Cref{lem:lifted_problem_is_convex_quadratic}}

\ConvexQuadratic*

\begin{proof}
Since $\widetilde{\mat{q}}\in \R^{I}$ is defined as $\widetilde{\mat{q}}_{\Omega} = \mat{b}_{\Omega}$ and $\widetilde{\mat{q}}_{\overline{\Omega}} = \boldsymbol{0}$,
we can write \eqref{eqn:lifted_regression} in the following equivalent manner:
\[
    (\mat{x}^*, \mat{b}^*_{\overline{\Omega}})
    =
    \argmin_{\mat{x} \in \R^R, \mat{b}_{\overline{\Omega}}\in\R^{I-|\Omega|}}
    \norm*{
    \begin{bmatrix}
        \mat{A} & -\mat{I}_{:,\overline{\Omega}}
    \end{bmatrix}
    \begin{bmatrix}
        \mat{x} \\ \mat{b}_{\overline{\Omega}}
    \end{bmatrix}
    -
    \widetilde{\mat{q}}
    }_{2}^2\,,
\]
where $\mat{I}$ is the $I\times I$ identity matrix.
\end{proof}

\subsection{Proof of \Cref{lemma:richardson-simulation}}

\RichardsonSimulation*

\begin{proof}
Assume that $\mat{A}^\top \mat{A}$ is full rank.
Solving the normal equation for $\mat{x}^{(k+1)}$,
\begin{align*}
\mat{x}^{(k+1)}
&= (\mat{A}^\top \mat{A})^{-1} \mat{A}^\top \widetilde{\mat{q}}^{(k)}\\
&= (\mat{A}^\top \mat{A})^{-1} \mat{A}^\top (\widetilde{\mat{q}} + (\mat{A} - \widetilde{\mat{P}})\, \mat{x}^{(k)})
\\ & =
\mat{x}^{(k)} - (\mat{A}^\top \mat{A})^{-1} \mat{A}^\top (\widetilde{\mat{P}} \mat{x}^{(k)} - \widetilde{\mat{q}})\,.
\end{align*}
Since $\widetilde{\mat{P}}-\mat{A}$ and $\bigl[\begin{matrix}
\widetilde{\mat{P}} & \widetilde{\mat{q}}
\end{matrix}\bigr]$ are orthogonal, 
\begin{align*}
    \mat{A}^\top (\widetilde{\mat{P}} \mat{x}^{(k)} - \widetilde{\mat{q}})
    &=
    \bigl(\widetilde{\mat{P}} - (\widetilde{\mat{P}} - \mat{A})\bigr)^\top \bigl[\begin{matrix}
        \widetilde{\mat{P}} & \widetilde{\mat{q}}
    \end{matrix}\bigr] \begin{bmatrix}
    \mat{x}^{(k)} \\ -1
    \end{bmatrix} \\
    &= 
    \widetilde{\mat{P}}^\top (\widetilde{\mat{P}} \mat{x}^{(k)} - \widetilde{\mat{q}})\,.
\end{align*}
Therefore,
\[
    \mat{x}^{(k+1)}
    =
    \mat{x}^{(k)} - (\mat{A}^\top \mat{A})^{-1} (\widetilde{\mat{P}}^\top \widetilde{\mat{P}} \mat{x}^{(k)} - \widetilde{\mat{P}}^\top \widetilde{\mat{q}})\,,
\]
which completes the proof.
\end{proof}

\subsection{Proof of \Cref{thm:approximate-richardson}}

\ApproximateRichardson*

\begin{proof}
We show that Algorithm~\ref{alg:approximate-lifting} gives the desired output.
Suppose \ApproximateSolve yields $\mat{x}^{(k+1)}$ for given inputs $\mat{A},\widetilde{\mat{P}},\widetilde{\mat{q}}$, and $\widetilde{\mat{q}}^{(k)}$ (i.e., $\widehat{\mat{x}} \gets \mat{x}^{(k)}$, $\mat{f}\gets \widetilde{\mat{q}}^{(k)}$, and $\overline{\mat{x}} \gets \mat{x}^{(k+1)}$), which satisfies
\[
    \norm{\mat{A}\mat{x}^{(k+1)}-\widetilde{\mat{q}}^{(k)}}_2^2\leq (1+\widehat{\epsilon})\,\min_{\mat{x}} \norm{\mat{A}\mat{x}-\widetilde{\mat{q}}^{(k)}}_2^2
    = (1+\widehat{\varepsilon})\,\norm{\pi_{\mat A ^\perp}\widetilde{\mat{q}}^{(k)}}_2^2\,.
\]
We can also decompose the LHS using $\widetilde{\mat{q}}^{(k)} = \pi_{\mat A} \widetilde{\mat{q}}^{(k)} + \pi_{\mat A ^\perp} \widetilde{\mat{q}}^{(k)}$ as follows:
\[
    \norm{\mat{A}\mat{x}^{(k+1)}-\widetilde{\mat{q}}^{(k)}}_2^2
    = 
    \norm{\mat{A}\mat{x}^{(k+1)} - \pi_{\mat{A}}\widetilde{\mat{q}}^{(k)}}_2^2 + \norm{\pi_{\mat{A}^{\perp}} \widetilde{\mat{q}}^{(k)}}_2^2\,.
\]
Combining the above, we get
\[
    \norm{\mat{A}\mat{x}^{(k+1)} - \pi_{\mat{A}}\widetilde{\mat{q}}^{(k)}}_2^2
    \leq
    \widehat{\epsilon}\, \norm{\pi_{\mat{A}^{\perp}}\widetilde{\mat{q}}^{(k)}}_2^2\,.
\]
Denoting $\mat{x}^* = (\widetilde{\mat{P}}^\top \widetilde{\mat{P}})^{-1} \widetilde{\mat{P}}^\top \widetilde{\mat{q}} = \argmin_{\mat x}\,\norm{\mat{Bx}-\widetilde{\mat{q}}}_2$ and using the triangle inequality,
\begin{align}
\nonumber
    \norm{\mat{A}\mat{x}^{(k+1)} - \mat{A}\mat{x}^*}_2
    &
    \leq \norm{\mat{A}\mat{x}^{(k+1)} - \pi_{\mat A}\widetilde{\mat{q}}^{(k)}}_2 + \norm{\pi_{\mat A}\widetilde{\mat{q}}^{(k)} - \mat{A} \mat{x}^*}_2
    \\ & \leq \label{eq:total-bound-on-a-norm}
    \sqrt{\widehat{\epsilon}}\, \norm{\pi_{\mat{A}^{\perp}}\widetilde{\mat{q}}^{(k)}}_2 + \norm{\pi_{\mat A}\widetilde{\mat{q}}^{(k)} - \mat{A} \mat{x}^*}_2\,.
\end{align}

We now bound each term in the RHS. As for the second term, since $\widetilde{\mat{q}}^{(k)} = \widetilde{\mat{q}} + (\mat{A} - \widetilde{\mat{P}})\, \mat{x}^{(k)}$ and $(\widetilde{\mat{P}}-\mat{A})^\top \begin{bmatrix}
\widetilde{\mat{P}} & \widetilde{\mat{q}} \end{bmatrix} = 0$, by \Cref{lemma:richardson-simulation},
\begin{align*}
    (\mat{A}^\top \mat{A})^{-1} \mat{A}^\top\widetilde{\mat{q}}^{(k)}
    &=
    \argmin_{\mat x}\,\norm{\mat{Ax}-\widetilde{\mat{q}}^{(k)}}_2 \\
    &=
    \mat{x}^{(k)} - (\mat{A}^\top \mat{A})^{-1} (\widetilde{\mat{P}}^\top\widetilde{\mat{P}}\mat{x}^{(k)} - \widetilde{\mat{P}}^\top\widetilde{\mat{q}})\,,
\end{align*}
which is exactly a Richardson iteration with preconditioner $\mat M\gets \mat{A}^\top \mat{A}$ in \Cref{lem:richardson_iteration} (satisfying $\widetilde{\mat{P}}^\top \widetilde{\mat{P}} \preceq \mat{A}^\top \mat{A}\preceq \beta\,\widetilde{\mat{P}}^\top \widetilde{\mat{P}}$). 
Thus, $\norm{(\mat{A}^\top \mat{A})^{-1} \mat{A}^\top\widetilde{\mat{q}}^{(k)} - \mat{x}^*}_{\mat{A}^\top \mat{A}} \leq (1-\beta^{-1})\,\norm{\mat{x}^{(k)} - \mat{x}^*}_{\mat{A}^\top \mat{A}}$, and
\begin{equation}
    \label{eq:second-term-bound-on-a-norm}
    \norm{\pi_{\mat A}\widetilde{\mat{q}}^{(k)} - \mat{A} \mat{x}^*}_2
    \leq
    \parens*{1-\frac 1 \beta}\, \norm{\mat{A} \mat{x}^{(k)} - \mat{A}\mat{x}^*}_2\,.
\end{equation}

Regarding the first term in \eqref{eq:total-bound-on-a-norm}, since $\mat{Ax}^{(k)}$ is in the column space of $\mat{A}$,
\[
\pi_{\mat A^\perp}\widetilde{\mat{q}}^{(k)} 
= 
\pi_{\mat A^\perp}\bigl(\widetilde{\mat{q}} +(\mat{A-B})\,\mat x^{(k)}\bigr)
=
\pi_{\mat A^\perp}(\widetilde{\mat{q}} - \widetilde{\mat{P}} \mat{x}^{(k)})\,.
\]
Therefore,
\begin{align*}
\norm{\pi_{\mat A^\perp}\widetilde{\mat{q}}^{(k)}}_2^2
& \leq 
\norm{\widetilde{\mat{q}} - \widetilde{\mat{P}} \mat{x}^{(k)}}_2^2
\\ & =
\norm{\widetilde{\mat{P}} \mat{x}^{*} - \widetilde{\mat{P}} \mat{x}^{(k)}}_2^2 + \min_{\mat{x}}\, \norm{\widetilde{\mat{P}} \mat{x}-\widetilde{\mat{q}}}_2^2
\\ & \leq 
\norm{\mat{A} \mat{x}^{*} - \mat{A} \mat{x}^{(k)}}_2^2 + \min_{\mat{x}}\, \norm{\widetilde{\mat{P}} \mat{x}-\widetilde{\mat{q}}}_2^2\,,
\end{align*}
where the last inequality follows from $\widetilde{\mat{P}}^\top \widetilde{\mat{P}} \preceq \mat{A}^\top \mat{A}$.
Thus,
\begin{equation}
\label{eq:first-term-bound-on-a-norm}
\norm{\pi_{\mat A^\perp}\widetilde{\mat{q}}^{(k)}}_2
\leq 
\norm{\mat{A} \mat{x}^{*} - \mat{A} \mat{x}^{(k)}}_2 + \min_{\mat{x}}\, \norm{\widetilde{\mat{P}} \mat{x}-\widetilde{\mat{q}}}_2\,.
\end{equation}

Combining \eqref{eq:total-bound-on-a-norm}, \eqref{eq:second-term-bound-on-a-norm}, and \eqref{eq:first-term-bound-on-a-norm}, we have
\[
    \norm{\mat{A}\mat{x}^{(k+1)} - \mat{A}\mat{x}^*}_2
    \leq
    \parens*{1-\frac{1}{\beta} + \sqrt{\widehat{\epsilon}}}\, \norm{\mat{A} \mat{x}^{*} - \mat{A} \mat{x}^{(k)}}_2 + \sqrt{\widehat{\epsilon}}\, \min_{\mat{x}}\, \norm{\widetilde{\mat{P}} \mat{x}-\widetilde{\mat{q}}}_2\,.
\]
Denoting $\alpha=1-\frac{1}{\beta} + \sqrt{\widehat{\epsilon}}\,$, by induction, we have
\begin{align}
\nonumber
\norm{\mat{A}\mat{x}^{(k)} - \mat{A}\mat{x}^*}_2 
& \leq 
\alpha^k\, \norm{\mat{A} \mat{x}^{*} - \mat{A} \mat{x}^{(0)}}_2 + (1+\alpha+\alpha^2 + \cdots + \alpha^{k-1}) \times \sqrt{\widehat{\epsilon}}\,\min_{\mat{x}}\, \norm{\widetilde{\mat{P}} \mat{x}-\widetilde{\mat{q}}}_2
\\ & =
\label{eq:col-space-bound-for-a}
\alpha^k\, \norm{\mat{A} \mat{x}^{*} - \mat{A} \mat{x}^{(0)}}_2 + \frac{1-\alpha^k}{1-\alpha}\times \sqrt{\widehat{\epsilon}}\, \min_{\mat{x}}\, \norm{\widetilde{\mat{P}} \mat{x}-\widetilde{\mat{q}}}_2\,.
\end{align}

We also have
\begin{align}
\nonumber
\norm{\widetilde{\mat{P}} \mat{x}^{(k)}-\widetilde{\mat{q}}}_2^2 
&= \norm{\widetilde{\mat{P}} \mat{x}^{(k)} - \pi_{\widetilde{\mat{P}}}\widetilde{\mat{q}}}_2^2 + \norm{\pi_{\widetilde{\mat{P}}^\perp}\widetilde{\mat{q}}}_2^2
\\ & = 
\label{eq:total-error}
\norm{\widetilde{\mat{P}} \mat{x}^{(k)}-\widetilde{\mat{P}} \mat{x}^*}_2^2 + \min_{\mat{x}}\, \norm{\widetilde{\mat{P}} \mat{x} - \widetilde{\mat{q}}}_2^2\,.
\end{align}
We then bound the first term by using $\widetilde{\mat{P}}^\top \widetilde{\mat{P}} \preceq \mat{A}^\top \mat{A} \preceq \beta\, \widetilde{\mat{P}}^\top \widetilde{\mat{P}}$ and \eqref{eq:col-space-bound-for-a} as follows:
\begin{align}
    \nonumber
    \norm{\widetilde{\mat{P}} \mat{x}^{(k)}-\widetilde{\mat{P}} \mat{x}^*}_2^2 
    &\leq
    \norm{\mat{A} \mat{x}^{(k)}-\mat{A} \mat{x}^*}_2^2 \\
    &\leq
    \nonumber
    2\alpha^{2k}\, \norm{\mat{A} \mat{x}^{*} - \mat{A} \mat{x}^{(0)}}_2^2 + 2\, \parens*{\frac{1-\alpha^k}{1-\alpha}}^2 \times \widehat{\epsilon}\, \min_{\mat{x}}\, \norm{\widetilde{\mat{P}} \mat{x}-\widetilde{\mat{q}}}_2^2 \\
    &\leq 
    \nonumber
    2 \beta\alpha^{2k}\, \norm{\widetilde{\mat{P}} \mat{x}^{*} - \widetilde{\mat{P}} \mat{x}^{(0)}}_2^2 + 2\, \parens*{\frac{1-\alpha^k}{1-\alpha}}^2 \times \widehat{\epsilon}\, \min_{\mat{x}}\, \norm{\widetilde{\mat{P}} \mat{x}-\widetilde{\mat{q}}}_2^2\,.
\end{align}
Putting this bound back into \eqref{eq:total-error},
\[
\norm{\widetilde{\mat{P}} \mat{x}^{(k)}-\widetilde{\mat{q}}}_2^2 \leq 2 \beta\alpha^{2k}\, \norm{\widetilde{\mat{P}} \mat{x}^{*} - \widetilde{\mat{P}} \mat{x}^{(0)}}_2^2 + \parens*{1 + 2\widehat{\epsilon}\, \parens*{\frac{1-\alpha^k}{1-\alpha}}^2}\, \min_{\mat{x}}\, \norm{\widetilde{\mat{P}} \mat{x}-\widetilde{\mat{q}}}_2^2\,.
\]

Setting
\[
k = \ceil*{\frac{\log(\sfrac{2\beta}{\epsilon})}{2\,(\nicefrac 1 \beta -\sqrt{\widehat{\epsilon}})}}\,,
\]
we have
\[
    \norm{\widetilde{\mat{P}} \mat{x}^{(k)}-\widetilde{\mat{q}}}_2^2
    \leq
    \epsilon\, \norm{\widetilde{\mat{P}} \mat{x}^{*} - \widetilde{\mat{P}} \mat{x}^{(0)}}_2^2 + \parens*{1+ \frac{2 \widehat{\epsilon}}{(\nicefrac 1  \beta - \sqrt{\widehat{\epsilon}})^2} }\, \min_{\mat{x}}\, \norm{\widetilde{\mat{P}} \mat{x}-\widetilde{\mat{q}}}_2^2\,,
\]
which completes the proof with $\mat{x}^{(0)}=\boldsymbol{0}$.
\end{proof}

\section{Additional Details for \Cref{sec:sampling-for-completion}}
\label{app:completion}

\citet{bharadwaj2024efficient} proposed a sampling-based ALS algorithm that relies a \emph{canonical form} of the TT decomposition with respect to the index $k$. Any TT decomposition can be converted to this form through a QR decomposition, which ensures that $(\mat A^{\neq k})^\top \mat A^{\neq k} = I_{R_{k-1} R_k}$.
It follows that the leverage scores of $\mat A^{\neq k}$ are simply the diagonal entries of 
\[
    \mat A^{\neq k} (\mat A^{\neq k})^\top = (\mat A_{< k}\mat A_{<k}^\top) \otimes (\mat A_{> k}^\top\mat A_{>k}).
\]

It follows from properties of the Kronecker product that
\[
    \ell_{i^{\neq k}}(\mat A^{\neq k})
    =
    \ell_{i_{< k}}(\mat A_{< k}) \cdot \ell_{i_{> k}}(\mat A^\top_{> k})\,,
\]
where $i^{\neq k} = \underline{i_1\cdots i_{k-1} i_{k+1}\cdots i_N}$, $i_{< k} = \underline{i_1\cdots i_{k-1}}$, and $i_{> k} = \underline{i_{k+1}\cdots i_N}$ (see \Cref{app:tt-decomposition-details} for definition).
Therefore, efficient leverage score sampling for $\mat A^{\neq k}$ reduces to that for $\mat A_{< k}$ and $\mat A_{> k}$.
To this end, \citet{bharadwaj2024efficient} adopt an approach similar to \citet{bharadwaj2023fast} for leverage score-based CP decomposition.
Each row of $\mat A_{< k}$ corresponds to a series of one slice for each third-order tensor $\mat A^{(n)}$ for $n<k$, which results in a series of conditional sampling steps using a data structure adapted from the one used for CP decomposition.

\section{Additional Details for \Cref{sec:experiments}}
\label{app:experiments}

All experiments are implemented with NumPy~\citep{harris2020array} and Tensorly~\citep{tensorly}
on an Apple M2 chip with 8 GB of RAM.

\subsection{CP Completion}

\subsubsection{Synthetic Tensors}
\label{app:synthetic_experiments}

We run the same set of experiments as in \Cref{sec:experiments}
on two different random low-rank tensors:
\begin{itemize}
    \item \textsc{random-cp} is a $100 \times 100 \times 100$ tensor formed by reconstructing a random rank-16 CP decomposition.
    \item \textsc{random-tucker} is a $100 \times 100 \times 100$ tensor formed by reconstructing a random rank-$(4,4,4)$ Tucker decomposition.
\end{itemize}

\begin{figure}[H]
    \centering
    \begin{subfigure}[b]{0.24\textwidth}
        \centering
        \includegraphics[width=\textwidth]{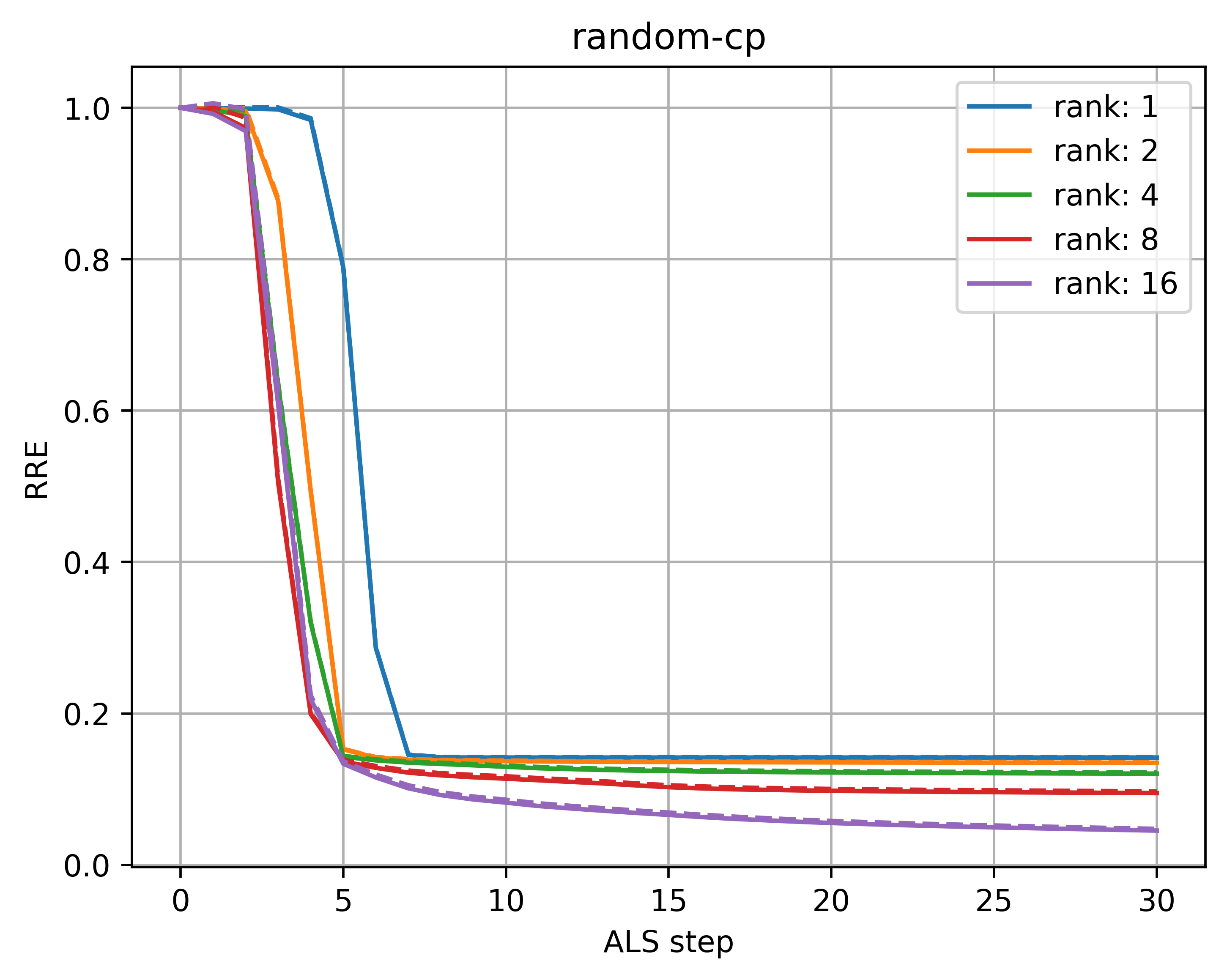}
    \end{subfigure}
    \hfill
    \begin{subfigure}[b]{0.24\textwidth}
        \centering
        \includegraphics[width=\textwidth]{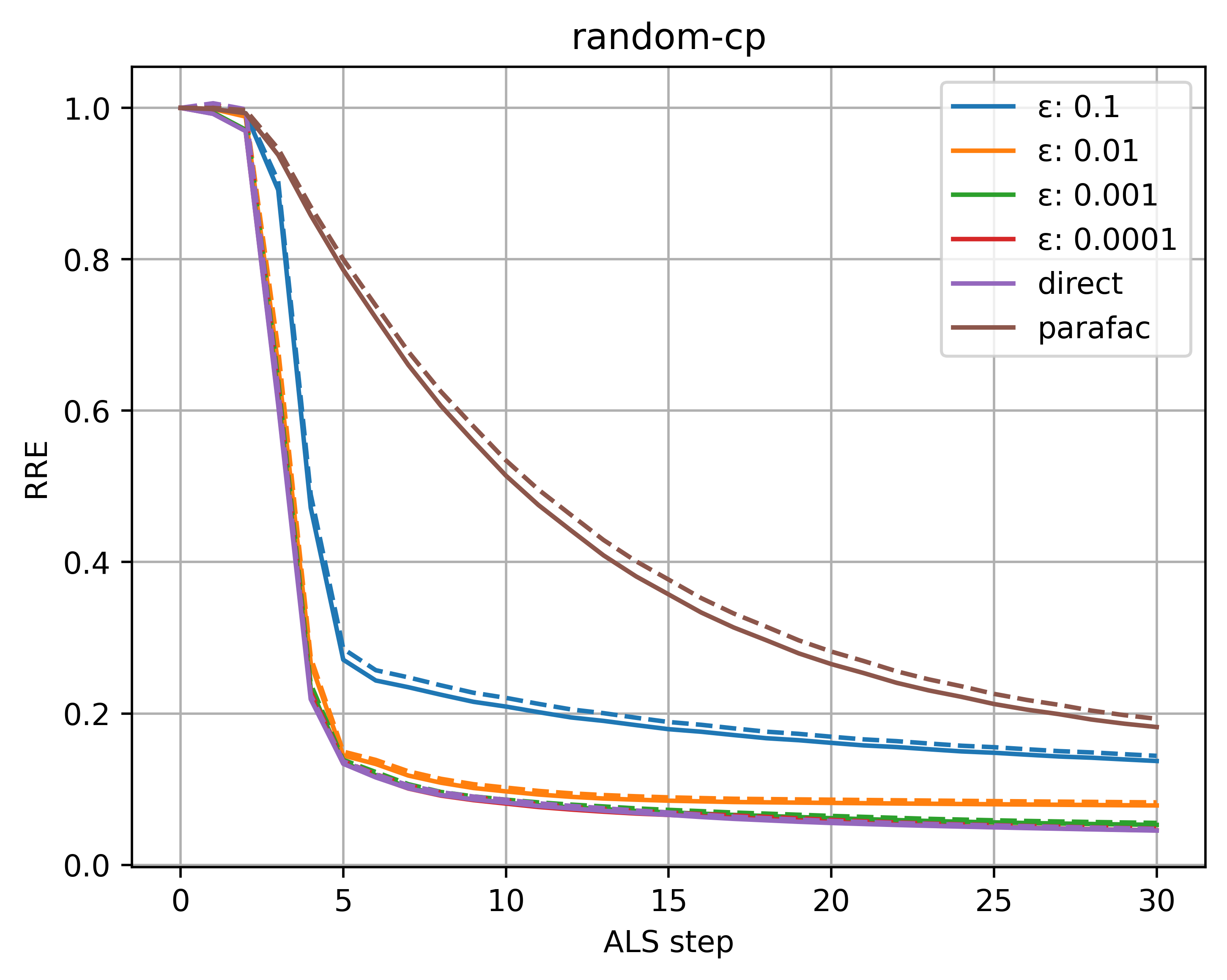}
    \end{subfigure}
    \hfill
    \begin{subfigure}[b]{0.24\textwidth}
        \centering
        \includegraphics[width=\textwidth]{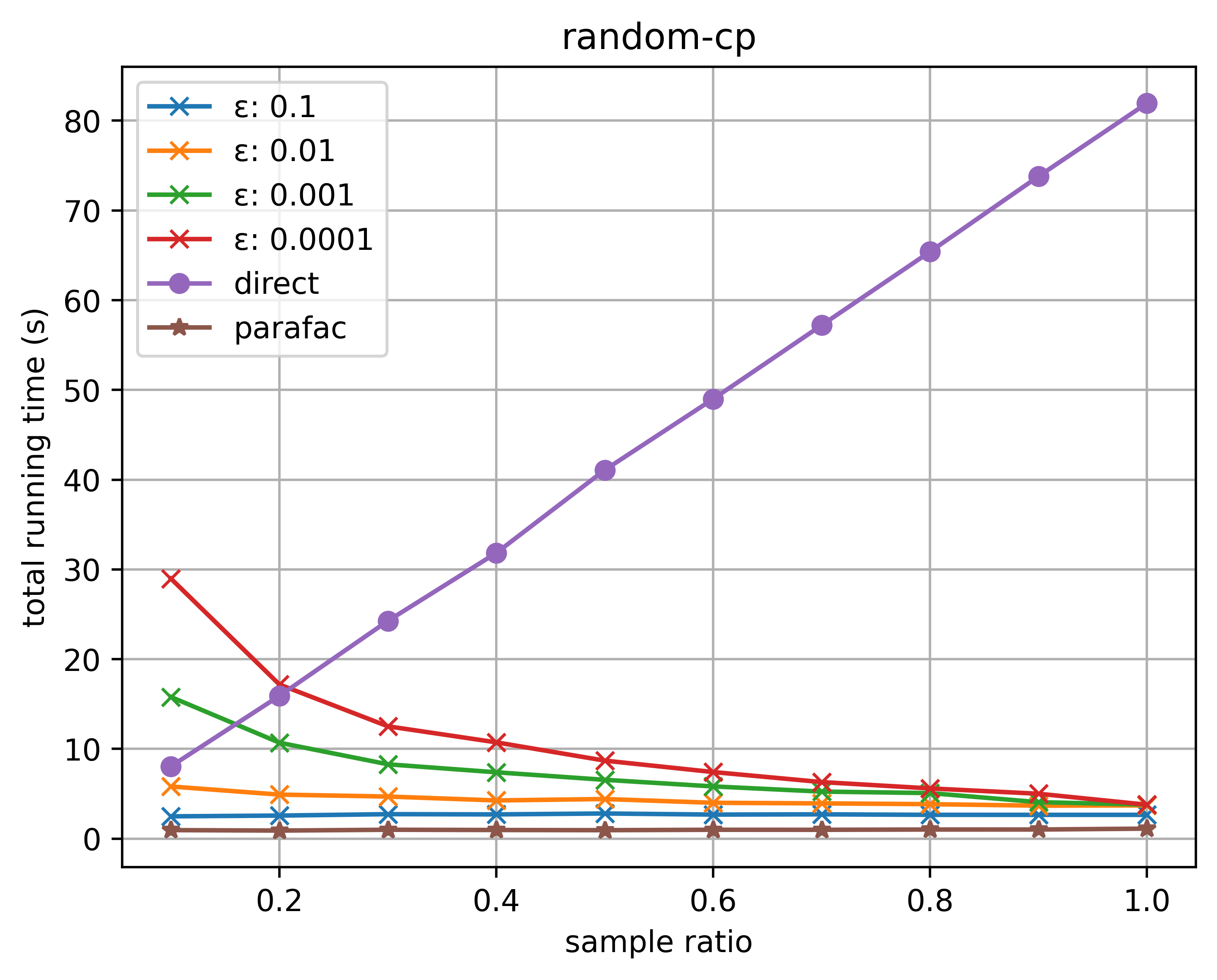}
    \end{subfigure}
    \hfill
    \begin{subfigure}[b]{0.24\textwidth}
        \centering
        \includegraphics[width=\textwidth]{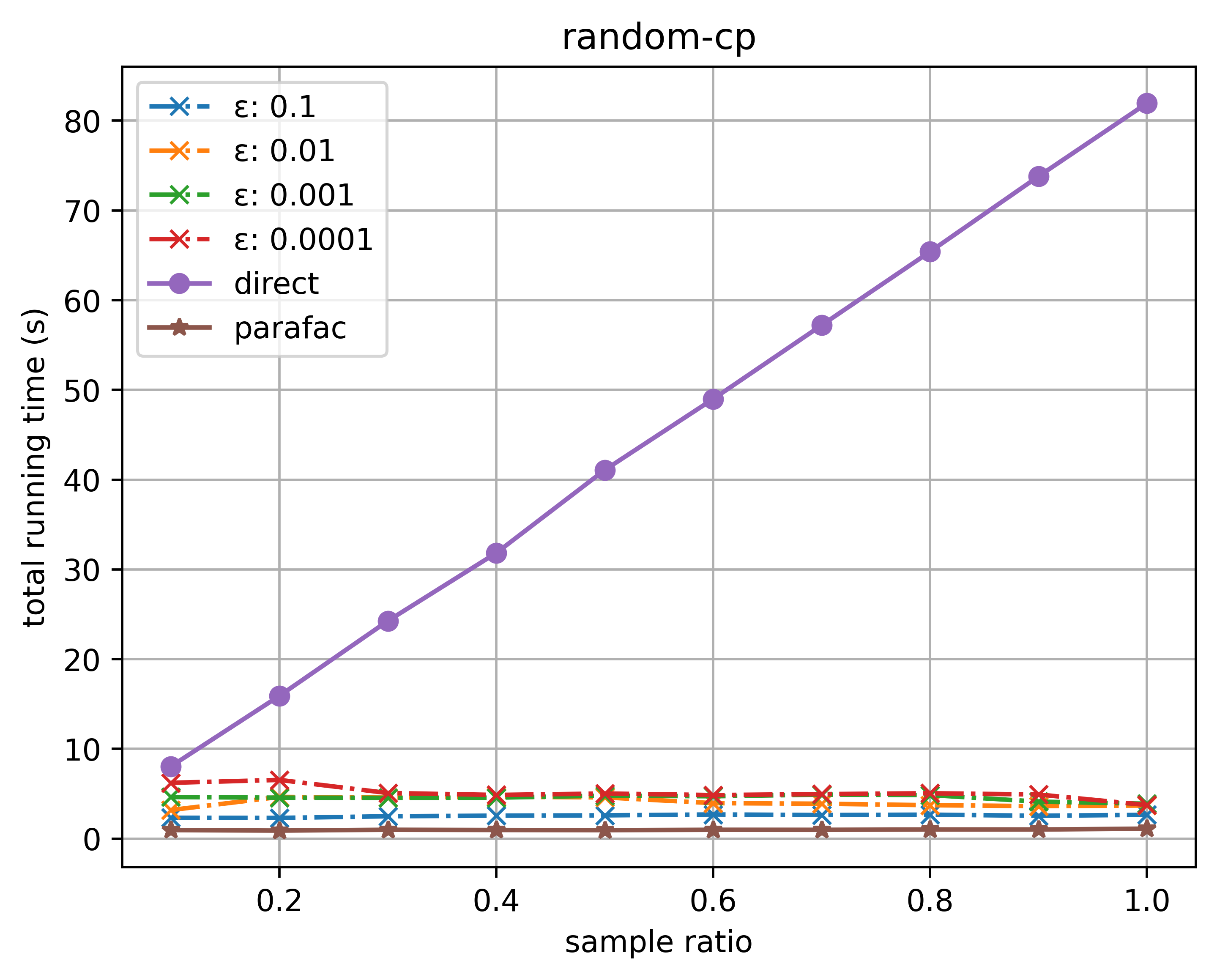}
    \end{subfigure}
    \begin{subfigure}[b]{0.245\textwidth}
        \centering
        \includegraphics[width=\textwidth]{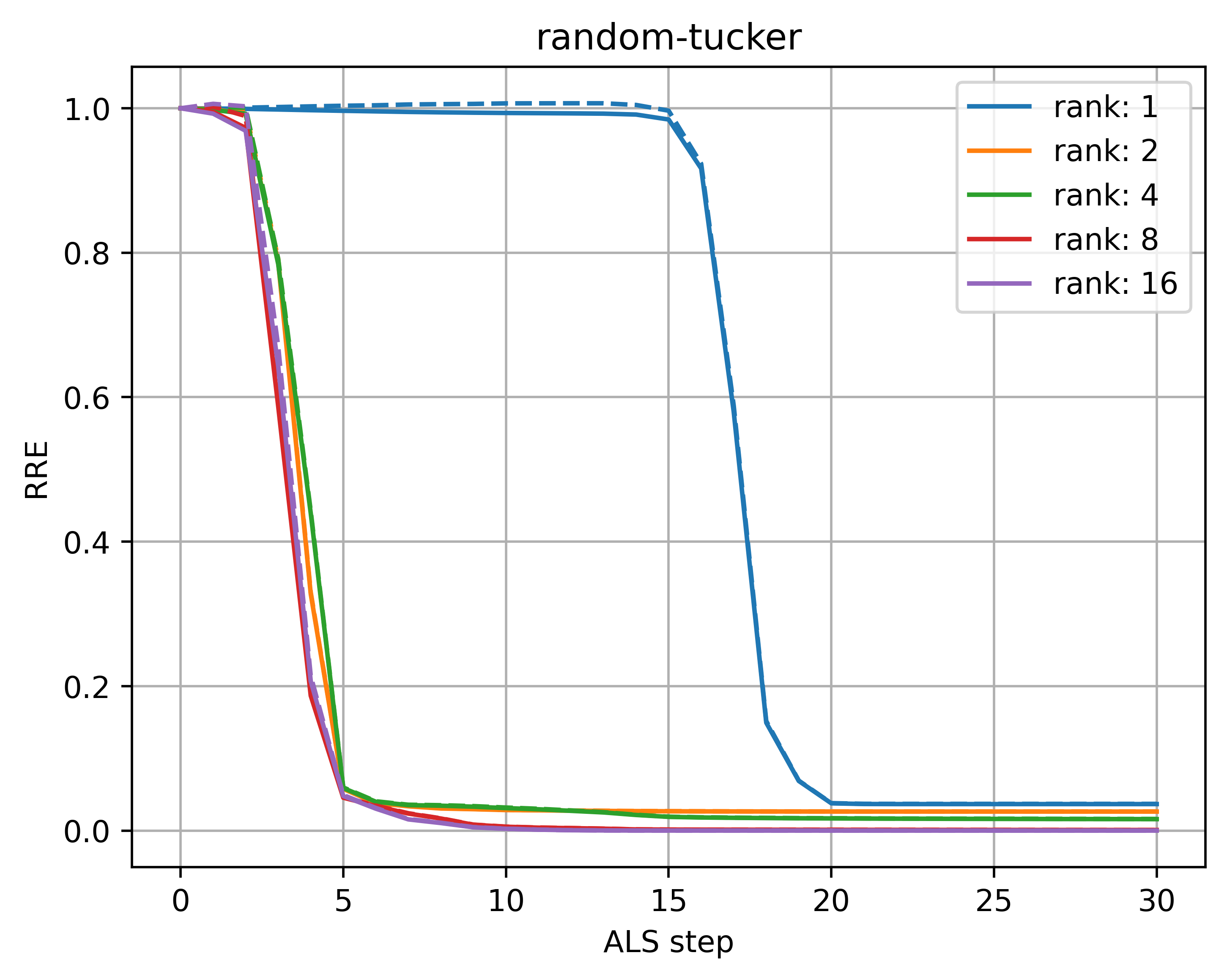}
    \end{subfigure}
    \hfill
    \begin{subfigure}[b]{0.245\textwidth}
        \centering
        \includegraphics[width=\textwidth]{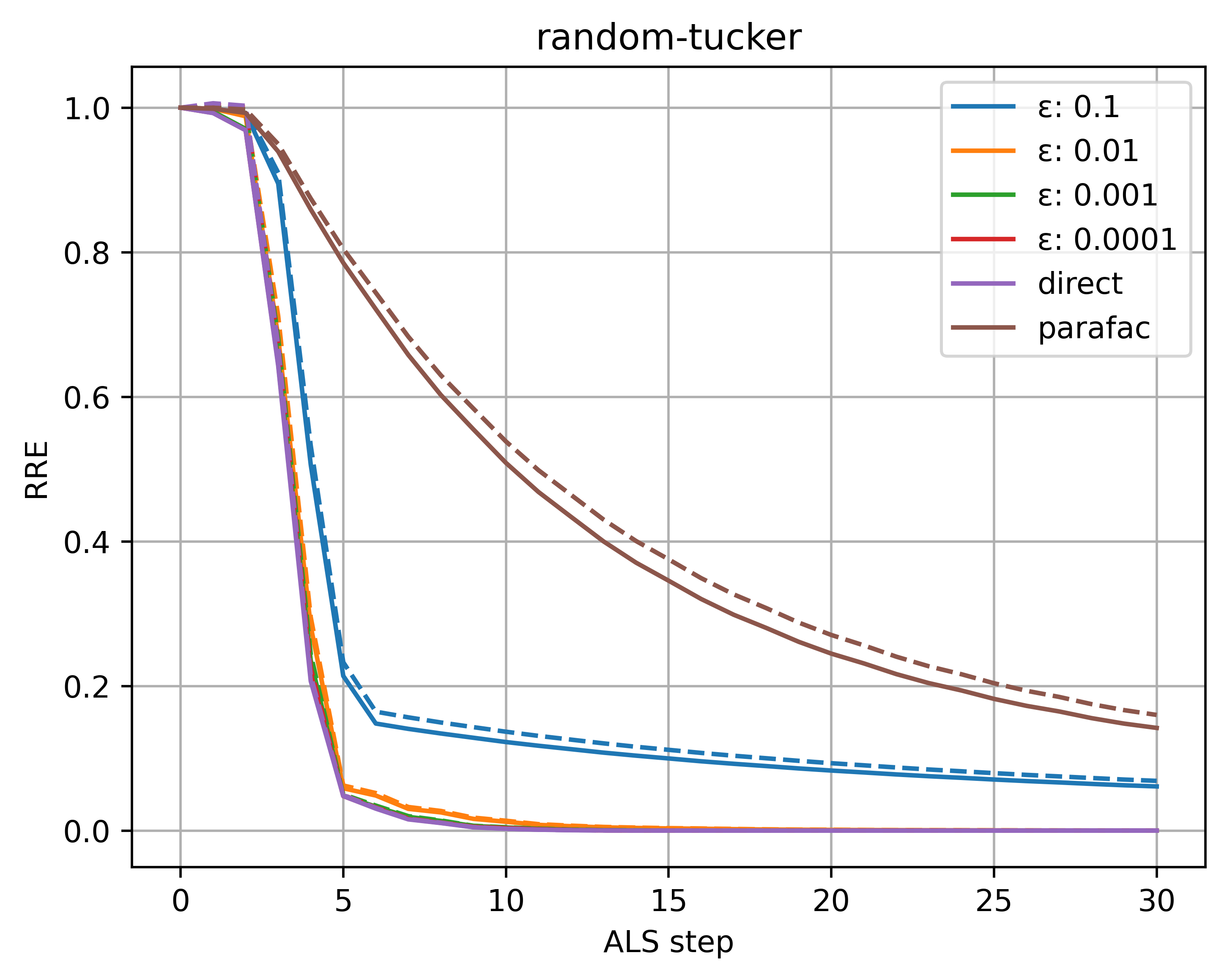}
    \end{subfigure}
    \hfill
    \begin{subfigure}[b]{0.245\textwidth}
        \centering
        \includegraphics[width=\textwidth]{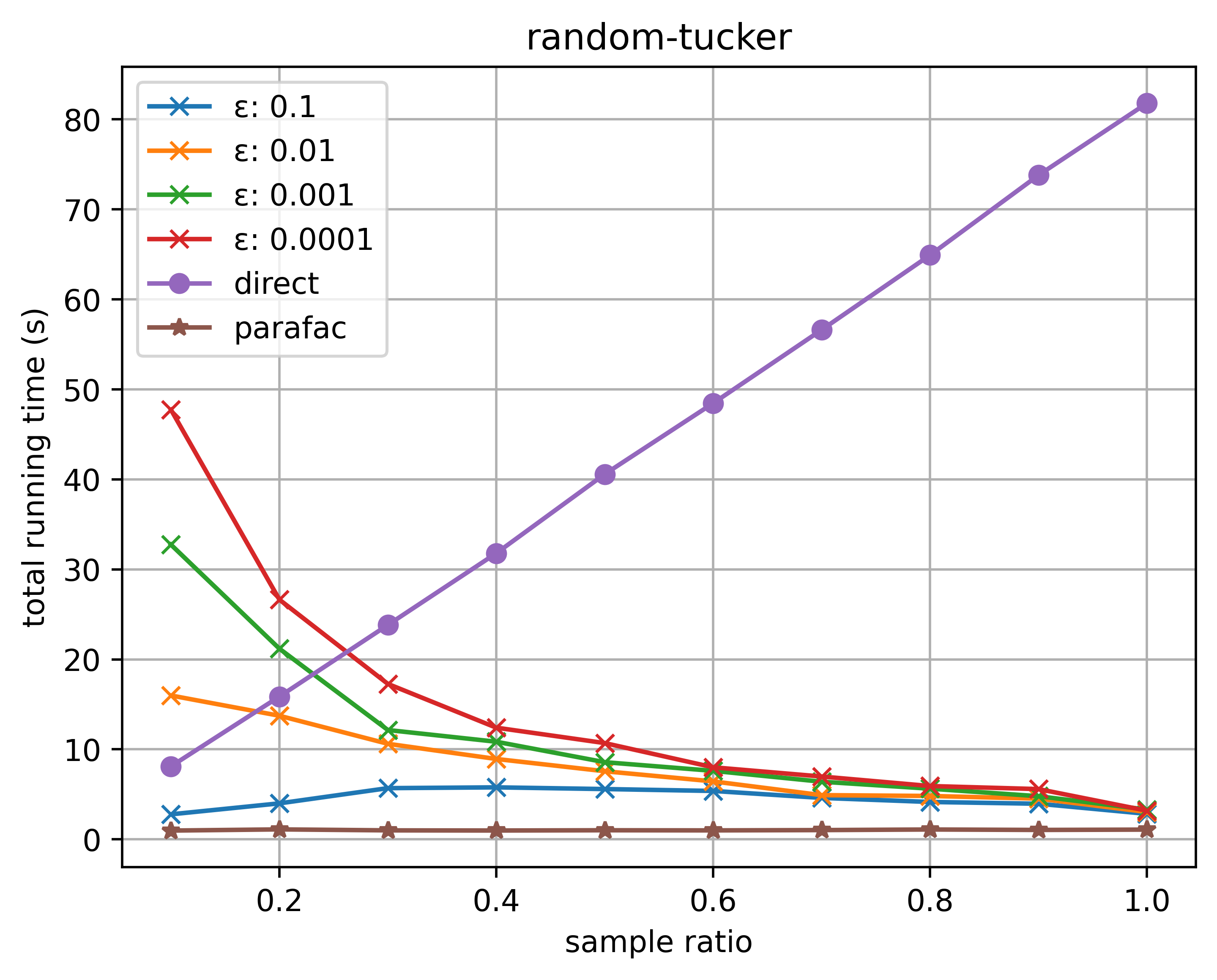}
    \end{subfigure}
    \hfill
    \begin{subfigure}[b]{0.245\textwidth}
        \centering
        \includegraphics[width=\textwidth]{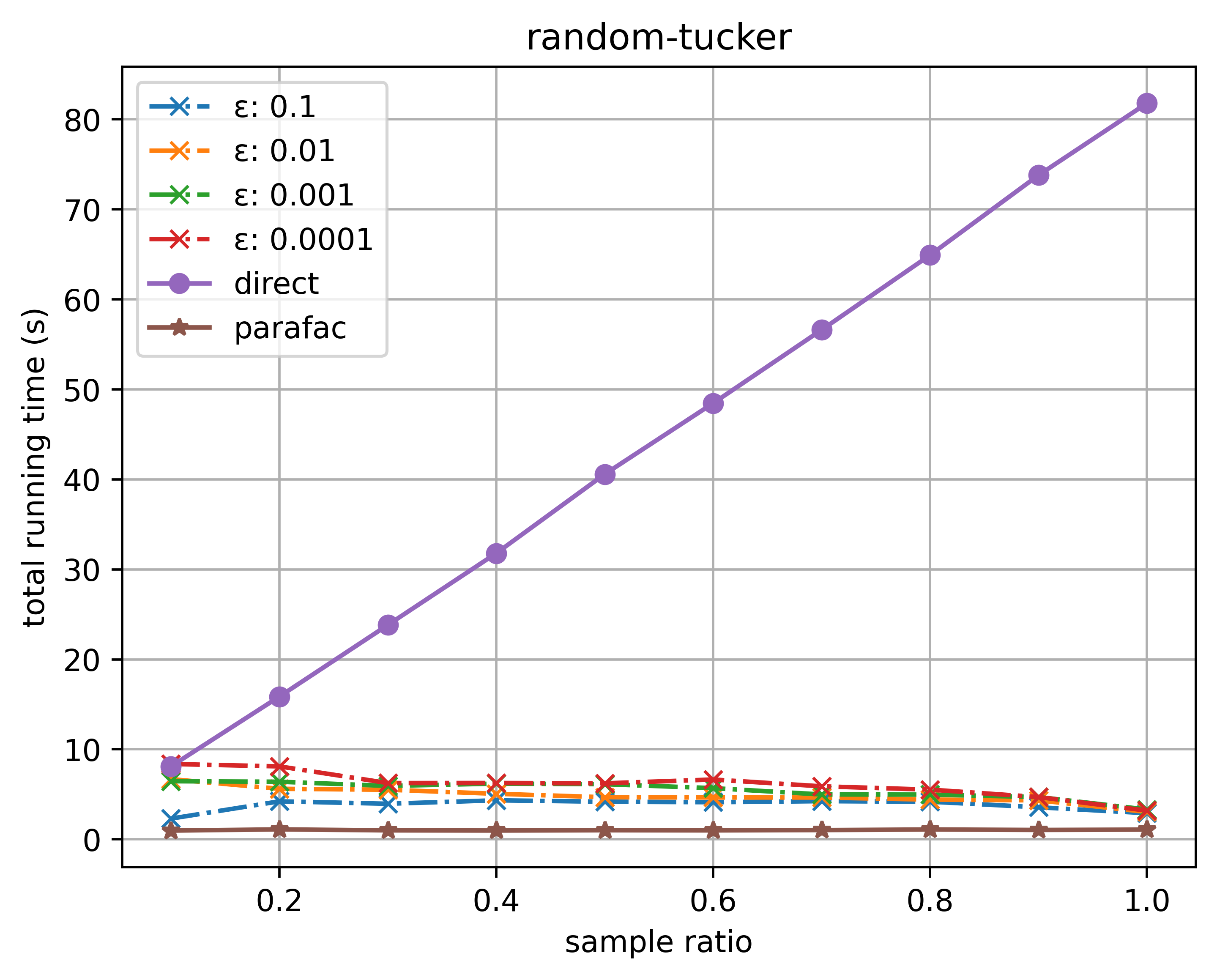}
    \end{subfigure}
    \caption{Algorithm comparison for a low-rank CP completion task on the \textsc{random-cp} and \textsc{random-tucker} tensor datasets.}
\end{figure}

\subsubsection{Accelerated Methods}
\label{app:acceleration}

We explain how to accelerate the Richardson iteration.
\begin{enumerate}
    \item For odd iterations (e.g., the first iteration), run mini-ALS normally.
    \item For even iterations, compute $\widehat{\mat{x}}^{(k+1)}$ using normal mini-ALS, but then set
    \[
        \mat{x}^{(k+1)} = \mat{x}^{(k)} + \frac{1}{1-\alpha}\parens*{\widehat{\mat{x}}^{(k+1)} - \mat{x}^{(k)}},
    \]
    where $\alpha = \frac{\norm{\widehat{\mat{x}}^{(k+1)} - \mat{x}^{(k)}}_2}{\norm{\mat{x}^{(k)} - \mat{x}^{(k-1)}}_2}$.
\end{enumerate}
Note that setting $\alpha=0$ is equivalent to running mini-ALS normally. We explain the intuition behind this accelration with the example in \Cref{fig:extrapol}, which illustrates the case where $x$ and $b_{\overline{\Omega}}$ both have only one variable. As mentioned previously, the lifted problem is a convex quadratic problem (see \cref{lem:lifted_problem_is_convex_quadratic}), and the iterations of mini-ALS alternate between optimizing $x$ and optimizing $b_{\overline{\Omega}}$. 
In \Cref{fig:extrapol}, the star ($\star$) denotes the optimal point, and the ellipses denote the level sets of the quadratic function. In the steps where we optimize $x$, we search for the point on the line that is parallel to the $x$ axis that crosses our current point and touches the smallest ellipse among the points on the line. Similarly, when we update $b_{\overline{\Omega}}$ we search on the line that is parallel to the $b_{\overline{\Omega}}$ axis.

Following the points in \Cref{fig:extrapol}, one can see that the points obtained through our iterations in the one-variable case form similar triangles, where the ratio of corresponding sides for every two consecutive triangles are the same. 
Therefore, if we denote the side length of the first triangle with $1$, then the side length for the next triangles are $\alpha,\alpha^2,\alpha^3,\ldots$.
Using the notation in \Cref{fig:extrapol}, $\alpha=\frac{\ell^{(2)}}{\ell^{(1)}}$. The sum over this geometric series is equal to $\frac{1}{1-\alpha}$, which inspires our \emph{adaptive step size} in even iterations.

Note that in the one-variable case, we can recover the optimal solution with only two iterations as the lines connecting $(x^{(k)},b_{\overline{\Omega}}^{(k-1)})$ go through the optimal solution. While this does not necessarily hold in higher-dimensional settings, our experiments demonstrate that acceleration improves the number of iterations and the running time of our approach significantly, especially when the number of observed entries are small (i.e., when $\beta$ is large).

\begin{figure*}
\centering
\begin{tikzpicture}[scale=0.5]
    \draw[->] (-3,-2.5) -- (7,-2.5) node[right] {$x$};
    \draw[->] (-3,-2.5) -- (-3,7.5) node[above] {$b_{\overline{\Omega}}$};

    \draw[-] (-3,6.7) -- (7.5,6.7);
    \draw[-] (5.0,7.0) -- (5.0,-2.5);
    \draw[-] (-3,2.0) -- (5.5,2.0);
    \draw[-] (1.5,2.2) -- (1.5,-2.5);
    \draw[-] (-3,0.6) -- (2.2,0.6);
    \draw[-] (0.45,0.9) -- (0.45,-2.5);
    
    \draw[-] (0.0,0.0) -- (6,8);

    \begin{scope}[rotate=30]
        \draw[thick, black] (0,0) ellipse [x radius=10.1, y radius=5.05];
        
        \draw[thick, black] (0,0) ellipse [x radius=5.54, y radius=2.77];
    
        \draw[thick, black] (0,0) ellipse [x radius=3, y radius=1.5];
        
        \draw[thick, black] (0,0) ellipse [x radius=1.64, y radius=0.82];
        
        \draw[thick, black] (0,0) ellipse [x radius=0.9, y radius=0.45];


    \end{scope}

    \fill[black] (7.0, 6.7) circle (5pt) node[anchor=south]{\tiny $(x^{(0)},b_{\overline{\Omega}}^{(0)})$};
    \fill[black] (5.0, 6.7) circle (5pt) node[anchor=south east]{\tiny $(x^{(1)},b_{\overline{\Omega}}^{(0)})$};
    \fill[black] (5.0, 2.0) circle (5pt) node[anchor=south west]{\tiny $(x^{(1)},b_{\overline{\Omega}}^{(1)})$};
    \fill[black] (1.5, 2.0) circle (5pt) node[anchor=south east]{\tiny $(x^{(2)},b_{\overline{\Omega}}^{(1)})$};
    \fill[black] (1.5, 0.6) circle (5pt) node[anchor=south west]{\tiny $(x^{(2)},b_{\overline{\Omega}}^{(2)})$};
    \fill[black] (0.45, 0.6) circle (5pt) node[anchor=south east]{\tiny $(x^{(3)},b_{\overline{\Omega}}^{(2)})$};

    \node[star, star points=5, star point ratio=0.3, fill=black, draw=black] at (0, 0) {};

    \draw[<->, dashed] (5.0, 2.3) -- (1.5, 2.3) node[midway, above] {$\tiny\ell^{(1)}$};
    \draw[<->, dashed] (1.5, 0.9) -- (0.45, 0.9) node[midway, above] {$\tiny\ell^{(2)}$};
\end{tikzpicture}
\caption{In the one-variable case, our approach proceeds in the following order: $(x^{(0)}, b^{(0)}_{\overline \Omega}) \to (x^{(1)}, b^{(0)}_{\overline \Omega}) \to (x^{(1)}, b^{(1)}_{\overline \Omega}) \to \cdots$.}
\label{fig:extrapol}
\end{figure*}

\end{document}